\newcommand{\qedllncs}{}
\newif\ifincludefigures
\newcolumntype{L}[1]{>{\RaggedRight\arraybackslash}p{#1}}
\newcolumntype{C}[1]{>{\centering\arraybackslash}p{#1}}
\newcolumntype{R}[1]{>{\RaggedLeft\arraybackslash}p{#1}}
\DeclareMathOperator*{\argmin}{argmin}
\newcommand\numberthis{\addtocounter{equation}{1}\tag{\theequation}} %
\newcommand{\alignstack}[2]{\stackrel{\mathmakebox[\widthof{\ensuremath{#2}}]{#1}}{#2}}
\renewcommand{\vec}[1]{\boldsymbol{#1}}
\newcounter{thm}
\newtheorem{theorem}[thm]{Theorem}
\newtheorem{lemma}[thm]{Lemma}
\newtheorem{corollary}[thm]{Corollary}
\begin{document}

\title{Algorithms for Right-Sizing Heterogeneous Data Centers\footnote{Work supported by the European Research Council, Grant Agreement No.\ 691672.}}

\author{Susanne Albers \\
	Technical University of Munich \\
	albers@in.tum.de \\
	\and
	Jens Quedenfeld\footnote{Contact author} \\
	Technical University of Munich \\
	jens.quedenfeld@in.tum.de \\
}

\maketitle

	\begin{abstract}
		
		Power consumption is a dominant and still growing cost factor in data centers. In time periods with low load, the energy consumption can be reduced by powering down unused servers. We resort to a model introduced by Lin, Wierman, Andrew and Thereska \cite{LinWierman2013,LinWierman2013extended} that considers data centers with identical machines, and generalize it to heterogeneous data centers with $d$ different server types. The operating cost of a server depends on its load and is modeled by an increasing, convex function for each server type. In contrast to earlier work, we consider the discrete setting, where the number of active servers must be integral. Thereby, we seek truly feasible solutions. For homogeneous data centers ($d=1$), both the offline and the online problem were solved optimally in  \cite{AlbersQuedenfeld2018,AlbersQuedenfeld2018extended}.
		
		In this paper, we study heterogeneous data centers with general time-dependent operating cost functions. We develop an online algorithm based on a work function approach which achieves a competitive ratio of $2d + 1 + \epsilon$ for any $\epsilon > 0$. For time-independent operating cost functions, the competitive ratio can be reduced to $2d + 1$.  There is a lower bound of $2d$ shown in \cite{AlbersQuedenfeld2021ciac}, so our algorithm is nearly optimal. For the offline version, we give a graph-based $(1+\epsilon)$-approximation algorithm. Additionally, our offline algorithm is able to handle time-variable data-center sizes.
		
	\end{abstract}
	
	\section{Introduction}
	\label{sec:intro}




Energy conservation in data centers is important for both economical and ecological reasons \cite{Bawden2016}. 
A huge amount of the energy consumed in data centers is wasted
because many servers run idle for long time periods, while still consuming half of their peak
power \cite{Delforge2014,Schmid2009powerShortLink}. 
The power consumption can be reduced by powering down servers that are currently not needed.
However, a power-up operation of a server causes increased energy consumption. Hence, holding an idle server in active mode for a short period of time is cheaper than powering it down and up again shortly after.
Furthermore, power-up and -down operations generate delay and wear-and-tear costs \cite{LinWierman2013extended}. 
Therefore, algorithms are needed that dynamically right-size a data center depending on incoming jobs so as to minimize the energy consumption. 

In this paper, we consider data centers with heterogeneous servers. This can be different architectures, for example, servers that use the GPU to perform massive parallel calculations. However, tasks that contain many branches are not suitable for GPUs and can be processed much faster on a common CPU \cite{Shan2006}.
Heterogeneity may also result from old and new servers. It is a common practice that a data center is extended by new servers while the old ones are kept in use. 

In practice, the energy consumption of a server is not constant but increases with load~\cite{AndrewLinWierman2010}. If a machine is idle, the CPU frequency is lowered in modern hardware to save energy~\cite{Mittal2014}. For high frequencies, the CPU voltage has to be raised, which results in a superlinear increase in power consumption \cite{WiermanAndrewTang2009}. 
Therefore, in our model, the energy consumption of each server type $j$ is modeled by an increasing convex function $f_j$ of the load~$z$. The operating cost of an idle server is given by $f_j(0)$. By setting the value of $f_j$ to infinity for large load values $z$, it is possible to model different server capacities. 
For example, there may be a slow server type with a maximum load of $1$ and a fast server type with a maximum load of $4$ that can process four times as many jobs as the slow server. 

Our model described below is a generalization of the model presented by Lin, Wierman, Andrew and Thereska \cite{LinWierman2013,LinWierman2013extended} for homogeneous data centers where all servers are identical.

\textbf{Problem formulation.}
We consider a data center with $d$ different server types and $m_j$ servers of type $j$. The servers have two states, an active one where they are able to process jobs and an inactive one without energy consumption. Powering up a server of type~$j$, i.e., switching it from the inactive to the active state, produces cost of $\beta_j$ (called \emph{switching} cost). Power-down operations do not incur any cost. 
We consider a time horizon consisting of the time slots $\{1, \dots, T\}$. For each time slot $t \in \{1, \dots, T\}$, a job volume of $\lambda_t$ arrives and has to be processed during the time slot. The jobs can be arbitrarily distributed to the servers. Let $z^\text{max}_j$ denote the maximum job volume that can be processed by one server of type~$j$ during a single time slot.
If a server of type $j$ works with load $z \in [0,z^\text{max}_j]$, it causes cost in the amount of $f_{t,j}(z)$ where $f_{t,j}(z)$ is a convex increasing non-negative function. 
Since $f_{t,j}$ is convex, the cost is minimized if each active server of type $j$ runs with the same load (see Lemma~\ref{lemma:online:func:equally} for a formal proof). Therefore, the \emph{operating} cost for server type~$j$ during time slot $t$ is given by
\begin{equation*}
g_{t,j}(x, z) \coloneqq \begin{cases}
x f_{t,j} \left( \frac{\lambda_t z}{x} \right) & \text{if $x > 0$} \\
\infty & \text{if $x = 0$ and $\lambda_t z > 0$} \\
0 & \text{if $x = 0$ and $\lambda_t z = 0$}
\end{cases}
\end{equation*}
where $x$ is the number of active servers of type $j$ and $z$ is the fraction of the job volume $\lambda_t$ that is assigned to server type $j$. The total operating cost during time slot $t$ is denoted by
\begin{equation} \label{eqn:problem:gt}
g_t(x_1, \dots, x_d) \coloneqq \min_{(z_1, \dots, z_d) \in \mathcal{Z}}  \sum_{j=1}^{d} g_{t,j}(x_j, z_j) 
\end{equation}
where $\mathcal{Z} \coloneqq \{(z_1, \dots, z_d) \in [0,1]^d \mid \sum_{j=1}^{d} z_j = 1 \}$ is the set of all possible job assignments. 


A schedule $X$ is a sequence $\vec{x}_1, \dots, \vec{x}_T$ with $\vec{x}_t = (x_{t,1}, \dots, x_{t,d})$ where each $x_{t,j} \in \{0, 1, \dots, m_j\}$ indicates the number of active servers of type $j$ during time slot $t$. We assume that at the beginning and end of the considered time horizon, all servers are in the inactive state, i.e., $\vec{x}_0 = \vec{x}_{T+1} = (0, \dots, 0)$. 
A schedule is called \emph{feasible}, if there are not more active servers than available and if the maximum load of the active servers is not exceeded, i.e., $x_{t,j} \in \{0, 1, \dots, m_j\}$ and $\sum_{j=1}^{d} x_{t,j} z^\text{max}_j \geq \lambda_t$ holds for all $t \in \{1, \dots, T\}$ and $j \in \{1, \dots, d\}$.
The total cost of a schedule is defined by
\begin{equation} \label{eqn:problem:costx}
C(X) \coloneqq \sum_{t=1}^T \left( g_t(x_{t,1}, \dots, x_{t,d}) + \sum_{j=1}^{d} \beta_j (x_{t,j} - x_{t-1, j})^+ \right)
\end{equation}
where $(x)^+ \coloneqq \max(x, 0)$. 
Note that the switching cost is only paid for powering up. However, this is not a restriction, since all servers are inactive at the beginning and end of the workload. Thus, the cost for powering down can be folded into the cost for powering~up. 

A problem instance is defined by the tuple $\mathcal{I} = (T, d, \vec{m}, \vec{\beta}, F, \Lambda)$ with $\vec{m} = (m_1, \dots, m_d)$, $\vec{\beta} = (\beta_1, \dots, \beta_d)$, $F = (f_{1,1}, \dots, f_{T,d})$ and $\Lambda = (\lambda_1, \dots, \lambda_T)$. The task is to find a schedule with minimal cost. 

In the online version of this problem, the job volumes $\lambda_t$ and the operating cost functions $f_{t,j}$ arrive one-by-one, so $\vec{x}_t$ has to be determined without the knowledge of future jobs $\lambda_{t'}$ and functions $f_{t',j}$ with $t' > t$. 



\textbf{Our contribution.}
We investigate both the online and the offline version of this problem.
In contrast to previous results, we consider the discrete setting where the number of active servers $x_{t,j}$ has to be integral. Thereby, we obtain truly feasible solutions. 

For the online problem, we first examine a simplified version where the operating cost functions $f_{t,j}$ are time-independent (i.e., $f_{t,j} = f_j$ for all $t \in \{1,\dots,T\}$) and present a $(2d + 1)$-competitive deterministic online algorithm (Section~\ref{sec:online}). The basic idea is to calculate an optimal schedule for the problem instance that ends at the current time slot. For each server type, the algorithm ensures that the number of active servers is at least as large as the number of active servers in the optimal schedule.  
A server is powered down if its accumulated idle operating cost $f_j(0)$ exceeds its switching cost $\beta_j$. 
Since the operating cost is time-independent, the runtime of a server can be determined in advance.

In Section~\ref{sec:online:time}, we demonstrate how our algorithm can be modified to handle time-dependent operating cost functions $f_{t,j}$. We achieve a competitive ratio of $2d + 1 + \epsilon$ for any $\epsilon > 0$. The basic idea of the algorithm is unchanged. 
However, in contrast to the previous section, the runtime of a server now depends 
on the time slot when it is powered up, since the idle operating cost $f_{t,j}(0)$ varies over time. Thus, the runtime of a server is not known in advance any more. The analysis results in a competitive ratio of $2d + 1 + c(\mathcal{I})$ where $c(\mathcal{I})$ is a
constant that depends on the switching and operating costs of the problem instance $\mathcal{I}$. By allowing state changes at any time during a time slot and repairing the resulting schedule afterward (such that there are no intermediate state changes any more), we are able to make the constant $c(\mathcal{I})$ arbitrarily small.

In Section~\ref{sec:approx}, we consider the offline version of the problem and present a $(1 + \epsilon)$-approximation algorithm that runs in polynomial time if $d$ is a constant.  
First, we present an optimal algorithm that uses a natural graph representation. The graph is structured in a $(d+1)$-dimensional grid and contains a vertex $v_{t,\vec{x}}$ for each time slot $t \in \{1, \dots, T\}$ and server configuration $\vec{x}$. The vertices are connected with weighted edges that represent the switching and operating costs. By calculating a shortest path, we obtain an optimal schedule. For our approximation algorithm, we only use a small polynomial-sized subset of all vertices depending on the desired approximation factor. Our $(1+\epsilon)$-approximation algorithm runs in $\mathcal{O}\big(T \cdot \epsilon^{-d} \cdot \prod_{j=1}^{d} \log m_j\big)$ time. 
At the end of Section~\ref{sec:approx}, we show that our algorithm still works if the total number of servers varies over time, i.e., $m_j$ is time-dependent.

\textbf{Related work.}
In recent years, energy conservation in data centers has received much attention, see for example \cite{Antoniadis2020,Zhang2018,Albers2017} and references therein. 

Regarding the online version, Lin et al. \cite{LinWierman2013,LinWierman2013extended} analyzed the problem described above for homogeneous data centers where all servers are identical, i.e., $d=1$. The minimum function in equation~\eqref{eqn:problem:gt} disappears, so the operating cost at time slot $t$ is given by $g_t(x) = x f (\lambda_t / x)$, which makes the problem much easier. They presented a 3-competitive online algorithm for the fractional setting where the number of active servers does not need to be integral. This result was improved by Bansal et al. \cite{Bansal2015} who developed a 2-competitive algorithm. 
In our previous paper \cite{AlbersQuedenfeld2018,AlbersQuedenfeld2018extended}, we analyzed the discrete setting for homogeneous data centers. We developed a 3-competitive deterministic and 2-competitive randomized online algorithm and showed that these algorithms are optimal (i.e., there is no algorithm that achieves a better competitive ratio). Furthermore, we proved that 2 is a lower bound for the fractional setting (this result was independently found in \cite{Antoniadis2017}).

The data-center right-sizing problem on \emph{heterogeneous} data centers is related to convex function chasing, also known as smoothed online convex optimization \cite{ChenGoelWierman2018}. 
At each time slot, a convex function $g_t$ arrives and the algorithm has to choose a point $\vec{x}_t \in \mathbb{R}^d$. The cost at time slot $t$ is given by $g_t(\vec{x}_t)$ plus the movement cost $\|\vec{x}_t - \vec{x}_{t-1}\|$ where $\|\cdot \|$ is any metric. Data-center right-sizing in the fractional setting (i.e., the number of active servers can be any real number) is a special case of convex function chasing where $\|\cdot \|$ is a scaled Manhattan metric and the convex functions have the form given in equation~\eqref{eqn:problem:gt}.

Goel and Wierman \cite{GoelWierman2018} developed a $(3 + \mathcal{O}(1/\mu))$-competitive algorithm called Online Balanced Descent (OBD) where the arriving functions are $\mu$-strongly convex. Chen et al. \cite{ChenGoelWierman2018} showed that OBD achieves a competitive ratio of $3 + \mathcal{O}(1/\alpha)$ if the arriving functions are locally $\alpha$-polyhedral.
However, if the operating cost functions $f_{t,j}$ are load-independent, i.e., $f_{t,j}(z) = \text{const}$, then $g_t$ is neither strongly convex nor locally polyhedral, so $\mu = 0$ and $\alpha = 0$. Hence, their results cannot be used for our problem.

Sellke \cite{Sellke2020} developed a $(d+1)$-competitive online algorithm for convex function chasing without any restrictions. A similar result was found by Argue et al. \cite{Argue2020}. The general convex function chasing problem in the \emph{discrete} setting where $g_t$ can be any convex function has (at least) an exponential competitive ratio 
as the following example shows. For all $j \in \{1, \dots, d\}$, let $m_j = 1$ and $\beta_j = 1$, so the feasible server configurations are $\{0,1\}^d$. The arriving functions $g_t$ are infinite for the current position $\vec{x}_{t-1}$ of the online algorithm and zero for all other positions $\{0,1\}^d \setminus \{\vec{x}_{t-1}\}$. The online algorithm always has to change its position to avoid the infinite operating cost (otherwise the online algorithm is not competitive at all). Therefore, after $T \coloneqq 2^d - 1$ time slots, the switching cost of the online algorithm is at least $2^d - 1$. The offline schedule can go directly to a position in $\{0,1\}^d \setminus \bigcup_{t=1}^{T} \{\vec{x}_{t-1}\}$ where no operating cost occurs paying a switching cost of at most $d$. Thus, the competitive ratio for general convex function chasing is at least $\Omega(2^d / d)$.
To gain a competitive ratio with more practical relevance, we focus on operating cost functions described by equation~\eqref{eqn:problem:gt}.

It is an open problem how fractional solutions can be rounded to achieve an integral schedule without significantly increasing the total cost. If the number of active servers is simply rounded up, the total switching cost can get arbitrarily large, for example if the fractional schedule switches permanently between $1$ and $1 + \epsilon$. For homogeneous data centers, a randomized rounding scheme achieving a competitive ratio of 2 was presented in~\cite{AlbersQuedenfeld2018extended}. However, using this method for heterogeneous data centers independently for each server type can lead to an infeasible schedule (e.g., if $\lambda_t = 1$ and $\vec{x}_t = (1/d, \dots, 1/d)$ is rounded down to $(0, \dots, 0)$). 
Thus, Sellke's result does not help us in our analysis of the discrete setting.
Further publications examining the convex body or function chasing problem are \cite{Antoniadis2016,BansalBoehm2018,BubeckSellke2020nested}.

In \cite{AlbersQuedenfeld2021ciac}, we analyzed the discrete setting for heterogeneous data centers where the operating cost does neither depend on the load nor on time , i.e., $f_{t,j}(z) = l_j = \text{const}$. In this case, the total operating cost at time $t$ is given by $g_t(x_1, \dots, x_d) = \sum_{j=1}^{d} l_j x_j$ which is much simpler than the general expression given in equation~\eqref{eqn:problem:gt}. In addition, we assumed that there are no inefficient servers, i.e., a server with a higher switching cost always has a lower operating cost.
We presented a $2d$-competitive algorithm for this special problem. Moreover, we gave a lower bound of $2d$, which also holds for the general problem that we consider in this paper. Thus, our online algorithms presented in Sections~\ref{sec:online} and~\ref{sec:online:time} of this paper are nearly optimal. If the operating cost functions are constant (i.e., $f_{t,j}(z) = \text{const}$), we achieve the optimal competitive ratio of $2d$.

The offline version of the discrete data-center right-sizing problem for \emph{homogeneous} data centers can be solved in polynomial time~\cite{AlbersQuedenfeld2018}. It is an open question whether the 
problem on \textit{heterogeneous} data centers is NP-hard or not. For the special case of load-independent operating costs (i.e., $f_{t,j}(z) = l_j = \text{const}$), a polynomial-time algorithm based on a minimum-cost flow computation was shown in \cite{Albers2017,Albers2019}. However, the flow representation of the problem cannot be generalized for load-dependent operating costs.


Right-sizing of heterogeneous data centers is related to geographical load balancing examined in \cite{LiuLinWierman2011} and \cite{LinLiuWierman2012}. For more works handling related problems, refer to~\cite{Wang2014,Kim2014,Chen2015,BadieiLiWierman2015,GoelChenWierman2017,Zhang2018,LinGoelWierman2020}.

	\section*{Notation}
	Let $[k] \coloneqq \{1, 2, \dots k\}$, $[k]_0 \coloneqq \{0, 1, \dots k\}$ and $[k:l] \coloneqq \{k, k+1, \dots, l\}$ where $k,l \in \mathbb{N}$. A tabular overview of the variables introduced in the following sections is shown in Appendix~\ref{sec:appendix:variables}.

	\section{Online Algorithm for time-\emph{independent} operating cost functions}
	\label{sec:online}

In this section we present a $(2d+1)$-competitive deterministic online algorithm for time-independent operating cost functions, i.e., $f_{t,j} = f_j$ for all time slots $t \in [T]$. Roughly, our algorithm works as follows. For each time slot, it calculates an optimal schedule for the job volumes received so far. Servers are powered up such that the number of active servers of each type is at least as large as the number of active servers of the same type in the optimal schedule. A server runs for exactly $\lceil \beta_j / f_j(0) \rceil$ time slots, then it is powered down, regardless of whether or not it was used.
This is similar to the well-known ski rental problem where it is optimal to buy the skis once the total renting cost would exceed the buy price. 

Formally, given the 
problem instance $\mathcal{I} = (T, d, \vec{m}, \vec{\beta}, F, \Lambda)$, the shortened problem instance $\mathcal{I}^t$ is defined by $\mathcal{I}^t \coloneqq (t, d, \vec{m}, \vec{\beta}, F, \Lambda^t)$ with $\Lambda^t = (\lambda_1, \dots, \lambda_t)$. Let $\hat{X}^t$ denote an optimal schedule for this problem instance and let $X^\mathcal{A}$ be the schedule calculated by our algorithm~$\mathcal{A}$. 

Our algorithm works as follows: After calculating $\hat{X}^t$, the algorithm ensures that the number of active servers of each type $j \in [d]$ is greater than or equal to the number of active servers of type $j$ in the last time slot of $\hat{X}^t$. That is, in each time slot $(\hat{x}^t_{t,j} - x^\mathcal{A}_{t-1,j})^+$, servers of type $j$ are powered up such that the inequality $x^\mathcal{A}_{t,j} \geq \hat{x}^t_{t,j}$ is satisfied. A server of type~$j$ is powered down after $\bar{t}_j =  \left\lceil \frac{\beta_j}{f_j(0)} \right\rceil$ time slots. Note that $f_j(0)$ is the operating cost of a server being idle. It does not matter if the server was used or not.

{
\setlength{\textfloatsep}{12pt}
\begin{algorithm}[b] \label{alg:online:func}
	\caption{Algorithm $\mathcal{A}$}
	\begin{algorithmic}[1]
		\State $w_{t,j} \coloneqq 0$ for all $t \in \mathbb{Z}$ and $j \in [d]$
		\For{$t \coloneqq 1$ \textbf{to} $T$}
		\State Calculate $\hat{X}^t$
		\For{$j \coloneqq 1$ \textbf{to} $d$}
		\State $x^\mathcal{A}_{t,j} \coloneqq x^\mathcal{A}_{t,j} - w_{t - \bar{t}_j,j}$
		\If {$x^\mathcal{A}_{t,j} \leq \hat{x}^t_{t,j}$}
		\State $w_{t,j} \coloneqq  \hat{x}^t_{t,j} - x^\mathcal{A}_{t,j}$
		\State $x^\mathcal{A}_{t,j} \coloneqq \hat{x}^t_{t,j}$
		\EndIf
		\EndFor
		\EndFor
	\end{algorithmic}
\end{algorithm}
}

The pseudocode below clarifies how algorithm $\mathcal{A}$ works. 
The schedule $\hat{X}^t$ can be calculated with the optimal offline
 algorithm presented in Section~\ref{sec:approx:opt}.
The variables $w_{t,j}$ store how many servers of type $j$ were powered up at time slot $t$. A visualization of our algorithm is shown in Figure~\ref{fig:online:algo}.


\subsection{Feasibility}
\label{sec:online:func:feasibe}

Before we determine the competitive ratio of our algorithm, we have to show that the calculated schedule is feasible.

\ifincludefigures
\begin{figure}[t]

	\centering
	\begin{tikzpicture}
		\pgfmathsetmacro{\xBegin}{0}
		\pgfmathsetmacro{\xStep}{0.7} 
		\pgfmathsetmacro{\xLength}{15}
		\pgfmathsetmacro{\tbar}{5}
		\pgfmathsetmacro{\tMax}{\xLength - 1}
		
		\pgfmathsetmacro{\yBeginAlg}{0}
		
		\pgfmathsetmacro{\yLengthAlg}{3} 
		\pgfmathsetmacro{\yLengthOpt}{3}
		\pgfmathsetmacro{\yStep}{0.48}	
		\pgfmathsetmacro{\yBeginArrows}{\yBeginAlg + 3.1*\yStep}
		\pgfmathsetmacro{\yBeginOpt}{\yBeginAlg + \yStep * 3 + 1.3} 
		\pgfmathsetmacro{\yArrowStep}{0.13}
		\pgfmathsetmacro{\yArrowOffset}{2}
		\pgfmathsetmacro{\yLeftRight}{0.04}
		
		\pgfmathsetmacro{\axisExtend}{0.7}
		\pgfmathsetmacro{\tickHalfLength}{0.07}
		
		\pgfmathsetmacro{\numArrows}{4}
		\pgfmathsetmacro{\numArrowsMinusOne}{\numArrows - 1}
		
		\tikzstyle{axis}=[-stealth]
		\tikzstyle{arrow}=[-stealth, thick]
		\tikzstyle{line}=[-, thick]
		
		\def\xOpt{{0,1,1,1,2,2,0,0,3,3,2,2,1,0,0}}
		\def\xAlg{{0,1,1,1,2,2,1,1,3,3,3,3,3,1,0}}
		\def\arrowStart{{1,4,8,9}}

		\fill[fill=red!40!white]
			(\xBegin + 1 * \xStep, \yBeginAlg + 0 * \yStep)
			rectangle
			(\xBegin + 6 * \xStep, \yBeginAlg + 1 * \yStep);
		\fill[fill=blue!40!white]
			(\xBegin + 4 * \xStep, \yBeginAlg + 1 * \yStep)
			rectangle
			(\xBegin + 6 * \xStep, \yBeginAlg + 2 * \yStep);
		\fill[fill=blue!40!white]
			(\xBegin + 6 * \xStep, \yBeginAlg + 0 * \yStep)
			rectangle
			(\xBegin + 9 * \xStep, \yBeginAlg + 1 * \yStep);
		\fill[fill=green!45!white]
			(\xBegin + 8 * \xStep, \yBeginAlg + 1 * \yStep)
			rectangle
			(\xBegin + 9 * \xStep, \yBeginAlg + 3 * \yStep);
		\fill[fill=green!45!white]
			(\xBegin + 9 * \xStep, \yBeginAlg + 0 * \yStep)
			rectangle
			(\xBegin + 13 * \xStep, \yBeginAlg + 2 * \yStep);
		\fill[fill=yellow!50!white]
			(\xBegin + 9 * \xStep, \yBeginAlg + 2 * \yStep)
			rectangle
			(\xBegin + 13 * \xStep, \yBeginAlg + 3 * \yStep);
		\fill[fill=yellow!50!white]
			(\xBegin + 13 * \xStep, \yBeginAlg + 0 * \yStep)
			rectangle
			(\xBegin + 14 * \xStep, \yBeginAlg + 1 * \yStep);
		\fill[fill=red!40!white]
			(\xBegin + 1 * \xStep, \yBeginOpt + 0 * \yStep)
			rectangle
			(\xBegin + 2 * \xStep, \yBeginOpt + 1 * \yStep);
		\fill[fill=blue!40!white]
			(\xBegin + 4 * \xStep, \yBeginOpt + 1 * \yStep)
			rectangle
			(\xBegin + 5 * \xStep, \yBeginOpt + 2 * \yStep);
		\fill[fill=green!45!white]
			(\xBegin + 8 * \xStep, \yBeginOpt + 1 * \yStep)
			rectangle
			(\xBegin + 9 * \xStep, \yBeginOpt + 3 * \yStep);
		\fill[fill=yellow!50!white]
			(\xBegin + 9 * \xStep, \yBeginOpt + 2 * \yStep)
			rectangle
			(\xBegin + 10 * \xStep, \yBeginOpt + 3 * \yStep);
		
		\draw[axis] (\xBegin, \yBeginAlg) 
			to (\xBegin + \xStep * \xLength + \axisExtend * \xStep, \yBeginAlg) node[below] {$t$};
		\draw[axis] (\xBegin, \yBeginAlg) 
			to (\xBegin, \yBeginAlg + \yStep * \yLengthAlg + \axisExtend * \yStep) node[left] {$x^\mathcal{A}_{t,j}$};
		\foreach \x in {0,...,\xLength} {
			\draw[-] 
				(\xBegin + \x * \xStep, \yBeginAlg + \tickHalfLength) -- 
				(\xBegin + \x * \xStep, \yBeginAlg - \tickHalfLength);
		}
		\foreach \y in {0,...,\yLengthAlg} {
			\draw[-] 
			(\xBegin + \tickHalfLength, \yBeginAlg + \y * \yStep) -- 
			(\xBegin - \tickHalfLength, \yBeginAlg + \y * \yStep);
		}
		\foreach \x in {0,...,\tMax} {
			\node[below] at (\xBegin + \x * \xStep + 0.5 * \xStep, \yBeginAlg) {$\x$};
		}
	
		\draw[axis] (\xBegin, \yBeginOpt) 
			to (\xBegin + \xStep * \xLength + \axisExtend * \xStep, \yBeginOpt) node[below] {$t$};
		\draw[axis] (\xBegin, \yBeginOpt) 
			to (\xBegin, \yBeginOpt + \yStep * \yLengthOpt + \axisExtend * \yStep) node[left] {$\hat{x}^t_{t,j}$};
		\foreach \x in {0,...,\xLength} {
			\draw[-] 
			(\xBegin + \x * \xStep, \yBeginOpt + \tickHalfLength) -- 
			(\xBegin + \x * \xStep, \yBeginOpt - \tickHalfLength);
		}
		\foreach \y in {0,...,\yLengthOpt} {
			\draw[-] 
			(\xBegin + \tickHalfLength, \yBeginOpt + \y * \yStep) -- 
			(\xBegin - \tickHalfLength, \yBeginOpt + \y * \yStep);
		}
		\foreach \x in {0,...,\tMax} {
			\node[below] at (\xBegin + \x * \xStep + 0.5 * \xStep, \yBeginOpt) {$\x$};
		}

		\foreach \x in {1,...,\tMax} {
			\pgfmathsetmacro{\ybefore}{\xAlg[\x - 1]}
			\pgfmathsetmacro{\ynow}{\xAlg[\x]}
			\draw[line] 
				(\xBegin + \x * \xStep, \yBeginAlg + \ybefore * \yStep) --
				(\xBegin + \x * \xStep, \yBeginAlg + \ynow * \yStep);
			\draw[line] 
				(\xBegin + \x * \xStep, \yBeginAlg + \ynow * \yStep) --
				(\xBegin + \x * \xStep + \xStep, \yBeginAlg + \ynow * \yStep);
		}

		\foreach \x in {1,...,\tMax} {
			\pgfmathsetmacro{\ybefore}{\xOpt[\x - 1]}
			\pgfmathsetmacro{\ynow}{\xOpt[\x]}
			\draw[line] 
				(\xBegin + \x * \xStep, \yBeginOpt + \ybefore * \yStep) --
				(\xBegin + \x * \xStep, \yBeginOpt + \ynow * \yStep);
			\draw[line] 
				(\xBegin + \x * \xStep, \yBeginOpt + \ynow * \yStep) --
				(\xBegin + \x * \xStep + \xStep, \yBeginOpt + \ynow * \yStep);
		}
	
		\foreach \i in {0,...,\numArrowsMinusOne} {
			\pgfmathsetmacro{\x}{\arrowStart[\i]}
			\pgfmathsetmacro{\yplus}{\yArrowOffset * \yArrowStep + \i * \yArrowStep}
			\draw[arrow] 
			(\xBegin + \x * \xStep + \yLeftRight, 
				\yBeginArrows) --
			(\xBegin + \x * \xStep + \yLeftRight, 
				\yBeginArrows + \yplus) --
			(\xBegin + \x * \xStep + \tbar * \xStep - \yLeftRight,
				\yBeginArrows + \yplus) --
			(\xBegin + \x * \xStep + \tbar * \xStep - \yLeftRight,
				 \yBeginArrows);
		}

	\end{tikzpicture}
	\caption{
		{\normalfont (This figure is colored)} 
		Visualization of algorithm~$\mathcal{A}$ for one specific server type $j$ with $\bar{t}_j = 5$. The upper plot shows $\hat{x}^t_{t,j}$, while the lower plot shows the resulting values $x^\mathcal{A}_{t,j}$. Note that the upper plot is not an optimal schedule, but the last state of each optimal schedule $\hat{X}^1, \hat{X}^2, \dots, \hat{X}^T$. The algorithm ensures that $x^\mathcal{A}_{t,j} \geq \hat{x}^t_{t,j}$ is always satisfied which is visualized by the colors: Each colored square in the upper plot causes a server to be powered up. The runtime of this server is drawn in the same color in the lower plot. Additionally, the arrows indicate the time slot when a server is powered down (e.g., at time slot $1$, a server is powered up, and $\bar{t}_j = 5$ time slots later, it is powered down).  
	}
	\label{fig:online:algo}
\end{figure}
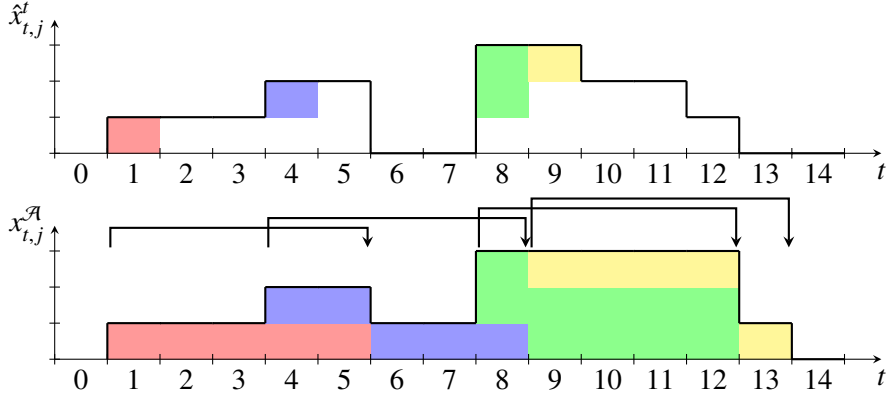
\fi

\begin{lemma} \label{lemma:online:func:feasible}
	The schedule $X^\mathcal{A}$ is feasible.
\end{lemma}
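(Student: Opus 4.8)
The plan is to verify the three ingredients of feasibility separately: integrality together with non-negativity of each $x^\mathcal{A}_{t,j}$, the load constraint $\sum_{j} x^\mathcal{A}_{t,j} z^\text{max}_j \ge \lambda_t$, and the capacity constraint $x^\mathcal{A}_{t,j} \le m_j$. The first two are short; the real work lies in the capacity bound. The starting point is the invariant the algorithm maintains by construction: once the inner loop has finished for index $j$ in time slot $t$, we always have $x^\mathcal{A}_{t,j} \ge \hat{x}^t_{t,j}$. Indeed, if the \textbf{if}-condition holds then line~8 sets $x^\mathcal{A}_{t,j}$ exactly to $\hat{x}^t_{t,j}$, and if it fails then $x^\mathcal{A}_{t,j} > \hat{x}^t_{t,j}$ already. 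Since $\hat{X}^t$ is an optimal—hence feasible—schedule for $\mathcal{I}^t$, it satisfies $\sum_j \hat{x}^t_{t,j} z^\text{max}_j \ge \lambda_t$; multiplying the invariant by $z^\text{max}_j \ge 0$ and summing over $j$ yields the load constraint for $X^\mathcal{A}$. Integrality and non-negativity are immediate, since each $w_{s,j}$ is either $0$ or the integer difference $\hat{x}^s_{s,j} - x^\mathcal{A}_{s,j} \ge 0$ produced in line~7.

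For the capacity bound I would first establish the bookkeeping identity
\[
x^\mathcal{A}_{t,j} = \sum_{s=t-\bar{t}_j+1}^{t} w_{s,j},
\]
expressing that the servers of type $j$ active at time $t$ are precisely those powered up during the last $\bar{t}_j$ slots. This follows by induction on $t$ from the net update $x^\mathcal{A}_{t,j} = x^\mathcal{A}_{t-1,j} - w_{t-\bar{t}_j,j} + w_{t,j}$: line~5 removes the batch $w_{t-\bar{t}_j,j}$ whose lifetime of $\bar{t}_j$ slots just expired, while lines~7--8 add the freshly powered-up batch $w_{t,j}$. Substituting the inductive hypothesis for $x^\mathcal{A}_{t-1,j}$ makes the sum telescope, so exactly the window of the most recent $\bar{t}_j$ power-up batches survives.

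Given this identity, the bound $x^\mathcal{A}_{t,j} \le m_j$ falls out of a ``last power-up'' argument. If $w_{s,j}=0$ for every $s$ in the window, the identity gives $x^\mathcal{A}_{t,j}=0 \le m_j$. Otherwise let $s^\ast$ be the latest slot in $\{t-\bar{t}_j+1,\dots,t\}$ with $w_{s^\ast,j}>0$; at that slot the \textbf{if}-branch executed, so $x^\mathcal{A}_{s^\ast,j} = \hat{x}^{s^\ast}_{s^\ast,j} \le m_j$ by feasibility of $\hat{X}^{s^\ast}$. For every later slot $s\in\{s^\ast+1,\dots,t\}$ no power-up occurs, i.e.\ $w_{s,j}=0$, so the subtraction in line~5 can only decrease the count; hence $x^\mathcal{A}_{t,j} \le x^\mathcal{A}_{s^\ast,j} \le m_j$.

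The step I expect to be the main obstacle is this capacity bound. The naive worry is that repeatedly topping up $x^\mathcal{A}_{t,j}$ to match a sequence of optimal schedules $\hat{X}^t$, while old servers are still kept alive for $\bar{t}_j$ slots, could overshoot $m_j$. The telescoping identity is exactly what defuses this: since the count is never raised above the level $\hat{x}^{s^\ast}_{s^\ast,j}$ of the most recent refill, and that level is itself capped by the capacity of a genuinely feasible schedule, no accumulation across refills can occur. Pinning down the correct survival window in the identity (and thereby the correct index $s^\ast$) is the one place where the bookkeeping must be done carefully.
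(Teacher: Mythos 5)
Your proposal is correct and follows essentially the same route as the paper: the load constraint via the invariant $x^\mathcal{A}_{t,j} \geq \hat{x}^t_{t,j}$ together with feasibility of $\hat{X}^t$, the capacity bound $x^\mathcal{A}_{t,j} \leq m_j$ because power-ups occur only in line~8 (which caps the count at $\hat{x}^t_{t,j} \leq m_j$), and non-negativity because each batch $w_{t,j}$ is subtracted exactly once. Your telescoping identity and the explicit ``last power-up'' index $s^\ast$ simply make rigorous what the paper compresses into two sentences.
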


\begin{proof}
	A schedule is feasible, if (1) $\sum_{j=1}^d x_{t,j} z^\text{max}_j \geq \lambda_t$ and (2) $x_{t,j} \in [m_j]_0$ holds for all $t \in [T]$ and $j \in [d]$.
	 It is always ensured that $x^\mathcal{A}_{t,j} \geq \hat{x}^t_{t,j}$ holds, so condition (1) is satisfied, since $\hat{X}^t$ is a feasible schedule: \begin{equation*}
	 \sum_{j=1}^d x^\mathcal{A}_{t,j} z^\text{max}_j \geq \sum_{j=1}^d \hat{x}^t_{t,j} z^\text{max}_j \geq \lambda_t .
	 \end{equation*}
	 
	 Servers are powered up only in line~8. Since $\hat{X}^t$ is feasible, $x^\mathcal{A}_{t,j} \leq m_j$ is always satisfied. Servers are powered down only in line~5. Each variable $w_{t,j}$ is accessed exactly once, so $x^\mathcal{A}_{t,j}$ never gets negative. Therefore, condition (2) is satisfied. \qedllncs
\end{proof}

\subsection{Competitiveness}
\label{sec:online:func:analysis}

In this section, we will show that algorithm~$\mathcal{A}$ is $(2d+1)$-competitive. 

For our analysis, we split the operating cost into an \emph{idle} and a \emph{load-dependent} part. The \emph{idle} operating cost of an active server of type $j$ for a single time slot is $f_j(0)$, i.e., it does not depend on the load. The \emph{load-dependent} operating cost of all active servers of type~$j$ at time slot $t$ is defined by 
\begin{equation} \label{eqn:online:func:ltj}
	L_{t,j}(X) \coloneqq x_{t,j}  \left( f_j\left(\frac{\lambda_t z_{t,j}}{x_{t,j}}\right) - f_j(0) \right)
\end{equation}
where $z_{t,j}$ are the values $z_j$ that minimize the right term in equation~\eqref{eqn:problem:gt}. Formally, 
\begin{equation*}
(z_{t,1}, \dots, z_{t,d}) \coloneqq \argmin_{(z_1, \dots, z_d) \in \mathcal{Z}} \sum_{j=1}^{d} g_{t,j}(x_{t,j}, z_j).
\end{equation*}
Since $f_j$ is an increasing function, $L_{t,j}(X)$ cannot be negative.

Let $s_{j,1} \leq \dots \leq s_{j,n_j}$ denote the time slots when in $X^\mathcal{A}$ a server of type $j$ is powered up. If $n$ servers of type $j$ are powered up at the same time slot, there are $n$ equal values in the sequence. The time interval $A_{j,i} \coloneqq [s_{j,i} : s_{j,i} + \bar{t}_j -1]$ is called \emph{block} and contains the time slots when the server is in the active state. The switching and idle operating cost of a block $A_{j,i}$ is at most\footnote{If there are two consecutive blocks without a gap between them, there is no switching cost for the second block, so $H_{j,i}$ gives an upper bound for the switching and idle operating cost of $A_{j,i}$.}
\begin{equation} \label{eqn:online:func:dtj}
H_{j,i} \coloneqq \beta_j + \bar{t}_j \cdot f_j(0) .
\end{equation}

For each server type $j \in [d]$ we define special time slots $\tau_{j,1}, \dots, \tau_{j, n'_j}$ that are constructed in reverse time as follows. $\tau_{j, n'_j}$ is defined as the last time slot when a server of type $j$ is powered up in $X^\mathcal{A}$, i.e., $\tau_{j, n'_j} \coloneqq s_{j, n_j}$. Given $\tau_{j,k}$, the previous time slot $\tau_{j,k-1}$ is the last powering up of a server of type $j$ before time slot $\tau_{j,k} - \bar{t}_j$. Formally, for $k < n'_j$, $\tau_{j,k}$ is defined by $\tau_{j,k} \coloneqq \max \{s_{j,i} \mid i \in [n_j],  s_{j,i} \leq \tau_{j,k+1} - \bar{t}_j\}$.
%
Figure~\ref{fig:online:tau} visualizes the definition of~$\tau_{j,k}$.
Since the runtime of a single server is exactly $\bar{t}_j$, this definition ensures that each block~$A_{j,i}$ contains exactly one time slot $\tau_{j,k}$, $k \in [n'_j]$. 

\ifincludefigures
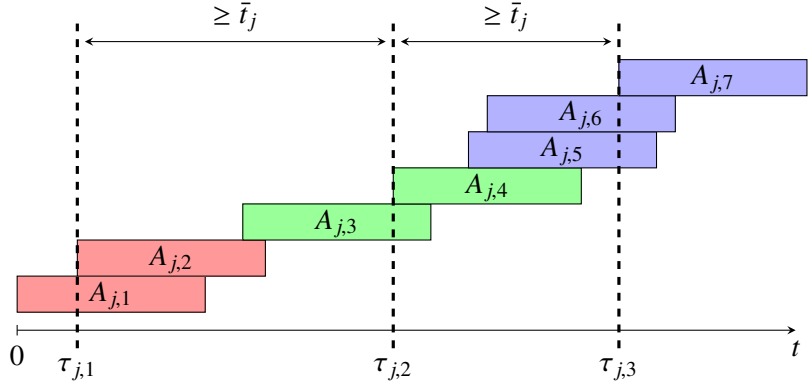
\begin{figure}[t]

	\centering
	\begin{tikzpicture}
		\pgfmathsetmacro{\xBegin}{0}
		\pgfmathsetmacro{\xStep}{0.5} 
		\pgfmathsetmacro{\xLength}{20}
		\pgfmathsetmacro{\tbar}{5}
		\pgfmathsetmacro{\tMax}{\xLength - 1}
		
		\pgfmathsetmacro{\yBegin}{0}
		
		\pgfmathsetmacro{\yStep}{0.48} 
		\pgfmathsetmacro{\yHeightCorrection}{-0.02cm} 
		\pgfmathsetmacro{\numBlocks}{7}
		\pgfmathsetmacro{\numBlocksMinusOne}{\numBlocks - 1}
		\pgfmathsetmacro{\tauLastIndex}{2}
		\pgfmathsetmacro{\tauLastIndexMinusOne}{\tauLastIndex - 1}
		\pgfmathsetmacro{\yAxis}{\yBegin - 0.5 * \yStep}
		\pgfmathsetmacro{\yLow}{\yBegin - \yStep}
		\pgfmathsetmacro{\yHigh}{\yBegin + \numBlocks*\yStep + \yStep}
		
		\pgfmathsetmacro{\axisExtend}{0.7}
		
		\pgfmathsetmacro{\tickHalfLength}{0.07}
		
		\tikzstyle{arrow}=[-stealth]

		\def\activation{{0,1.6,6,10,12,12.5,16}}
		\def\tauarray{{1.6,10,16}}
		\def\colortable{{
				"1.0 0.6 0.6",
				"1.0 0.6 0.6",
				"0.6 1.0 0.6",
				"0.6 1.0 0.6",
				"0.7 0.7 1.0",
				"0.7 0.7 1.0",
				"0.7 0.7 1.0"}}
		
		\foreach \y in {0,...,\numBlocksMinusOne} {
			\pgfmathsetmacro{\yp}{int(\y + 1)}
			\pgfmathsetmacro{\xIdx}{\activation[\y] }
			\pgfmathparse{\colortable[\y]};
			\definecolor{currentColor}{rgb}{\pgfmathresult};
			\draw[fill=currentColor]
				(\xBegin + \xIdx * \xStep, \yBegin + \y * \yStep) 
				rectangle 
				(\xBegin + \xIdx * \xStep + \tbar * \xStep, \yBegin + \y * \yStep + \yStep)
				node[midway, yshift=\yHeightCorrection] {$A_{j,\yp}$};

		}
		
		\draw[arrow] (\xBegin, \yAxis) 
			to (\xBegin + \xStep * \xLength + \axisExtend * \xStep, \yAxis) node[below] {$t$};
		\draw[-] (\xBegin, \yAxis + \tickHalfLength) -- 
			(\xBegin, \yAxis - \tickHalfLength) 
			node[below] {$0$};

		\foreach \y in {0,...,\tauLastIndex} {
			\pgfmathsetmacro{\yp}{int(\y + 1)}
			\pgfmathsetmacro{\xtau}{\tauarray[\y] }
			\draw[dashed, very thick] 
				(\xBegin + \xtau * \xStep, \yLow) -- 
				(\xBegin + \xtau * \xStep, \yHigh);
			\node[below] at
				(\xBegin + \xtau * \xStep, \yLow)
				{$\tau_{j,\yp}$};
		}
	
		\foreach \y in {0,...,\tauLastIndexMinusOne} {
			\pgfmathsetmacro{\yp}{int(\y + 1) }
			\pgfmathsetmacro{\xtau}{\tauarray[\y] }
			\pgfmathsetmacro{\xnext}{\tauarray[\yp] }
			\draw[stealth-stealth] 
				(\xBegin + \xtau * \xStep + 0.1,
				 \yHigh - 0.5 * \yStep) -- 
				(\xBegin + \xnext * \xStep - 0.1,
				 \yHigh - 0.5 * \yStep)
			node[midway,above] {$\geq \bar{t}_j$};
		}

	\end{tikzpicture}
	\caption{
		{\normalfont (This figure is colored)} 
		Example of the blocks $A_{j,i}$ (rectangles) and the corresponding special time slots $\tau_{j,k}$ (dashed vertical lines) for one specific server type $j$. The distance between two consecutive special time slots is always greater than or equal to $\bar{t}_j$. The index block sets $B_{j,k}$ defined in the proof of Lemma~\ref{lemma:online:func:djisum} are $B_{j,1} = \{1,2\}$ (marked in red), $B_{j,2} = \{3,4\}$ (green), $B_{j,3} = \{5,6,7\}$ (blue). 
	}
	\label{fig:online:tau}
\end{figure}
\fi

As already mentioned in the problem description section, the operating costs of all active servers of type~$j$ are minimized if the jobs assigned to type~$j$ are equally distributed to the servers of type~$j$. 
This is formally stated in the lemma below.

\begin{lemma} \label{lemma:online:func:equally}
	Let $f$ be a convex function, $x \in \mathbb{N}$, $\lambda, z \in \mathbb{R}$ and let $\sum_{i=1}^{x} a_i = 1$ with $a_i \geq 0$ for all $i \in [x]$. It holds
	\begin{equation*}
		x f(\lambda z / x) \leq \sum_{i=1}^x f(\lambda z a_i) .
	\end{equation*}
\end{lemma}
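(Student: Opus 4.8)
The plan is to invoke Jensen's inequality for the convex function $f$ with uniform weights. First I would observe that the $x$ quantities $\lambda z a_1, \dots, \lambda z a_x$ are exactly the arguments appearing on the right-hand side, and that I can form their arithmetic mean using the equal weights $1/x$. Applying Jensen's inequality to $f$ with these weights then gives
\begin{equation*}
f\left(\frac{1}{x}\sum_{i=1}^x \lambda z a_i\right) \le \frac{1}{x}\sum_{i=1}^x f(\lambda z a_i).
\end{equation*}

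Next I would simplify the argument on the left-hand side. Since $\sum_{i=1}^x a_i = 1$, we have $\sum_{i=1}^x \lambda z a_i = \lambda z$, so the left argument collapses to $\lambda z / x$, yielding $f(\lambda z / x) \le \frac{1}{x}\sum_{i=1}^x f(\lambda z a_i)$. Multiplying both sides by the positive integer $x$ then produces precisely the claimed inequality.

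The remaining point I would verify is that all arguments lie in a common interval on which $f$ is convex, so that Jensen is legitimately applicable. Because $a_i \ge 0$ and $\sum_i a_i = 1$, each $a_i$ lies in $[0,1]$, hence every point $\lambda z a_i$ and the average $\lambda z / x$ lie in the segment between $0$ and $\lambda z$; this is exactly the convex hull of the sample points, so no domain issue arises. If one prefers to avoid citing Jensen directly, the same bound follows by induction on $x$ from the two-point definition of convexity, but that route is more tedious and adds no real insight.

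The main obstacle here is bookkeeping rather than anything conceptual: the only step where care is genuinely needed is the reduction $\sum_i \lambda z a_i = \lambda z$ combined with the choice of uniform weights, which is what makes the Jensen average equal the desired $\lambda z / x$. Everything else is a direct instance of convexity.
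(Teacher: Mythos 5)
Your proof is correct and matches the paper's own argument exactly: both apply Jensen's inequality with uniform weights $\omega_i = 1/x$ and points $\lambda z a_i$, use $\sum_i a_i = 1$ to collapse the average to $\lambda z / x$, and multiply by $x$. The additional remark about the arguments lying in the convex hull of the sample points is a reasonable sanity check but not needed beyond what the paper assumes.
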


We will prove this with Jensen's inequality.

\begin{lemma}[Jensen's inequality] \label{lemma:online:func:jensen}
	Let $f$ be a convex function, $\omega_i \geq 0$ for $i \in [n]$ and $\sum_{i=1}^{n} \omega_i = 1$. It holds
	\begin{equation*}
	f\left(\sum_{i=1}^{n} \omega_i x_i\right) \leq \sum_{i=1}^{n} \omega_i f(x_i).
	\end{equation*}
\end{lemma}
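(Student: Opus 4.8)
The plan is to prove Jensen's inequality by induction on the number of points $n$, using the two-point defining inequality of convexity both as the base case and as the engine of the inductive step.

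For the base case I take $n = 1$: the constraint $\sum_{i} \omega_i = 1$ forces $\omega_1 = 1$, so both sides equal $f(x_1)$ and the inequality holds with equality. (One could equally start from $n = 2$, where the statement is verbatim the definition of convexity of $f$.) For the inductive step I assume the claim for any collection of $n-1$ weights and establish it for $n$. I first dispose of the degenerate case $\omega_n = 1$: then every other weight vanishes and both sides collapse to $f(x_n)$. So I may assume $\omega_n < 1$, hence $1 - \omega_n > 0$, and renormalize the first $n-1$ weights by setting $\omega_i' \coloneqq \omega_i / (1 - \omega_n)$ for $i \in [n-1]$. These are nonnegative and sum to $1$, so the induction hypothesis applies to them.

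Writing $y \coloneqq \sum_{i=1}^{n-1} \omega_i' x_i$, I decompose the convex combination as a two-point combination,
\[
\sum_{i=1}^n \omega_i x_i = (1-\omega_n)\, y + \omega_n x_n,
\]
and apply the defining convexity inequality to the pair of points $y, x_n$ with weights $1 - \omega_n$ and $\omega_n$:
\[
f\!\left(\sum_{i=1}^n \omega_i x_i\right) \le (1-\omega_n)\, f(y) + \omega_n f(x_n).
\]
Bounding $f(y)$ by the induction hypothesis, $f(y) \le \sum_{i=1}^{n-1} \omega_i' f(x_i)$, and using $(1-\omega_n)\,\omega_i' = \omega_i$ to restore the original weights then yields
\[
f\!\left(\sum_{i=1}^n \omega_i x_i\right) \le \sum_{i=1}^{n-1} \omega_i f(x_i) + \omega_n f(x_n) = \sum_{i=1}^n \omega_i f(x_i),
\]
which closes the induction.

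The only subtlety, and the point I would guard most carefully, is the division by $1 - \omega_n$: one must peel off the case $\omega_n = 1$ beforehand so that the renormalized weights $\omega_i'$ are well defined, and then check that they form a legitimate weight vector (nonnegative and summing to one) so the induction hypothesis is genuinely applicable. The remainder is routine algebraic bookkeeping, and the entire argument rests only on the two-point convexity inequality assumed in the hypothesis on $f$.
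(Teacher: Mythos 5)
Your proof is correct: it is the standard textbook induction on $n$, with the two-point convexity inequality as base case and inductive engine, and you handle the one genuine pitfall (the division by $1-\omega_n$ when $\omega_n = 1$) properly. Note, however, that the paper does not prove this lemma at all --- it states Jensen's inequality as a classical known fact and immediately uses it; the proof environment that follows the lemma in the paper is actually the proof of Lemma~\ref{lemma:online:func:equally} (instantiating Jensen with $x_i = \lambda z a_i$ and $\omega_i = 1/x$), not a proof of Jensen itself. So your argument supplies a proof the paper deliberately omits, and there is no paper proof to compare it against; as a self-contained justification it is complete, with the minor implicit assumption (shared by the paper's statement) that the domain of $f$ is convex so that the intermediate point $y = \sum_{i=1}^{n-1} \omega'_i x_i$ lies in it.
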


\begin{proof}
	By using Jensen's inequality with $x_i = \lambda z a_i$ and $\omega_i = 1/x$ for $i \in [x]$, we get 
	\begin{equation*}
	f\left(\sum_{i=1}^{x}\lambda z a_i / x\right) \leq \sum_{i=1}^{x} \frac{1}{x} f(\lambda z a_i).
	\end{equation*}
	Multiplying with $x$ and using the fact $\sum_{i=1}^{x} a_i = 1$ gives us
	\begin{equation*}
	x f(\lambda z / x) \leq \sum_{i=1}^{x} f (\lambda z a_i). \qedhere 
	\end{equation*}
\end{proof}

The following lemma states that the load-dependent operating cost of $X^\mathcal{A}$ at time $t$ is less than or equal to that of $\hat{X}^t$.

\begin{lemma}  \label{lemma:online:func:ltj}
	For all $t \in [T]$ and $j \in [d]$, it holds
	\begin{equation*}
		L_{t,j}(X^\mathcal{A}) \leq L_{t,j}(\hat{X}^t).
	\end{equation*}
\end{lemma}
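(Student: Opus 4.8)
The plan is to use the single structural guarantee of the algorithm, $x^\mathcal{A}_{t,j} \ge \hat x^t_{t,j}$ for every type $j$, together with the principle that distributing a \emph{fixed} amount of type-$j$ work over more servers cannot increase the (convex) load-dependent cost. Let $\hat w_j \coloneqq \lambda_t \hat z_{t,j}$ denote the work that the cost-minimizing distribution of $\hat X^t$ assigns to type $j$, so that $L_{t,j}(\hat X^t) = \hat x^t_{t,j}\big(f_j(\hat w_j/\hat x^t_{t,j}) - f_j(0)\big)$. First I would isolate the clean sub-claim that spreading this same work $\hat w_j$ equally over the $x^\mathcal{A}_{t,j} \ge \hat x^t_{t,j}$ servers of $X^\mathcal{A}$ does not raise the type-$j$ cost:
\[
 x^\mathcal{A}_{t,j}\Big(f_j\big(\hat w_j/x^\mathcal{A}_{t,j}\big) - f_j(0)\Big) \;\le\; \hat x^t_{t,j}\Big(f_j\big(\hat w_j/\hat x^t_{t,j}\big) - f_j(0)\Big).
\]

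This sub-claim follows directly from Lemma~\ref{lemma:online:func:equally}. Its right-hand side is precisely the type-$j$ load-dependent cost of the \emph{unequal} distribution over the $x^\mathcal{A}_{t,j}$ available servers that loads $\hat x^t_{t,j}$ of them with $\hat w_j/\hat x^t_{t,j}$ each and leaves the other $x^\mathcal{A}_{t,j}-\hat x^t_{t,j}$ idle (idle servers contribute nothing to the load-dependent cost), whereas its left-hand side is that of the \emph{equal} distribution; both use $x^\mathcal{A}_{t,j}$ active servers and carry the same total work $\hat w_j$, so Lemma~\ref{lemma:online:func:equally} (equivalently Jensen, Lemma~\ref{lemma:online:func:jensen}) makes the equal split no more expensive. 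The transferred assignment ``give type $j$ exactly its $\hat X^t$-share $\hat w_j$'' is feasible for $\vec x^\mathcal{A}_t$ because $\hat w_j \le \hat x^t_{t,j} z^\text{max}_j \le x^\mathcal{A}_{t,j} z^\text{max}_j$, and by the sub-claim its type-$j$ load-dependent cost is at most $L_{t,j}(\hat X^t)$. The degenerate cases $x_{t,j}=0$ or $\hat w_j=0$ are immediate.

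The step I expect to be the genuine obstacle is relating $L_{t,j}(X^\mathcal{A})$ to this transferred assignment. By definition $L_{t,j}(X^\mathcal{A})$ is evaluated at the distribution $\vec z_t$ that minimizes the \emph{total} operating cost $\sum_{j} g_{t,j}$ at configuration $\vec x^\mathcal{A}_t$, not the cost of type $j$ in isolation; optimality of $\vec z_t$ therefore only bounds the \emph{sum} over all types, and the minimizer may shift work onto a type whose server count grew, so a per-type comparison does not follow from aggregate optimality. To attack this I would invoke the first-order optimality conditions: at the minimizer the marginals $f_j'(\lambda_t z_{t,j}/x_{t,j})$ are equalized across active types, and the larger total capacity of $X^\mathcal{A}$ forces a smaller common marginal and hence pointwise smaller per-server loads than in $\hat X^t$. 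The difficulty is that a lower per-server load and a larger server count push the type-$j$ cost in opposite directions (a type may end up carrying strictly more total work), and reconciling them \emph{per type} is the crux of the argument and where I would concentrate the effort; should it resist a per-type treatment, the transfer argument still yields the inequality in summed form $\sum_j L_{t,j}(X^\mathcal{A}) \le \sum_j L_{t,j}(\hat X^t)$, which is the form the downstream competitive-ratio estimates ultimately need.
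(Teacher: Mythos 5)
Your reading of where the difficulty lies is exactly right, and in fact the per-type statement cannot be rescued: it is false as stated, so the first-order-condition argument you hoped might close the gap cannot succeed. Take $d=2$, $\lambda_t=1$, $f_1(z)=c+z$, $f_2(z)=c+z^2$ on $[0,1]$, $\hat{\vec{x}}^t_t=(1,1)$ and $\vec{x}^{\mathcal A}_t=(1,2)$ (consistent with the algorithm's invariant $x^{\mathcal A}_{t,j}\ge\hat x^t_{t,j}$; the extra type-2 server can persist from an earlier power-up). For $\hat X^t$ the minimizer is $\hat z=(1/2,1/2)$, so $L_{t,2}(\hat X^t)=f_2(1/2)-f_2(0)=1/4$; for $\vec{x}^{\mathcal A}_t$ the objective is, up to additive constants, $z_1+z_2^2/2$, minimized at $z^{\mathcal A}=(0,1)$, so $L_{t,2}(X^{\mathcal A})=2\bigl(f_2(1/2)-f_2(0)\bigr)=1/2>1/4$. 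The paper's own proof commits precisely the conflation you warned about: its final line equates $\hat x^t_{t,j} f_j(\lambda_t z_{t,j}/\hat x^t_{t,j})-\hat x^t_{t,j}f_j(0)$ with $L_{t,j}(\hat X^t)$, where $z_{t,j}$ is the minimizer for the configuration $\vec{x}^{\mathcal A}_t$, not for $\hat{\vec{x}}^t_t$. Note also that the paper transfers in the opposite direction from you (it concentrates $X^{\mathcal A}$'s assignment $z^{\mathcal A}_{t,j}$ onto the $\hat x^t_{t,j}$ servers), and summed over $j$ that only bounds $\sum_j L_{t,j}(X^{\mathcal A})$ by a quantity which is in general \emph{larger} than $\sum_j L_{t,j}(\hat X^t)$.

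Your fallback is the correct repair and is all that is needed. Evaluating the configuration $\vec{x}^{\mathcal A}_t$ at the assignment $\hat z$ that is optimal for $\hat{\vec{x}}^t_t$ (feasible, as you note, since $\hat w_j\le \hat x^t_{t,j}z^{\text{max}}_j\le x^{\mathcal A}_{t,j}z^{\text{max}}_j$) upper-bounds $\sum_j L_{t,j}(X^{\mathcal A})$ by optimality of $z_{t,j}$ for $\vec{x}^{\mathcal A}_t$, and then your Jensen sub-claim applies per type to the \emph{fixed} work $\hat w_j$, yielding $\sum_{j}L_{t,j}(X^{\mathcal A})\le\sum_j L_{t,j}(\hat X^t)$. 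This summed inequality is exactly what Lemma~\ref{lemma:online:func:ltjsum} consumes (it immediately sums over $j$ and then bounds $\sum_j L_{t,j}(\hat X^t)\le C_{\{t\}}(\hat X^t)$), so the downstream competitive-ratio analysis goes through unchanged once the lemma is restated in summed form. In short: your easy part is right, your suspicion about the hard part is vindicated by a concrete counterexample, and your summed version is the statement that should replace the lemma.
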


\begin{proof}
	For $i \in [x^\mathcal{A}_{t,j}]$, let 
	\begin{equation*}
		a_i \coloneqq \begin{cases}
		1 / \hat{x}^t_{t,j} & \text{if $i \leq \hat{x}^t_{t,j}$} \\
		0 & \text{otherwise.}
		\end{cases}
	\end{equation*}
	By using the definition of $L_{t,j}$ (equation~\eqref{eqn:online:func:ltj}) and Lemma~\ref{lemma:online:func:equally}, we get
	\begin{align*}
		L_{t,j}(X^\mathcal{A}) &\;=\; x^\mathcal{A}_{t,j} f_j (\lambda_t z_{t,j} / x^\mathcal{A}_{t,j}) - x^\mathcal{A}_{t,j} f_j(0) \\
		&\;\alignstack{ L\ref{lemma:online:func:equally}}{\leq}\; \sum_{i=1}^{x^\mathcal{A}_{t,j}} f_j (\lambda_t z_{t,j} a_i) - x^\mathcal{A}_{t,j} f_j(0) \\
		&\;=\; \sum_{i=1}^{\hat{x}^t_{t,j}} f_j(\lambda_t z_{t,j} / \hat{x}^t_{t,j}) + \sum_{i = \hat{x}^t_{t,j} + 1}^{x^\mathcal{A}_{t,j}} f_j(0) -  x^\mathcal{A}_{t,j} f_j(0) \\
		&\;=\; \hat{x}^t_{t,j} f_j(\lambda_t z_{t,j} / \hat{x}^t_{t,j}) - \hat{x}^t_{t,j} f_j(0) \\
		&\;=\; L_{t,j}(\hat{X}^t).
	\end{align*}
	In the third step, we simply use the definition of $a_i$ and split the sum into two parts. The second sum is  equal to $(x^\mathcal{A}_{t,j} - \hat{x}^t_{t,j}) f_j(0)$. At the end, we use the definition of $L_{t,j}$, again. \qedllncs
\end{proof}

%
By using Lemma~\ref{lemma:online:func:ltj}, we can show that the load-dependent operating cost of $X^\mathcal{A}$ is at most as large as the total cost of the optimal schedule.

\begin{lemma}  \label{lemma:online:func:ltjsum}
	It holds
	\begin{equation*}
		\sum_{t=1}^{T} \sum_{j=1}^{d} L_{t,j}(X^\mathcal{A}) \leq C(\hat{X}^T) .
	\end{equation*}
\end{lemma}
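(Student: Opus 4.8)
The plan is to combine Lemma~\ref{lemma:online:func:ltj} with a telescoping argument over the shortened instances. By Lemma~\ref{lemma:online:func:ltj} we have $L_{t,j}(X^\mathcal{A}) \le L_{t,j}(\hat{X}^t)$ term by term, so it suffices to bound $\sum_{t=1}^{T}\sum_{j=1}^{d} L_{t,j}(\hat{X}^t)$ by $C(\hat{X}^T)$. The difficulty is that each summand is measured on a \emph{different} optimal schedule $\hat{X}^t$, whereas the target quantity $C(\hat{X}^T)$ refers to a single schedule; so one cannot simply compare states slot by slot against $\hat{X}^T$.

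The key step I would establish is the incremental bound
\[
\sum_{j=1}^{d} L_{t,j}(\hat{X}^t) \;\le\; C(\hat{X}^t) - C(\hat{X}^{t-1}) \qquad \text{for all } t \in [T],
\]
with the convention $C(\hat{X}^0) = 0$. To prove it, I would truncate $\hat{X}^t$: keep its states $\hat{x}^t_1, \dots, \hat{x}^t_{t-1}$ and power everything down in slot $t$. Because power-down operations are free and the job volumes $\lambda_1, \dots, \lambda_{t-1}$ are shared between $\mathcal{I}^t$ and $\mathcal{I}^{t-1}$, this truncated schedule is feasible for $\mathcal{I}^{t-1}$, and its cost equals $C(\hat{X}^t)$ minus the operating cost $g_t(\hat{x}^t_{t,1}, \dots, \hat{x}^t_{t,d})$ of slot $t$ and minus the (nonnegative) switching cost that $\hat{X}^t$ incurs in slot $t$. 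Optimality of $\hat{X}^{t-1}$ for $\mathcal{I}^{t-1}$ then yields $C(\hat{X}^{t-1}) \le C(\hat{X}^t) - g_t(\hat{x}^t_{t,1}, \dots, \hat{x}^t_{t,d})$. Finally, splitting the operating cost into its idle and load-dependent parts (using the definition of $L_{t,j}$ in equation~\eqref{eqn:online:func:ltj}) gives
\[
g_t(\hat{x}^t_{t,1}, \dots, \hat{x}^t_{t,d}) = \sum_{j=1}^{d} \bigl( L_{t,j}(\hat{X}^t) + \hat{x}^t_{t,j} f_j(0) \bigr) \;\ge\; \sum_{j=1}^{d} L_{t,j}(\hat{X}^t),
\]
since $\hat{x}^t_{t,j} f_j(0) \ge 0$. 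This is exactly the incremental bound.

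Telescoping the incremental bound over $t = 1, \dots, T$ collapses the right-hand side to $C(\hat{X}^T) - C(\hat{X}^0) = C(\hat{X}^T)$, and combining with Lemma~\ref{lemma:online:func:ltj} completes the proof. I expect the main obstacle to be the feasibility and cost bookkeeping of the truncation step: one must verify carefully that dropping the last state and powering down for free neither violates the load constraint $\sum_j x_{s,j} z^\text{max}_j \ge \lambda_s$ in the earlier slots $s \le t-1$ nor adds any charge, so that the truncated schedule is a legitimate competitor against which $\hat{X}^{t-1}$'s optimality can be invoked. Everything else is the routine telescoping and the already-available idle/load-dependent split of $g_t$.
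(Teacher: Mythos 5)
Your proposal is correct and is essentially the paper's own argument: the paper proves $\sum_{t'=1}^{t}\sum_{j}L_{t',j}(X^\mathcal{A})\le C(\hat{X}^{t})$ by induction on $t$, using exactly your two ingredients — the prefix bound $C(\hat{X}^{t-1})\le C_{[1:t-1]}(\hat{X}^{t})$ obtained from optimality of $\hat{X}^{t-1}$, and the fact that $\sum_j L_{t,j}(\hat{X}^t)$ is at most the cost of $\hat{X}^t$ in slot $t$ — so your telescoping is just the unrolled form of that induction. The bookkeeping you flag as the main obstacle (feasibility of the truncated prefix and the free power-down) goes through exactly as you describe.
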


\begin{proof}
	We will prove the inequality 
	\begin{equation*}
	\sum_{t'=1}^{t} \sum_{j=1}^d L_{t', j}(X^\mathcal{A}) \leq C(\hat{X}^t)
	\end{equation*}
	by induction. For $t = 0$, both terms are zero. Assume that
	$\sum_{t'=1}^{t-1} \sum_{j=1}^d L_{t', j}(X^\mathcal{A})  \leq C(\hat{X}^{t-1})$
	holds. Let 
	\begin{equation*}
	C_M(X) \coloneqq \sum_{t \in M} \left( g_t(x_{t,1}, \dots, x_{t,d}) + \sum_{j=1}^{d} \beta_j (x_{t,j} - x_{t-1, j})^+ \right)
	\end{equation*}
	be the switching and operating cost of $X$ for all time slots $t \in M$. Note that the total cost of a schedule is given by $C_{[1:T]}(X) = C(X)$.
	
	Since $\hat{X}^{t-1}$ is an optimal schedule for $\mathcal{I}^{t-1}$, the cost of $\hat{X}^t$ up to the time slot $t-1$ is greater than or equal to $C(\hat{X}^{t-1})$, i.e., $C(\hat{X}^{t-1}) \leq C_{[1:t-1]}(\hat{X}^{t})$. By using this fact as well as the induction hypothesis and Lemma~\ref{lemma:online:func:ltj}, we get\pagebreak[4]
	\begin{align*}
	\sum_{t'=1}^{t} \sum_{j=1}^d L_{t', j}(X^\mathcal{A}) &\;\alignstack{I.H.}{\leq}\; C(\hat{X}^{t-1}) + \sum_{j=1}^{d} L_{t,j}(X^\mathcal{A}) \\
	&\;\alignstack{L\ref{lemma:online:func:ltj}}{\leq}\; C_{[1:t-1]}(\hat{X}^{t}) + \sum_{j=1}^{d} L_{t,j}(\hat{X}^t) \\
	&\;\leq\; C_{[1:t-1]}(\hat{X}^{t}) + C_{\{t\}}(\hat{X}^t) \\
	&\;\leq\; C(\hat{X}^{t}). \qedhere 
	\end{align*}
\end{proof}

%
So far, we found an upper bound for the load-dependent operating cost of $X^\mathcal{A}$. The following lemma is needed to estimate the switching and idle operating cost of $X^\mathcal{A}$ in Lemma~\ref{lemma:online:func:djisum}.

\begin{lemma} \label{lemma:online:func:dji}
	The switching and idle operating cost of the block $A_{j,i}$ is bounded by
	\begin{equation*}
		H_{j,i} \leq 2 \min \{\beta_j + f_j(0), \; \bar{t}_j \cdot f_j(0) \}.
	\end{equation*}
\end{lemma}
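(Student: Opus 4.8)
The plan is to unfold the definition $H_{j,i} = \beta_j + \bar{t}_j\, f_j(0)$ and the definition $\bar{t}_j = \lceil \beta_j / f_j(0) \rceil$, and then verify the two inequalities hidden in the minimum separately. That is, I would show both
\begin{equation*}
H_{j,i} \leq 2\,\bar{t}_j\, f_j(0)
\qquad\text{and}\qquad
H_{j,i} \leq 2\bigl(\beta_j + f_j(0)\bigr),
\end{equation*}
after which the claimed bound $H_{j,i} \leq 2\min\{\beta_j + f_j(0),\, \bar{t}_j\, f_j(0)\}$ follows immediately by taking the smaller of the two right-hand sides. Throughout I would assume $f_j(0) > 0$, since otherwise an idle server incurs no cost and $\bar{t}_j$ is not well defined (a server would never need to be powered down).

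For the first inequality I would use the lower bound coming from the ceiling: since $\bar{t}_j = \lceil \beta_j / f_j(0) \rceil \geq \beta_j / f_j(0)$, multiplying by $f_j(0)$ gives $\bar{t}_j\, f_j(0) \geq \beta_j$. Substituting this into the definition yields $H_{j,i} = \beta_j + \bar{t}_j\, f_j(0) \leq \bar{t}_j\, f_j(0) + \bar{t}_j\, f_j(0) = 2\,\bar{t}_j\, f_j(0)$, as desired. Intuitively this says that the switching cost $\beta_j$ paid at the start of a block is dominated by the accumulated idle cost over the block, which is exactly the threshold condition that triggers the power-down.

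For the second inequality I would instead use the upper bound from the ceiling: $\bar{t}_j = \lceil \beta_j / f_j(0) \rceil < \beta_j / f_j(0) + 1$, so $\bar{t}_j\, f_j(0) < \beta_j + f_j(0)$. Plugging this in gives $H_{j,i} = \beta_j + \bar{t}_j\, f_j(0) < \beta_j + \beta_j + f_j(0) = 2\beta_j + f_j(0) \leq 2\bigl(\beta_j + f_j(0)\bigr)$, which establishes the second bound. Combining the two estimates completes the proof.

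This lemma is essentially a ski-rental calculation, so there is no real conceptual obstacle; the only care needed is in correctly pairing each of the two ceiling inequalities ($\bar{t}_j \geq \beta_j/f_j(0)$ versus $\bar{t}_j < \beta_j/f_j(0) + 1$) with the matching term of the minimum, and in noting the $f_j(0) > 0$ assumption so that $\bar{t}_j$ and the bounds are meaningful.
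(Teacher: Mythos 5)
Your proof is correct and follows essentially the same route as the paper's: unfold $H_{j,i} = \beta_j + \bar{t}_j f_j(0)$ and pair the two ceiling bounds $\bar{t}_j \geq \beta_j/f_j(0)$ and $\bar{t}_j \leq \beta_j/f_j(0) + 1$ with the two terms of the minimum. The explicit remark that $f_j(0) > 0$ is needed for $\bar{t}_j$ to be well defined is a reasonable addition the paper leaves implicit.
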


\begin{proof}
	By equation~\eqref{eqn:online:func:dtj}, we have $H_{j,i} = \beta_j + \bar{t}_j f_j(0)$. Since 
	\begin{equation*}
	\beta_j \leq \left\lceil\frac{\beta_j}{f_j(0)} \right\rceil \cdot f_j(0) = \bar{t}_j f_j(0),
	\end{equation*} 
	we get $H_{j,i} \leq 2 \bar{t}_j f_j(0) $. 
	Furthermore, due to 
	\begin{equation*}
	\bar{t}_j  f_j(0) \leq \left(\frac{\beta_j}{f_j(0)} + 1 \right) \cdot  f_j(0)  = \beta_j + f_j(0),
	\end{equation*} 
	we get $H_{j,i} = \beta_j + \bar{t}_j f_j(0) \leq 2 \beta_j + f_j(0)$. 
	Therefore, the inequality $H_{j,i} \leq 2 \min \{\beta_j + f_j(0), \; \bar{t}_j \cdot f_j(0)\}$ is satisfied. \qedllncs
\end{proof}

The next lemma shows that the switching and idle operating cost of all servers of type $j$ in $X^\mathcal{A}$ is at most two times the total cost of the optimal schedule.

\begin{lemma}  \label{lemma:online:func:djisum}
	For all $j \in [d]$, it following inequality holds
	\begin{equation} \label{eqn:online:func:djisum}
		\sum_{i = 1}^{n_j}  H_{j,i} \leq 2 \cdot C( \hat{X}^T).
	\end{equation}
\end{lemma}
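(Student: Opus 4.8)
The plan is to partition the $n_j$ blocks of type $j$ according to the special time slots $\tau_{j,1}, \dots, \tau_{j,n'_j}$ and to charge each group against the growth of the optimal cost of the shortened instances. Concretely, I would define $B_{j,k} \coloneqq \{i \in [n_j] \mid \tau_{j,k} \in A_{j,i}\}$, the indices of blocks active at $\tau_{j,k}$ (exactly the sets drawn in Figure~\ref{fig:online:tau}). Since consecutive special time slots are at least $\bar{t}_j$ apart while each block has length $\bar{t}_j$, every block contains exactly one $\tau_{j,k}$, so the $B_{j,k}$ partition $[n_j]$ and $\sum_k |B_{j,k}| = n_j$. The key identity I would first establish is $|B_{j,k}| = \hat{x}^{\tau_{j,k}}_{\tau_{j,k},j}$: the number of type-$j$ servers active in $X^\mathcal{A}$ at time $\tau_{j,k}$ equals the number of blocks containing $\tau_{j,k}$, i.e.\ $|B_{j,k}|$, and because $\tau_{j,k}$ is by definition a power-up slot, line~8 of Algorithm~\ref{alg:online:func} has just set $x^\mathcal{A}_{\tau_{j,k},j} = \hat{x}^{\tau_{j,k}}_{\tau_{j,k},j}$. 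Since $H_{j,i} = \beta_j + \bar{t}_j f_j(0)$ does not depend on $i$, the left-hand side of \eqref{eqn:online:func:djisum} equals $n_j(\beta_j + \bar{t}_j f_j(0))$, and by Lemma~\ref{lemma:online:func:dji} it suffices to prove $n_j \cdot \min\{\beta_j + f_j(0), \bar{t}_j f_j(0)\} \leq C(\hat{X}^T)$.

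Next I would prove a local lower bound on the optimum: for each $k$, the type-$j$ switching and idle operating cost incurred by $\hat{X}^{\tau_{j,k}}$ inside the window $W_k \coloneqq [\max(1,\tau_{j,k}-\bar{t}_j+1) : \tau_{j,k}]$ is at least $|B_{j,k}| \cdot \min\{\beta_j + f_j(0), \bar{t}_j f_j(0)\}$. This is a charging argument over the $M \coloneqq \hat{x}^{\tau_{j,k}}_{\tau_{j,k},j} = |B_{j,k}|$ servers that $\hat{X}^{\tau_{j,k}}$ has active at $\tau_{j,k}$. Each such server has a last power-up time $u \leq \tau_{j,k}$: if $u \in W_k$ it is charged its switching cost $\beta_j$ together with its idle cost $f_j(0)$ at slot $\tau_{j,k}$, giving at least $\beta_j + f_j(0)$; otherwise it has been active throughout all $\bar{t}_j$ slots of $W_k$ and is charged $\bar{t}_j f_j(0)$. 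These charges refer to pairwise disjoint cost components (distinct power-ups, distinct active servers per slot), so they sum to a quantity bounded by the true window cost, and each summand is at least $\min\{\beta_j + f_j(0), \bar{t}_j f_j(0)\}$; summing over the $M$ servers yields the claim.

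Finally I would telescope over the shortened optima. The intervals $[\tau_{j,k-1}+1 : \tau_{j,k}]$ are disjoint and, because $\tau_{j,k} - \tau_{j,k-1} \geq \bar{t}_j$, each contains its window $W_k$. Reusing the inequality $C(\hat{X}^{\tau_{j,k-1}}) \leq C_{[1:\tau_{j,k-1}]}(\hat{X}^{\tau_{j,k}})$ (exactly the fact used in the proof of Lemma~\ref{lemma:online:func:ltjsum}, since the restriction of $\hat{X}^{\tau_{j,k}}$ is feasible for $\mathcal{I}^{\tau_{j,k-1}}$ and powering down is free), I obtain
\[
C(\hat{X}^{\tau_{j,k}}) - C(\hat{X}^{\tau_{j,k-1}}) \;\geq\; C_{[\tau_{j,k-1}+1:\tau_{j,k}]}(\hat{X}^{\tau_{j,k}}) \;\geq\; |B_{j,k}| \cdot \min\{\beta_j + f_j(0), \bar{t}_j f_j(0)\},
\]
where the last step restricts the cost to type $j$ and to $W_k$ and applies the local bound. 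Setting $C(\hat{X}^{\tau_{j,0}}) \coloneqq 0$ and summing over $k = 1, \dots, n'_j$ telescopes the left-hand side to $C(\hat{X}^{\tau_{j,n'_j}}) \leq C(\hat{X}^T)$, while the right-hand side sums to $n_j \cdot \min\{\beta_j + f_j(0), \bar{t}_j f_j(0)\}$. Combined with Lemma~\ref{lemma:online:func:dji} this gives $\sum_i H_{j,i} = n_j H_{j,i} \leq 2\,C(\hat{X}^T)$.

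The step I expect to be most delicate is the local charging argument, specifically verifying that the per-server charges of $\beta_j + f_j(0)$ and $\bar{t}_j f_j(0)$ genuinely refer to disjoint parts of the window cost (so that their sum is a valid lower bound and the constant stays at $2$ rather than degrading), and handling the boundary window $W_1$ where $\tau_{j,1} < \bar{t}_j$ forces every active server into the switching case. The remainder is bookkeeping once the identity $|B_{j,k}| = \hat{x}^{\tau_{j,k}}_{\tau_{j,k},j}$ and the disjointness of the intervals $[\tau_{j,k-1}+1:\tau_{j,k}]$ are in place.
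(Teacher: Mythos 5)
Your proposal is correct and follows essentially the same route as the paper's proof: the same partition of blocks into the sets $B_{j,k}$, the same identity $|B_{j,k}| = \hat{x}^{\tau_{j,k}}_{\tau_{j,k},j}$, the same charging argument lower-bounding the optimum's cost on each interval between consecutive special time slots by $|B_{j,k}|\cdot\min\{\beta_j+f_j(0),\,\bar{t}_j f_j(0)\}$, and the same application of Lemma~\ref{lemma:online:func:dji}. The only cosmetic differences are that you phrase the paper's induction as a telescoping sum over the increments $C(\hat{X}^{\tau_{j,k}})-C(\hat{X}^{\tau_{j,k-1}})$ and restrict the charging to a length-$\bar{t}_j$ window $W_k$ rather than the full interval $[\tau_{j,k-1}+1:\tau_{j,k}]$, neither of which changes the argument.
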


\begin{proof}
	Let $B_{j,k} \coloneqq \{i \in [n_j] \mid A_{j,i} \ni \tau_{j,k}\}$ with $k \in [n'_j]$ be the indices of the blocks containing the time slot $\tau_{j,k}$ (see Figure~\ref{fig:online:tau}). As already mentioned, each block $A_{j,i}$ contains exactly one time slot $\tau_{j,k}$, so $\bigcup_{k\in[n'_j]} B_{j,k} = [n_j]$ and $B_{j,k} \cap B_{j,k'} = \emptyset$ for $k \not = k'$. Therefore, $\sum_{i = 1}^{n_j}  H_{j,i} = \sum_{k=1}^{n'_j} \sum_{i \in B_{j,k}} H_{j,i}$. 
	
	We will prove equation~\eqref{eqn:online:func:djisum} by induction. To simplify the notation, let $\tau_{j,0} \coloneqq 0$. 
	We will show that 
	\begin{equation} \label{eqn:online:func:djisum:induction}
	\sum_{k=1}^{n} \; \sum_{i \in B_{j,k}}H_{j,i} \leq 2 C(\hat{X}^{\tau_{j,n}})
	\end{equation} 
	holds for all $n \in [n'_j]_0$. For $n = 0$, the inequality is obviously satisfied (the sum is empty and $\hat{X}^0$ is an empty schedule with zero cost). Assume that inequality~\eqref{eqn:online:func:djisum:induction} holds for $n-1$, i.e., $\sum_{k=1}^{n-1} \sum_{i \in B_{j,k}}H_{j,i} \leq 2 C(\hat{X}^{\tau_{j,n-1}})$.

	For $I \subseteq  [T]$, let 
	\begin{equation*}
	C_I(X) \coloneqq \sum_{t \in I} \left( g_t(\vec{x}_t) + \sum_{j=1}^{d} \beta_j (x_{t,j} - x_{t-1,j})^+ \right)
	\end{equation*}
	denote the cost of $X$ during the time interval $I$. 
	We begin from the left-hand side of equation~\eqref{eqn:online:func:djisum:induction}, use our induction hypothesis and get
	\begin{align*}\sum_{k=1}^{n} \; \sum_{i \in B_{j,k}} H_{j,i} \,\alignstack{IH}{\leq}{}\,& 2 C(\hat{X}^{\tau_{j,n-1}}) + \sum_{i \in B_{j,n}} H_{j,i} 
	\\
	\,{}\leq{}\,
	& 
	2C_{[1:\tau_{j,n-1}]}(\hat{X}^{\tau_{j,n}}) + \sum_{i \in B_{j,n}} H_{j,i} \numberthis\label{eqn:online:func:djisum:a}
	\end{align*} 
	The last inequality holds because $\hat{X}^{\tau_{j,n-1}}$ is an optimal schedule for $\mathcal{I}^{\tau_{j,n-1}}$, so $C(\hat{X}^{\tau_{j,n-1}}) \leq C_{[1:\tau_{j,n-1}]}(\hat{X}^{\tau_{j,n}})$. 
	
	By the definition of $\tau_{j,k}$, at time $t \coloneqq \tau_{j,n}$ at least one server of type $j$ is powered up, so 
	\begin{equation}\label{eqn:online:func:dtjsum:mjn}
	\hat{x}^t_{t,j} = x^\mathcal{A}_{t,j} = |B_{j,n}|.
	\end{equation}
	Furthermore, the cost of $\hat{X}^t$ during the time interval $I \coloneqq [\tau_{j,n-1} + 1: \tau_{j,n}]$ is at least 
	\begin{equation} \label{eqn:online:func:djisum:optcostinterval}
	C_{I}(\hat{X}^{\tau_{j,n}}) \geq \hat{x}^t_{t,j} \cdot \min \{\beta_j + f_j(0), f_j(0) \cdot \bar{t}_j\}
	\end{equation}
	because each server of type $j$ that is active at time slot $t$ was powered up during the time interval~$I$ (so there is the switching cost $\beta_j$ as well as the operating cost for at least one time slot) or it was powered up before $I$, so it was active for  $|I| = \tau_{j,n} - \tau_{j,n-1} \geq \bar{t}_j$ time slots. Since $f_j$ is an increasing function, the operating cost is at least $f_j(0)$ for each time slot. 
	
	By using Lemma~\ref{lemma:online:func:dji} and the equations~\eqref{eqn:online:func:dtjsum:mjn} and~\eqref{eqn:online:func:djisum:optcostinterval}, we can transform the term~\eqref{eqn:online:func:djisum:a} to
	\begin{align*}
		 2C_{[1:\tau_{j,n-1}]}(\hat{X}^{\tau_{j,n}}) + \sum_{i \in B_{j,n}} H_{j,i} 
		 \;\alignstack{L\ref{lemma:online:func:dji}}{{}\leq{}}\;& 	2C_{[1:\tau_{j,n-1}]}(\hat{X}^{\tau_{j,n}}) + |B_{j,n}| \cdot 2 \min \{\beta_j + f_j(0), f_j(0) \cdot \bar{t}_j \} \\
		\;\alignstack{\eqref{eqn:online:func:dtjsum:mjn}}{{}\leq{}}\;& 2C_{[1:\tau_{j,n-1}]}(\hat{X}^{\tau_{j,n}}) + 2 \hat{x}^t_{t,j} \min \{\beta_j + f_j(0), f_j(0) \cdot \bar{t}_j \} \\
		\;\alignstack{\eqref{eqn:online:func:djisum:optcostinterval}}{{}\leq{}}\;& 2C_{[1:\tau_{j,n-1}]}(\hat{X}^{\tau_{j,n}}) + 2C_{[\tau_{j,n-1} + 1 : \tau_{j,n}]}(\hat{X}^{\tau_{j,n}}) \\
		\;{}\leq{}\; & 2C(\hat{X}^{\tau_{j,n}}).
	\end{align*}
	Therefore, equation~\eqref{eqn:online:func:djisum:induction} is satisfied for all $n \in [n'_j]_0$. For $n = n'_j$, we get 
	\begin{equation*}
	\sum_{i = 1}^{n_j}  H_{j,i} = \sum_{k=1}^{n'_j} \; \sum_{i \in B_{j,k}} H_{j,i} \leq 2C(\hat{X}^{\tau_j,n'_j}) \leq 2 C(\hat{X}^T). \qedhere 
	\end{equation*}
\end{proof}

Now, we are able to prove the competitive ratio of algorithm~$\mathcal{A}$.

\begin{theorem} \label{theo:online:func}
	Algorithm~$\mathcal{A}$ is $(2d+1)$-competitive.
\end{theorem}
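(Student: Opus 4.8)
The plan is to assemble the bounds already established in Lemmas~\ref{lemma:online:func:ltjsum} and~\ref{lemma:online:func:djisum}, after decomposing the total cost $C(X^\mathcal{A})$ into its load-dependent part and its switching-plus-idle part. First I would record the cost decomposition. For any schedule $X$, combining the definition of $g_t$ with that of $L_{t,j}$ in equation~\eqref{eqn:online:func:ltj}, the operating cost at each time slot splits as $g_t(\vec{x}_t) = \sum_{j=1}^d x_{t,j} f_j(0) + \sum_{j=1}^d L_{t,j}(X)$, so that summing over $t$ and adding the switching cost gives
\begin{equation*}
C(X^\mathcal{A}) = \sum_{t=1}^{T} \sum_{j=1}^{d} L_{t,j}(X^\mathcal{A}) + \left( \text{total switching and idle operating cost of } X^\mathcal{A} \right).
\end{equation*}

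Next, I would bound the second summand. Every powering-up of a server of type $j$ opens a block $A_{j,i}$ whose switching and idle operating cost is at most $H_{j,i}$ (with equality unless two blocks abut, in which case the block sum still dominates, as noted in the footnote to equation~\eqref{eqn:online:func:dtj}). Hence the total switching and idle operating cost is at most $\sum_{j=1}^{d} \sum_{i=1}^{n_j} H_{j,i}$. Now I apply the two key lemmas: Lemma~\ref{lemma:online:func:ltjsum} bounds the load-dependent term by $C(\hat{X}^T)$, and Lemma~\ref{lemma:online:func:djisum}, summed over the $d$ server types, bounds the block sum by $\sum_{j=1}^{d} 2 C(\hat{X}^T) = 2d \cdot C(\hat{X}^T)$. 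Combining these yields
\begin{equation*}
C(X^\mathcal{A}) \leq C(\hat{X}^T) + 2d \cdot C(\hat{X}^T) = (2d+1)\, C(\hat{X}^T).
\end{equation*}

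Finally, since $\hat{X}^T$ is by definition an optimal schedule for the full instance $\mathcal{I}^T = \mathcal{I}$, the quantity $C(\hat{X}^T)$ is exactly the optimal cost, so the chain above establishes the competitive ratio $2d+1$. There is essentially no remaining obstacle: all the real work sits inside the two cited lemmas, and the theorem itself is pure bookkeeping. The only point demanding a little care is the cost decomposition — one must check that the idle cost $\sum_{j} x_{t,j} f_j(0)$ peeled off from the operating cost is precisely the idle cost already accounted for inside the $H_{j,i}$ terms, so that every unit of idle operating cost is counted exactly once (and bounded through Lemma~\ref{lemma:online:func:djisum}) rather than double-counted against the load-dependent bound of Lemma~\ref{lemma:online:func:ltjsum}.
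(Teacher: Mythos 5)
Your proposal is correct and follows essentially the same route as the paper: decompose $C(X^\mathcal{A})$ into the load-dependent operating cost and the switching-plus-idle cost, bound the former by $C(\hat{X}^T)$ via Lemma~\ref{lemma:online:func:ltjsum} and the latter by $2d\cdot C(\hat{X}^T)$ via Lemma~\ref{lemma:online:func:djisum} summed over the $d$ server types. Your added remark that the blockwise sum $\sum_{j,i} H_{j,i}$ only upper-bounds (rather than equals) the switching-plus-idle cost when blocks abut is a small point of extra care consistent with the paper's footnote, and does not change the argument.
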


\begin{proof}
	The total cost of $X^\mathcal{A}$ is the switching and idle operating cost given by  $\sum_{j=1}^{d} \sum_{i=1}^{n_j} H_{j,i}$ plus the load-dependent operating cost given by $\sum_{t=1}^{T} \sum_{j=1}^{d} L_{t,j}(X^\mathcal{A})$. By using Lemmas~\ref{lemma:online:func:djisum} and~\ref{lemma:online:func:ltjsum}, we get
	\begin{align*}
	C(X^\mathcal{A}) 
	&\;=\;
	\sum_{j=1}^{d} \sum_{i=1}^{n_j} H_{j,i} + \sum_{t=1}^{T} \sum_{j=1}^{d} L_{t,j}(X^\mathcal{A}) \\
	&\;\alignstack{\substack{\text{\scriptsize L\ref{lemma:online:func:djisum}}\\  \text{\scriptsize L\ref{lemma:online:func:ltjsum}}}}{\leq}\; \sum_{j=1}^{d} 2 \cdot C( \hat{X}^T) + C(\hat{X}^T) \\
	&\;=\;
	 (2d + 1) \cdot C(\hat{X}^T) .
	\end{align*}
	The schedule $\hat{X}^T$ is optimal for the problem instance $\mathcal{I}$, so algorithm~$\mathcal{A}$ is $(2d+1)$-competitive. \qedllncs
\end{proof}


If the operating costs are load independent, i.e., $f_j(z) = l_j = \text{const}$ for all $j \in [d]$, then the load-dependent operating cost $L_{t,j}(X^\mathcal{A})$ is always zero. Thus, the competitive ratio of algorithm~$\mathcal{A}$ is $2d$, so it matches the lower bound given in~\cite{AlbersQuedenfeld2021ciac}. In contrast to the deterministic $2d$-competitive online algorithm presented in~\cite{AlbersQuedenfeld2021ciac}, our algorithm can handle inefficient server types, which were excluded in~\cite{AlbersQuedenfeld2021ciac}.

\begin{corollary}
	If the operating cost functions are load- and time-independent, algorithm~$\mathcal{A}$ achieves an optimal competitive ratio of~$2d$.
\end{corollary}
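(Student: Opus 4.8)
The plan is to specialize the proof of Theorem~\ref{theo:online:func} to the load- and time-independent setting and then invoke the known matching lower bound. The single structural fact driving everything is that the load-dependent operating cost vanishes identically once each $f_j$ is constant, so the entire machinery built around $L_{t,j}$ simply drops out of the cost accounting.

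First I would observe that for $f_j(z) = l_j$ the term defining $L_{t,j}$ in equation~\eqref{eqn:online:func:ltj} collapses: since $f_j\bigl(\lambda_t z_{t,j}/x_{t,j}\bigr) = l_j = f_j(0)$ regardless of the load, the bracketed difference is zero, and hence $L_{t,j}(X^\mathcal{A}) = 0$ for every $t \in [T]$ and $j \in [d]$. A small point worth checking is that this holds independently of which minimizer $(z_{t,1},\dots,z_{t,d}) \in \mathcal{Z}$ is selected by the $\argmin$: because $g_{t,j}$ no longer depends on $z_j$ at all, the minimizer may be non-unique, but the corresponding value of $L_{t,j}$ is zero in every case.

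Next I would re-run the final cost decomposition from the proof of Theorem~\ref{theo:online:func}. Writing $C(X^\mathcal{A})$ as the switching-and-idle cost $\sum_{j=1}^d \sum_{i=1}^{n_j} H_{j,i}$ plus the load-dependent cost $\sum_{t=1}^T \sum_{j=1}^d L_{t,j}(X^\mathcal{A})$, the second summand is now identically zero, so only Lemma~\ref{lemma:online:func:djisum} is needed (Lemma~\ref{lemma:online:func:ltjsum}, which bounded the load-dependent part, is no longer invoked). Applying Lemma~\ref{lemma:online:func:djisum} to the remaining term yields
\begin{equation*}
C(X^\mathcal{A}) = \sum_{j=1}^d \sum_{i=1}^{n_j} H_{j,i} \leq \sum_{j=1}^d 2\,C(\hat{X}^T) = 2d \cdot C(\hat{X}^T),
\end{equation*}
so algorithm~$\mathcal{A}$ is $2d$-competitive.

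Finally, optimality follows by matching this upper bound against the lower bound of $2d$ established in~\cite{AlbersQuedenfeld2021ciac} for exactly this class of instances. There is no genuinely hard step: the result is a clean specialization of the main theorem, and the only place that warrants a little care is verifying that $L_{t,j}$ truly vanishes for \emph{every} admissible job assignment rather than merely for some convenient minimizer. The optimality claim itself rests entirely on the cited lower bound, so the corollary is exactly as tight as that bound permits.
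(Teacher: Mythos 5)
Your proposal is correct and matches the paper's own argument: the paper likewise notes that $L_{t,j}(X^\mathcal{A})$ vanishes when $f_j$ is constant, reduces the cost to the switching-and-idle term bounded by Lemma~\ref{lemma:online:func:djisum}, and invokes the lower bound of $2d$ from~\cite{AlbersQuedenfeld2021ciac} for optimality. Your extra remark about the non-uniqueness of the minimizer is a harmless refinement the paper leaves implicit.
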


	\section{Online Algorithm for time-\emph{dependent} operating cost functions}
	\label{sec:online:time}

In this section, we present a modified version of algorithm~$\mathcal{A}$ that is able to handle time-dependent operating cost functions $f_{t,j}$ and achieves a competitive ratio of $2d + 1 + \epsilon$ for any $\epsilon > 0$. The proof is divided into two parts.
First, as an intermediate result we introduce algorithm~$\mathcal{B}$ that is $\left(2d + 1 + \sum_{j = 1}^{d}\max_{t \in [T]} \frac{f_{t,j}(0)}{\beta_j}\right)$-competitive. 
Then, in Subsection~\ref{sec:online:time:epsilon} 
we show how the given problem instance $\mathcal{I}$ can be modified to make the constant $c(\mathcal{I}) \coloneqq \sum_{j = 1}^{d}\max_{t \in [T]} \frac{f_{t,j}(0)}{\beta_j}$ arbitrarily small. Finally, the resulting schedule is adapted to the original problem instance without increasing its cost. 

\subsection{Obtaining a competitive ratio of $2d + 1 + c(\mathcal{I})$}
\label{sec:online:time:ci}

%

To handle time-dependent operating cost functions, algorithm~$\mathcal{A}$ has to be modified, as the idle operating cost $f_{t,j}(0)$ is no longer constant over time. 
Similar to algorithm~$\mathcal{A}$, in algorithm~$\mathcal{B}$ a server is powered down when its accumulated idle operating cost $f_{t,j}(0)$ exceeds its switching cost. Formally, 
let $l_{t,j} \coloneqq f_{t,j}(0)$ be the idle operating cost of server type $j$ during time slot $t$ and let 
\begin{equation*}
	\bar{t}_{t,j} \coloneqq \max \left\{\bar{t} \in [T-t] \mid \sum_{u = t+1}^{t+\bar{t} } l_{u,j} \leq \beta_j\right\}
\end{equation*} 
be the maximal number of time slots such that the sum of the idle operating costs beginning from time slot $t+1$ is smaller than or equal to $\beta_j$. 
%
A server that is powered up at time slot $t$ runs for $\bar{t}_{t,j}$ \emph{further} time slots, i.e., it is powered down at time slot $t + \bar{t}_{t,j}$. This definition differs from $\bar{t}_j$ in algorithm~$\mathcal{A}$ where a server is powered down at $t + \bar{t}_j - 1$. Note that the idle operating cost at time slot $t$ does not influence the runtime of a server. 
The power-up policy of algorithm~$\mathcal{B}$ is the same as in algorithm~$\mathcal{A}$, i.e., it is always ensured that the number of active servers of type $j$ is at least as large as the corresponding number in an optimal schedule for the problem instance that ends at the current time slot. Formally, $x^\mathcal{B}_{t,j} \geq \hat{x}^t_{t,j}$ holds for all $t \in [T]$ and $j \in [d]$.

In contrast to algorithm~$\mathcal{A}$, the runtime of a server is not known when it is powered up, because the future operating cost functions did not arrive yet, so $\bar{t}_{t,j}$ cannot be calculated at this time. However, the runtime is known at the time slot when the server must be powered down, so $\mathcal{B}$ is a valid online algorithm. 
The pseudocode below clarifies how algorithm~$\mathcal{B}$ works. Note that only lines~5 and~6 change in comparison to algorithm~$\mathcal{A}$. The set $W_t$ defined in line~5 contains all time slots $u$ with $u + \bar{t}_{u,j} + 1= t$. Servers that were powered up at time slot $u$ are shut down at time slot $t$. Figure~\ref{fig:online:time} visualizes the definition of $\bar{t}_{t,j}$ and $W_t$ and shows an example of how algorithm~$\mathcal{B}$ operates. 

\ifincludefigures
\input{image_onlineAlgoTime.tex}
\fi

\begin{algorithm}[H] \label{alg:online:time}
	\caption{Algorithm $\mathcal{B}$}
	\begin{algorithmic}[1]
		\State $w_{t,j} \coloneqq 0$ for all $t \in \mathbb{Z}$ and $j \in [d]$
		\For{$t \coloneqq 1$ \textbf{to} $T$}
		\State Calculate $\hat{X}^t$
		\For{$j \coloneqq 1$ \textbf{to} $d$}
		\State $W_t \coloneqq \big\{u \in [t-1] | \sum_{v = u+1}^{t-1} l_{v,j} \leq \beta_j < \sum_{v = u+1}^{t} l_{v,j}\big\}$
		\State $x^\mathcal{B}_{t,j} \coloneqq x^\mathcal{B}_{t,j} - \sum_{u \in W_t} w_{u,j}$  
		\If {$x^\mathcal{B}_{t,j} \leq \hat{x}^t_{t,j}$}
		\State $w_{t,j} \coloneqq  \hat{x}^t_{t,j} - x^\mathcal{B}_{t,j}$
		\State $x^\mathcal{B}_{t,j} \coloneqq \hat{x}^t_{t,j}$
		\EndIf
		\EndFor
		\EndFor
	\end{algorithmic}
\end{algorithm}

Before we analyze the competitive ratio of algorithm~$\mathcal{B}$, we have to prove that the calculated schedule $X^\mathcal{B}$ is feasible.

\begin{lemma}\label{lemma:online:time:feasible}
	The schedule $X^\mathcal{B}$ is feasible.
\end{lemma}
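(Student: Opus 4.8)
The plan is to show that the feasibility argument for algorithm~$\mathcal{A}$ (Lemma~\ref{lemma:online:func:feasible}) carries over almost verbatim to algorithm~$\mathcal{B}$, since the two algorithms differ only in the power-down bookkeeping (lines~5 and~6). Recall a schedule is feasible iff (1) $\sum_{j=1}^d x_{t,j} z^\text{max}_j \geq \lambda_t$ and (2) $x_{t,j} \in [m_j]_0$ for all $t \in [T]$, $j \in [d]$. I would split the proof into these two conditions.

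First I would dispatch condition~(1). Algorithm~$\mathcal{B}$ enforces the same power-up invariant as $\mathcal{A}$, namely $x^\mathcal{B}_{t,j} \geq \hat{x}^t_{t,j}$ (guaranteed by lines~7--9). Since $\hat{X}^t$ is a feasible schedule for $\mathcal{I}^t$, we have $\sum_{j=1}^d \hat{x}^t_{t,j} z^\text{max}_j \geq \lambda_t$, and monotonicity gives
\begin{equation*}
\sum_{j=1}^d x^\mathcal{B}_{t,j} z^\text{max}_j \geq \sum_{j=1}^d \hat{x}^t_{t,j} z^\text{max}_j \geq \lambda_t,
\end{equation*}
so (1) holds exactly as before.

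For condition~(2) I would argue the two inequalities separately. The upper bound $x^\mathcal{B}_{t,j} \leq m_j$ follows because the only power-ups happen in line~9, where $x^\mathcal{B}_{t,j}$ is set to $\hat{x}^t_{t,j} \leq m_j$ (feasibility of $\hat{X}^t$). The nonnegativity $x^\mathcal{B}_{t,j} \geq 0$ is where the bookkeeping differs from $\mathcal{A}$ and is the crux of the argument: servers are removed in line~6 via $\sum_{u \in W_t} w_{u,j}$. The key observation is that the sets $W_t$ partition the power-up times, i.e.\ each power-up time $u$ lies in exactly one $W_t$ (the unique $t$ with $\sum_{v=u+1}^{t-1} l_{v,j} \leq \beta_j < \sum_{v=u+1}^{t} l_{v,j}$, which is well-defined and unique because the partial sums of the nonnegative $l_{v,j}$ are monotone in the upper limit). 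Hence each variable $w_{u,j}$ is subtracted exactly once over the whole run, and at the moment it is subtracted the $w_{u,j}$ servers that were added at time $u$ are still present, so the count never goes negative.

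The main obstacle will be making the ``subtracted exactly once'' and ``still present when subtracted'' claims rigorous. For the first, I would note that $W_t$ is defined so that $u\in W_t$ pins down $t$ uniquely as the first time the running idle cost since $u+1$ exceeds $\beta_j$; since the $l_{v,j}\geq 0$, the quantity $\sum_{v=u+1}^{t} l_{v,j}$ is nondecreasing in $t$, so there is at most one $t$ for which the two-sided condition holds (and it equals $u+\bar{t}_{u,j}+1$ by the definition of $\bar{t}_{u,j}$). For the second, I would observe that the servers powered up at $u$ stay active through all of $[u:u+\bar{t}_{u,j}]$ and are only removed at $t=u+\bar{t}_{u,j}+1$, so the block $w_{u,j}$ contributing to $x^\mathcal{B}_{t,j}$ has not been decremented before; consequently the running total minus the single decrement remains $\geq 0$. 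Once both claims are in place, condition~(2) follows and the schedule is feasible. \qedllncs
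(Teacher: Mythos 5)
Your proposal is correct and follows essentially the same route as the paper: condition (1) via the invariant $x^\mathcal{B}_{t,j} \geq \hat{x}^t_{t,j}$ and feasibility of $\hat{X}^t$, the upper bound via line~9, and nonnegativity by showing each $w_{u,j}$ is subtracted at most once. The paper phrases that last step as $W_t \cap W_{t'} = \emptyset$ for $t < t'$ and verifies it by an explicit set computation, whereas you derive the uniqueness of $t$ directly from the monotonicity of the partial sums $\sum_{v=u+1}^{t} l_{v,j}$ — the same underlying argument.
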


\begin{proof}
	A schedule is feasible, if (1) $\sum_{j=1}^d x_{t,j} z^\text{max}_j \linebreak[0] \geq \lambda_t$ and (2) $x_{t,j} \in [m_j]_0$ holds for all $t \in [T]$ and $j \in [d]$.
	Analogously to algorithm~$\mathcal{A}$, it is always ensured that $x^\mathcal{B}_{t,j} \geq \hat{x}^t_{t,j}$. Since $\hat{X}^t$ is a feasible schedule, we get
	\begin{equation*}
	\sum_{j=1}^d x^\mathcal{B}_{t,j} z^\text{max}_j \geq \sum_{j=1}^d \hat{x}^t_{t,j} z^\text{max}_j \geq \lambda_t ,
	\end{equation*}
	
	so property (1) is satisfied.
	
	Servers are powered up only in line~9. Since $\hat{X}^t$ is feasible, $x^\mathcal{B}_{t,j} \leq m_j$ is always satisfied. Servers are powered down only in line~6. To ensure that $x^\mathcal{B}_{t,j}$ never gets negative, we have to show that each variable $w_{t,j}$ is accessed at most one time. This is equivalent to $W_t \cap W_{t'} = \emptyset$ for all $t < t'$. 
	
	For $t < t'$, it holds 
	\begin{align*}
	W_t \cap W_{t'} &= \left\{u \in [t-1] | \sum_{v = u+1}^{t-1} l_{v,j} \leq \beta_j < \sum_{v = u+1}^{t} l_{v,j}\right\} 
	\cap \left\{u \in  [t'-1] | \sum_{v = u+1}^{t'-1} l_{v,j} \leq \beta_j < \sum_{v = u+1}^{t'} l_{v,j}\right\} \\
	&\subseteq \left\{u \in [t'-1] | \sum_{v = u+1}^{t-1} l_{v,j} \leq \beta_j < \sum_{v = u+1}^{t} l_{v,j}\right\} 
	\cap \left\{u \in  [t'-1] | \sum_{v = u+1}^{t'-1} l_{v,j} \leq \beta_j < \sum_{v = u+1}^{t'} l_{v,j}\right\} \\
	&= \left\{u \in [t'-1] | \left(\sum_{v = u+1}^{t-1} l_{v,j} \leq \beta_j < \sum_{v = u+1}^{t} l_{v,j} \right) 
	\land \left(\sum_{v = u+1}^{t'-1} l_{v,j} \leq \beta_j < \sum_{v = u+1}^{t'} l_{v,j}\right)\right\} \\
	&\subseteq \left\{u \in [t'-1] | \beta_j < \sum_{v = u+1}^{t} l_{v,j} \leq \sum_{v = u+1}^{t'-1} l_{v,j} \leq \beta_j \right\} \\
	&= \emptyset
	\end{align*}
	In the first step, we just insert the definition of $W_t$. Then, the first set is expanded by the elements $u \in [t : t'-1]$. In the third step, both sets are connected. Afterward, we use the fact that $t \leq t' - 1$. The resulting set must be empty, because the condition $\beta_j < \dots \leq \beta_j$ is never satisfied. Therefore, property (2) holds. \qedllncs
\end{proof}


 The analysis of the competitive ratio of algorithm~$\mathcal{B}$ is quite similar to that of algorithm~$\mathcal{A}$. Let 
\begin{equation*}
	L_{t,j}(X) \coloneqq x_{t,j}\left( f_{t,j}\left( \frac{\lambda_t z_{t,j}}{x_{t,j}}\right) - l_{t,j} \right)
\end{equation*}
denote the load-dependent operating cost of $X$. Lemmas~\ref{lemma:online:func:equally} and~\ref{lemma:online:func:ltj} still hold, since in their proofs we can simply replace $f_j$ with $f_{t,j}$. Lemma~\ref{lemma:online:func:ltjsum} directly follows from Lemma~\ref{lemma:online:func:ltj}, so it also remains applicable.

The schedule $X^\mathcal{B}$ is divided into blocks $A_{j,i} \coloneqq [s_{j,i} : s_{j,i} + \bar{t}_{t,j}]$ (the definition of $s_{j,i}$ remains the same). The switching and idle operating cost of a block $A_{j,i}$ is at most
\begin{equation} \label{eqn:online:time:hji}
	H_{j,i} \coloneqq \beta_j + \sum_{u = s}^{s + \bar{t}_{s,j}} l_{u,j}
\end{equation}
with $s = s_{j,i}$. The special time slots $\tau_{j,k}$ are defined in the same way as in the previous section. Formally, they are given by $\tau_{j,n'_j} \coloneqq s_{j,n_j}$ and $\tau_{j,k} \coloneqq \max \{s_{j,i} \mid i \in [n_j], s_{j,i} + \bar{t}_{s_{j,i}, j} < \tau_{j,k+1} \}$ for $1 \leq k < n'_j$ as well as $\tau_{j,0} \coloneqq 0$. The definition of the index sets $B_{j,k} = \{i \in [n_j] \mid A_{j,i} \ni \tau_{j,k}\}$ do not change. The following lemma replaces Lemma~\ref{lemma:online:func:dji} and gives an upper bound for $H_{j,i}$. 

\begin{lemma}\label{lemma:online:time:dji}
	The switching and idle operating cost of $A_{j,i}$ is at most
	\begin{equation*}
	H_{j,i} \leq 2 \beta_j + \max_{t \in [T]} l_{t,j}.
	\end{equation*}
\end{lemma}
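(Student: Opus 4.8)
The plan is to unwind the definition of $H_{j,i}$ and exploit the defining property of the runtime $\bar{t}_{s,j}$. Write $s \coloneqq s_{j,i}$ for the time slot at which the block $A_{j,i}$ begins. By equation~\eqref{eqn:online:time:hji} we have $H_{j,i} = \beta_j + \sum_{u=s}^{s+\bar{t}_{s,j}} l_{u,j}$, so the whole task reduces to bounding the sum of idle operating costs over the block by $\beta_j + \max_{t\in[T]} l_{t,j}$.

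First I would split off the first summand, writing $\sum_{u=s}^{s+\bar{t}_{s,j}} l_{u,j} = l_{s,j} + \sum_{u=s+1}^{s+\bar{t}_{s,j}} l_{u,j}$. The key step is that the second sum runs exactly over the index range that the definition of $\bar{t}_{s,j}$ controls: since $\bar{t}_{s,j} = \max\{\bar{t}\in[T-s] \mid \sum_{u=s+1}^{s+\bar{t}} l_{u,j} \le \beta_j\}$, the maximizing choice $\bar{t}=\bar{t}_{s,j}$ itself satisfies the constraint, giving $\sum_{u=s+1}^{s+\bar{t}_{s,j}} l_{u,j} \le \beta_j$. The remaining single term is bounded trivially by $l_{s,j} \le \max_{t\in[T]} l_{t,j}$. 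Combining,
\begin{equation*}
H_{j,i} = \beta_j + l_{s,j} + \sum_{u=s+1}^{s+\bar{t}_{s,j}} l_{u,j} \;\le\; \beta_j + \max_{t\in[T]} l_{t,j} + \beta_j \;=\; 2\beta_j + \max_{t\in[T]} l_{t,j}.
\end{equation*}

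The only subtle point — and what I would flag as the one place to be careful rather than a genuine obstacle — is the off-by-one in the runtime definition: $\bar{t}_{s,j}$ constrains only the costs \emph{strictly after} $s$ (the sum starts at $u=s+1$), whereas the block-cost $H_{j,i}$ includes the start-slot cost $l_{s,j}$ as well (the sum starts at $u=s$). This is exactly why the bound reads $2\beta_j + \max_t l_{t,j}$ rather than $2\beta_j$: the cost incurred during the power-up slot itself is not governed by the $\beta_j$-budget and must be absorbed separately via the $\max_t l_{t,j}$ term. Apart from this bookkeeping, the argument is a direct two-line estimate and no induction or auxiliary lemma is required.
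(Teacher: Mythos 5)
Your proof is correct and follows exactly the same route as the paper's: split off the start-slot cost $l_{s,j}$, bound the remaining sum $\sum_{u=s+1}^{s+\bar{t}_{s,j}} l_{u,j}$ by $\beta_j$ via the defining property of $\bar{t}_{s,j}$, and absorb $l_{s,j}$ into $\max_{t\in[T]} l_{t,j}$. Your explicit remark about the off-by-one between the block sum (starting at $u=s$) and the runtime constraint (starting at $u=s+1$) is a helpful clarification of why the extra $\max_t l_{t,j}$ term appears, but the argument itself is identical to the paper's.
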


\begin{proof}
	Let $s \coloneqq s_{j,i}$. By the definition of $\bar{t}_{s,j}$, we know that $\sum_{u = s + 1}^{s + \bar{t}_{s,j}} \leq \beta_j$. We use this inequality in equation~\ref{eqn:online:time:hji} and get $H_{j,i} = \beta_j + \sum_{u = s}^{s + \bar{t}_{s,j}} l_{u,j} \leq  2\beta_j + l_{s,j} \leq 2 \beta_j + \max_{t \in [T]} l_{t,j}$. \qedllncs
\end{proof}

The next lemma replaces Lemma~\ref{lemma:online:func:djisum} and shows that the switching and idle operating costs caused by server type $j$ are at most $2 + \max_{t \in [T]} { l_{t,j}}/{\beta_j}$ times larger than the total cost of an optimal schedule.

\begin{lemma}\label{lemma:online:time:djisum}
	For all $j \in [d]$, it holds
	\begin{equation} \label{eqn:online:time:djisum}
	\sum_{i = 1}^{n_j}  H_{j,i} \leq \left(2 + \max_{t \in [T]} \frac{ l_{t,j}}{\beta_j}\right) \cdot C( \hat{X}^T).
	\end{equation}
\end{lemma}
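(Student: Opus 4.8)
The plan is to follow the same induction-over-blocks scheme used in the proof of Lemma~\ref{lemma:online:func:djisum}, replacing only the per-block cost estimate by the new bound from Lemma~\ref{lemma:online:time:dji}. As there, I would first group the blocks $A_{j,i}$ according to which special time slot they contain: with $B_{j,k} := \{i \in [n_j] \mid A_{j,i} \ni \tau_{j,k}\}$, the fact that every block contains exactly one $\tau_{j,k}$ gives a disjoint cover $\bigcup_k B_{j,k} = [n_j]$, so that $\sum_{i=1}^{n_j} H_{j,i} = \sum_{k=1}^{n'_j}\sum_{i \in B_{j,k}} H_{j,i}$. I would then prove, by induction on $n \in [n'_j]_0$, the statement
\begin{equation*}
\sum_{k=1}^{n}\sum_{i \in B_{j,k}} H_{j,i} \;\leq\; \left(2 + \max_{t\in[T]}\frac{l_{t,j}}{\beta_j}\right) C(\hat{X}^{\tau_{j,n}}),
\end{equation*}
with the trivial base case $n=0$ (empty sum, $\hat{X}^0$ has zero cost), and finally set $n = n'_j$ and use $\tau_{j,n'_j} \le T$ to obtain the claim.

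For the inductive step I would fix $t := \tau_{j,n}$, $I := [\tau_{j,n-1}+1 : \tau_{j,n}]$, abbreviate $\gamma := 2 + \max_{t} l_{t,j}/\beta_j$, and proceed exactly as before: apply the induction hypothesis, use optimality of $\hat{X}^{\tau_{j,n-1}}$ to replace $C(\hat{X}^{\tau_{j,n-1}})$ by $C_{[1:\tau_{j,n-1}]}(\hat{X}^{\tau_{j,n}})$, and recall that at $t$ a server is powered up, so that $\hat{x}^t_{t,j} = x^\mathcal{B}_{t,j} = |B_{j,n}|$ (the analog of~\eqref{eqn:online:func:dtjsum:mjn}). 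The one genuinely new ingredient is the interval lower bound
\begin{equation*}
C_I(\hat{X}^{\tau_{j,n}}) \;\geq\; \hat{x}^t_{t,j}\cdot \beta_j .
\end{equation*}
I would establish it by the same per-server decomposition used in Lemma~\ref{lemma:online:func:djisum}: each of the $\hat{x}^t_{t,j}$ servers of type $j$ active at $t$ either has its most recent power-up inside $I$, contributing switching cost $\beta_j$ within $I$, or was powered up before $I$ and is therefore active throughout $I$, contributing idle operating cost $\sum_{u \in I} l_{u,j}$.

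The crux is to show that this idle cost already exceeds $\beta_j$, which is where the definition of $\bar{t}_{s,j}$ enters. Writing $s := \tau_{j,n-1}$, the reverse-time construction of the $\tau_{j,k}$ guarantees $s + \bar{t}_{s,j} < \tau_{j,n}$, hence $I \supseteq [s+1 : s + \bar{t}_{s,j} + 1]$; and by maximality of $\bar{t}_{s,j}$ in its defining set, $\sum_{u = s+1}^{s + \bar{t}_{s,j} + 1} l_{u,j} > \beta_j$, so $\sum_{u \in I} l_{u,j} > \beta_j$ since the $l_{u,j}$ are non-negative. Thus every such server contributes at least $\beta_j$ over $I$, giving the interval bound. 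Finally, the competitive factor $\gamma$ appears through the algebraic identity $\gamma \beta_j = 2\beta_j + \max_t l_{t,j}$: combining Lemma~\ref{lemma:online:time:dji} with $|B_{j,n}| = \hat{x}^t_{t,j}$ yields $\sum_{i \in B_{j,n}} H_{j,i} \le |B_{j,n}|(2\beta_j + \max_t l_{t,j}) = \gamma\,\hat{x}^t_{t,j}\beta_j \le \gamma\, C_I(\hat{X}^{\tau_{j,n}})$, and adding $\gamma\, C_{[1:\tau_{j,n-1}]}(\hat{X}^{\tau_{j,n}})$ closes the induction since $[1:\tau_{j,n-1}] \cup I = [1:\tau_{j,n}]$. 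I expect the main obstacle to be exactly this reduction to ``$\geq \beta_j$ per server'': obtaining the clean bound $\beta_j$ (rather than the looser $2\beta_j + \max_t l_{t,j}$ of Lemma~\ref{lemma:online:time:dji}) is what allows the per-block overestimate to be absorbed into $\gamma$, and it hinges on reading off the spacing $s + \bar{t}_{s,j} < \tau_{j,n}$ correctly and on checking the boundary cases (e.g.\ $n=1$, where $\tau_{j,0}=0$ and all active servers are powered up within $I$, so the idle-cost argument is vacuous).
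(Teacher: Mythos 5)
Your proposal is correct and follows essentially the same route as the paper's proof: the same grouping of blocks by special time slots, the same induction on $n$, the same identity $\hat{x}^t_{t,j}=x^{\mathcal{B}}_{t,j}=|B_{j,n}|$, and the same derivation of the interval bound $C_I(\hat{X}^{\tau_{j,n}})\geq \hat{x}^t_{t,j}\beta_j$ from the maximality of $\bar{t}_{\tau_{j,n-1},j}$. Your explicit treatment of the $n=1$ boundary case (where the switching-cost branch alone suffices) is a small clarification the paper leaves implicit, but otherwise the arguments coincide.
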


\begin{proof}
	This proof works very similar to the proof of Lemma~\ref{lemma:online:func:djisum}. 
	Each block $A_{j,i}$ contains exactly one special time slot $\tau_{j,k}$, so $\sum_{i = 1}^{n_j}  H_{j,i} = \sum_{k=1}^{n'_j} \sum_{i \in B_{j,k}} H_{j,i}$. 	
	We will show by induction that 
	\begin{equation} \label{eqn:online:time:djisum:induction}
	\sum_{k=1}^{n} \; \sum_{i \in B_{j,k}}H_{j,i} \leq (2 + c_j) \cdot C(\hat{X}^{\tau_{j,n}})
	\end{equation} 
	with $c_j \coloneqq \max_{t \in [T]} { l_{t,j}} / {\beta_j}$ holds for all $n \in [n'_j]_0$. For $n = 0$, the inequality is obviously satisfied (the sum is empty and $\hat{X}^0$ is an empty schedule with zero cost). 
	Assume that inequality~\eqref{eqn:online:time:djisum:induction} holds for $n-1$, i.e., $\sum_{k=1}^{n-1} \sum_{i \in B_{j,k}}H_{j,i} \leq (2 + c_j) \cdot C(\hat{X}^{\tau_{j,n-1}})$. %
	\newcommand{\tauLast}{u}%
	\newcommand{\tauNow}{t}%
	To simplify the notation, let $\tauLast \coloneqq \tau_{j,n-1}$ and $\tauNow \coloneqq \tau_{j,n}$. 
	We begin from the left hand side of equation~\eqref{eqn:online:time:djisum:induction}, use our induction hypothesis and get
	\begin{align*}\sum_{k=1}^{n} \; \sum_{i \in B_{j,k}} H_{j,i} &\stackrel{IH}{\leq} (2 + c_j) \cdot  C(\hat{X}^{\tauLast}) + \sum_{i \in B_{j,n}} H_{j,i} \\
	&\leq (2 + c_j) \cdot  C_{[1:\tauLast]}(\hat{X}^{\tauNow}) + \sum_{i \in B_{j,n}} H_{j,i} \numberthis\label{eqn:online:time:djisum:a}
	\end{align*}
	The last inequality holds because $\hat{X}^{\tauLast}$ is an optimal schedule for $\mathcal{I}^{\tauLast}$, so $C(\hat{X}^{\tauLast}) = C_{[1:\tauLast]}(\hat{X}^{\tauLast}) \leq C_{[1:\tauLast]}(\hat{X}^{\tauNow})$. 
	
	By the definition of the special time slots $\tau_{j,k}$, at time $t$ at least one server of type $j$ is powered up, so 
	\begin{equation}\label{eqn:online:time:dtjsum:mjn}
	\hat{x}^t_{t,j} = x^\mathcal{B}_{t,j} = |B_{j,n}|.
	\end{equation}
	Furthermore, the cost of $\hat{X}^t$ during the time interval $I \coloneqq [\tauLast + 1: \tauNow]$ is at least 
	\begin{equation*} 
	C_{I}(\hat{X}^{\tauNow}) \geq \hat{x}^t_{t,j} \cdot \min \left\{\beta_j + l_{\tauNow,j}, \sum_{t' = \tauLast+1}^{\tauNow} l_{t',j} \right\}
	\end{equation*}
	because each server of type $j$ that is active at time slot $t$ was powered up during the time interval~$I$ (so there is the switching cost $\beta_j$ as well as the operating cost for time slot $t$) or it was powered up before $I$, so it was active during the time interval $I$. Since $f_{t,j}$ are increasing functions, the operating costs are at least $f_{t,j}(0) = l_{t,j}$ for each time slot.

	Let $\bar{t} \coloneqq \bar{t}_{u,j}$. By the definition of $\tau_{j,n-1}$ we have $\tauLast + \bar{t} < \tauNow$, so $\tauLast + \bar{t} + 1 \leq \tauNow$ holds. The definition of $\bar{t}_{\tauLast,j}$ implies that $\beta_j < \sum_{t'=\tauLast + 1}^{\tauLast + \bar{t} + 1} l_{t',j}$. Therefore, we get $\beta_j < \sum_{t'=\tauLast + 1}^{\tauNow} l_{t',j}$, so the cost of $\hat{X}^t$ during $I$ is at least
	\begin{equation}\label{eqn:online:time:djisum:optcostinterval}
	C_I\hat{X}^{\tauNow}) \geq \hat{x}^t_{t,j} \cdot \min \left\{\beta_j + l_{\tauNow,j}, \sum_{t' = \tauLast+1}^{\tauNow} l_{t',j} \right\} \geq \hat{x}^t_{t,j} \cdot \beta_j  .
	\end{equation}
	
	By using Lemma~\ref{lemma:online:time:dji} and the equations~\eqref{eqn:online:time:dtjsum:mjn} and~\eqref{eqn:online:time:djisum:optcostinterval}, we can transform the term~\eqref{eqn:online:time:djisum:a} to
	\begin{align*}
	(2 + c_j) \cdot C_{[1:\tauLast]}(\hat{X}^{\tauNow}) + \sum_{i \in B_{j,n}} H_{j,i} 
	\stackrel{L\ref{lemma:online:time:dji}}{{}\leq{}}& 	(2 + c_j) \cdot C_{[1:\tauLast]}(\hat{X}^{\tauNow}) + |B_{j,n}| \cdot \left(2 \beta_j + \max_{t \in [T]} l_{t,j} \right) \\
	\stackrel{\eqref{eqn:online:time:dtjsum:mjn}}{{}\leq{}}& (2 + c_j) \cdot C_{[1:\tauLast]}(\hat{X}^{\tauNow}) + \hat{x}^t_{t,j} \cdot \left(2 \beta_j + \max_{t \in [T]} l_{t,j} \right)  \\
	\stackrel{\eqref{eqn:online:time:djisum:optcostinterval}}{{}\leq{}}& (2 + c_j) \cdot C_{[1:\tauLast]}(\hat{X}^{\tauNow}) + C_{[\tauLast + 1 : \tauNow]}(\hat{X}^{\tauNow}) \cdot \left(2 + \max_{t \in [T]} \frac{l_{t,j}}{\beta_j} \right)\\
	{}={} & (2 + c_j) \cdot C(\hat{X}^{\tauNow}).
	\end{align*}
	The last equality just uses the definition of $c_j$. 
	Therefore, equation~\eqref{eqn:online:time:djisum:induction} is satisfied for all $n \in [n'_j]_0$. For $n = n'_j$, we get 
	\begin{equation*}
	\sum_{i = 1}^{n_j}  H_{j,i} = \sum_{k=1}^{n'_j} \; \sum_{i \in B_{j,k}} H_{j,i} \leq (2 + c_j) \cdot C(\hat{X}^{\tau_j,n'_j}) \leq (2 + c_j) \cdot  C(\hat{X}^T). \qedhere 
	\end{equation*}
\end{proof}

Now, we are able to prove the competitive ratio of algorithm~$\mathcal{B}$.

\begin{theorem} \label{theo:online:time}
	Algorithm~$\mathcal{B}$ is $(2d+1+ c(\mathcal{I}))$-competitive with $c(\mathcal{I}) = \sum_{j = 1}^{d}\max_{t \in [T]} \frac{l_{t,j}}{\beta_j}$. 
\end{theorem}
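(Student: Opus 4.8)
The plan is to mirror the cost-decomposition argument used for Theorem~\ref{theo:online:func}, substituting the time-dependent analogues of the supporting lemmas. First I would split the total cost $C(X^\mathcal{B})$ into two parts exactly as in the time-independent case: the switching and idle operating cost, which is bounded above by $\sum_{j=1}^{d}\sum_{i=1}^{n_j} H_{j,i}$ (with $H_{j,i}$ as defined in equation~\eqref{eqn:online:time:hji}), and the load-dependent operating cost $\sum_{t=1}^{T}\sum_{j=1}^{d} L_{t,j}(X^\mathcal{B})$. The key point legitimizing this step is that each $H_{j,i}$ is an \emph{upper} bound for the switching-plus-idle cost of block $A_{j,i}$, accounting for the overlap of consecutive blocks as in the footnote to equation~\eqref{eqn:online:func:dtj}; hence the decomposition yields a valid upper bound on $C(X^\mathcal{B})$.

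Next I would invoke the two ingredients already established for algorithm~$\mathcal{B}$. Lemma~\ref{lemma:online:func:ltjsum} continues to hold in the time-dependent setting: as noted after its statement, replacing $f_j$ by $f_{t,j}$ in the proof of Lemma~\ref{lemma:online:func:ltj} and re-running the induction gives $\sum_{t=1}^{T}\sum_{j=1}^{d} L_{t,j}(X^\mathcal{B}) \le C(\hat{X}^T)$. For the switching-and-idle part, Lemma~\ref{lemma:online:time:djisum} supplies, for each server type $j$, the per-type bound $\sum_{i=1}^{n_j} H_{j,i} \le \bigl(2 + \max_{t\in[T]} l_{t,j}/\beta_j\bigr)\,C(\hat{X}^T)$.

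Combining these, I would compute
\begin{align*}
C(X^\mathcal{B})
&= \sum_{j=1}^{d}\sum_{i=1}^{n_j} H_{j,i} + \sum_{t=1}^{T}\sum_{j=1}^{d} L_{t,j}(X^\mathcal{B}) \\
&\le \sum_{j=1}^{d}\Bigl(2 + \max_{t\in[T]}\tfrac{l_{t,j}}{\beta_j}\Bigr) C(\hat{X}^T) + C(\hat{X}^T) \\
&= \Bigl(2d + 1 + \sum_{j=1}^{d}\max_{t\in[T]}\tfrac{l_{t,j}}{\beta_j}\Bigr) C(\hat{X}^T)
= \bigl(2d + 1 + c(\mathcal{I})\bigr)\,C(\hat{X}^T),
\end{align*}
and conclude by observing that $\hat{X}^T$ is an optimal schedule for the full instance $\mathcal{I}$, so the competitive ratio is $2d+1+c(\mathcal{I})$.

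Since all the genuine effort has been pushed into Lemmas~\ref{lemma:online:func:ltjsum} and~\ref{lemma:online:time:djisum}, this final step is essentially bookkeeping; the only things I would double-check are that the per-type constants emerging from Lemma~\ref{lemma:online:time:djisum} sum correctly to $c(\mathcal{I})$ and that the ``$+1$'' from the load-dependent term is counted exactly once. The real conceptual obstacle lies upstream, in the induction inside Lemma~\ref{lemma:online:time:djisum}, which must handle blocks whose runtime $\bar{t}_{t,j}$ is not known at power-up time; but since that lemma is already available, the proof of the theorem itself is immediate.
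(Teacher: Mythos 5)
Your proposal matches the paper's proof essentially verbatim: the same decomposition of $C(X^\mathcal{B})$ into the block-based switching-and-idle cost $\sum_{j}\sum_{i} H_{j,i}$ and the load-dependent cost $\sum_{t}\sum_{j} L_{t,j}(X^\mathcal{B})$, followed by the same invocation of Lemma~\ref{lemma:online:time:djisum} and Lemma~\ref{lemma:online:func:ltjsum} and the same summation over server types. Your remark that the first step is really an inequality (since each $H_{j,i}$ only upper-bounds the cost of block $A_{j,i}$) is slightly more careful than the paper's equality sign, but the argument is the same.
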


\begin{proof}
	The total cost of $X^\mathcal{B}$ is the switching and idle operating costs given by  $\sum_{j=1}^{d} \sum_{i=1}^{n_j} H_{j,i}$ plus the load-dependent operating cost given by $\sum_{t=1}^{T} \sum_{j=1}^{d} L_{t,j}(X^\mathcal{B})$. By using Lemma~\ref{lemma:online:time:djisum} and~\ref{lemma:online:func:ltjsum}, we get
	\begin{align*}
	C(X^\mathcal{B}) 
	&\stackrel{\phantom{L\ref{lemma:online:time:djisum},L\ref{lemma:online:func:ltjsum}}}{=}
	\sum_{j=1}^{d} \sum_{i=1}^{n_j} H_{j,i} + \sum_{t=1}^{T} \sum_{j=1}^{d} L_{t,j}(X^\mathcal{B}) \\
	&\stackrel{L\ref{lemma:online:time:djisum},L\ref{lemma:online:func:ltjsum}}{\leq} 
	\sum_{j=1}^{d} \left(2 + \max_{t \in [T]}\frac{ l_{t,j}}{\beta_j}\right) \cdot C( \hat{X}^T) + C(\hat{X}^T) \\
	&\stackrel{\phantom{L\ref{lemma:online:time:djisum},L\ref{lemma:online:func:ltjsum}}}{=}
	\left(2d + 1 + \sum_{j = 1}^{d} \max_{t \in [T]} \frac{ l_{t,j}}{\beta_j}\right) \cdot C(\hat{X}^T).
	\end{align*}
	The schedule $\hat{X}^T$ is optimal for the problem instance $\mathcal{I}$, so algorithm~$\mathcal{B}$ is $(2d+1+ c(\mathcal{I}))$-competitive. \qedllncs
\end{proof}


\subsection{Improving the competitive ratio to $2d+1+\epsilon$}
\label{sec:online:time:epsilon}

In the following, we show how the competitive ratio can be reduced to $2d+1+\epsilon$ for any $\epsilon > 0$.
Given the original problem instance $\mathcal{I} = (T,d, \vec{m}, \vec{\beta}, F, \Lambda)$, we consider the modified problem instance $\tilde{\mathcal{I}} = (\tilde{T}, d, \vec{m}, \vec{\beta}, \tilde{F}, \tilde{\Lambda})$ where each time slot $t$ of the original problem instance is divided into $\tilde{n}_t$ equal sub time slots. The values $\tilde{n}_t \in \mathbb{N}$ are defined later. The total number of time slots is given by $\tilde{T} \coloneqq \sum_{t=1}^{T} \tilde{n}_t$. In the following, time slots in the original problem instance $\mathcal{I}$ are denoted by $t$, whereas time slots in the modified problem instance $\tilde{\mathcal{I}}$ are denoted by $u$. 
Let $U(t)$ be the set of time slots in the modified problem instance $\tilde{\mathcal{I}}$ that corresponds to the time slot $t \in [T]$ in the original problem instance $\mathcal{I}$. Formally, $U(t) \coloneqq [u + 1 : u + \tilde{n}_t]$ with $u = \sum_{t'=1}^{t-1} \tilde{n}_{t'}$. Furthermore, we define $U^{-1}(u)$ with $u \in [\tilde{T}]$ to be the time slot $t \in [T]$ such that $u \in U(t)$. 
%
The operating cost functions of $\tilde{\mathcal{I}}$ are defined as 
\begin{equation*}
\tilde{f}_{u,j}(z) \coloneqq \frac{1}{\tilde{n}_{t}} f_{t,j}(z)
\end{equation*}
with $t = U^{-1}(u)$ for all $u \in [\tilde{T}]$ and $j \in [d]$, so the operating cost during time slot $t$ is divided into $\tilde{n}_t$ equal parts. The idle operating cost is denoted by $\tilde{l}_{u,j} \coloneqq \tilde{f}_{u,j}(0)$ for $u \in [\tilde{T}]$ and $j \in [d]$. The job volumes do not change, so $\lambda_{u} \coloneqq \lambda_{U^{-1}(u)}$ for all $u \in [\tilde{T}]$. 
In other words, $\tilde{\mathcal{I}}$ matches the problem instance $\mathcal{I}$ where intermediate state changes are allowed. 

Let $n \in \mathbb{N}$. We set $\tilde{n}_t = n \cdot \max_{j\in [d]} \frac{l_{t,j}}{\beta_j}$ and apply algorithm~$\mathcal{B}$ on the corresponding problem instance $\tilde{\mathcal{I}}$. Therefore, we get
\begin{equation}\label{eqn:online:time:epsilon:cf}
c(\tilde{\mathcal{I}}) = \sum_{j = 1}^{d}\max_{u \in \tilde{T}} \frac{\tilde{l}_{u,j}}{\beta_j} = \sum_{j = 1}^{d}\max_{t \in T} \frac{l_{t,j}}{\tilde{n}_t \beta_j} \leq \sum_{j = 1}^{d}\max_{t \in \tilde{T}} \frac{1}{n} = \frac{d}{n} .
\end{equation}
In the second step, we use that $\tilde{l}_{u,j} = l_{t,j} / \tilde{n}_t$ with $t = U^{-1}(u)$. The inequality holds because $\tilde{n}_t \geq n \cdot \frac{l_{t,j}}{\beta_j}$ for all $j \in [d]$. 
To achieve a competitive ratio of $2d + 1 + \epsilon$, we set $n = d/\epsilon$. For $n \rightarrow \infty$, the competitive ratio converges to $2d + 1$. 

We still have to show how the resulting $(2d + 1 + \epsilon)$-competitive schedule for the modified problem instance $\tilde{\mathcal{I}}$ can be transformed into a feasible schedule for the original problem instance $\mathcal{I}$. Let $X^\mathcal{B}$ be the schedule created by $\mathcal{B}$ and let $X^\mathcal{C}$ be the final schedule for~$\mathcal{I}$. 
For each original time slot $t \in [T]$, let $\vec{x}^\mathcal{C}_t \coloneqq {\vec{x}}^\mathcal{B}_{\mu(t)}$ with $\mu(t) \coloneqq \argmin_{u \in U(t)} \tilde{g}_{u}({\vec{x}}^\mathcal{B}_{u})$ be the server configuration that minimizes the operating cost during the time interval $U(t)$.  

The pseudocode below shows how the schedule $X^\mathcal{C}$ is calculated. For each arriving operating cost function $f_{t,j}$, the next $\tilde{n}_t$ time slot of $\tilde{\mathcal{I}}$ are created and passed to algorithm~$\mathcal{B}$. Afterward, the next server configuration $\vec{x}^\mathcal{C}_t$ is determined. The whole schedule $X^\mathcal{B}$ cannot be calculated at once, because the state $\vec{x}^\mathcal{C}_t$ must be fixed before the next function $f_{t+1,j}$ can be processed. 

\begin{algorithm}[H] \label{alg:online:time:epsilon}
	\caption{Algorithm $\mathcal{C}$}
	\begin{algorithmic}[1]
		\State Initialize algorithm~$\mathcal{B}$
		\For{$t \coloneqq 1$ \textbf{to} $T$}
			\State Create the next $\tilde{n}_t$ time slots of the modified problem\newline\hspace*{25pt} instance $\tilde{\mathcal{I}}$ with $\tilde{n}_t \coloneqq  {d}/{\epsilon} \cdot \max_{j\in [d]}  {l_{t,j}}/{\beta_j}$
			\State Execute $\tilde{n}_t$ time slots in algorithm~$\mathcal{B}$
			\State $\vec{x}^\mathcal{C}_t \coloneqq {\vec{x}}^\mathcal{B}_{\mu(t)}$ with $\mu(t) \coloneqq \argmin_{u \in U(t)} \tilde{g}_{u}({\vec{x}}^\mathcal{B}_{u})$
		\EndFor
	\end{algorithmic}
\end{algorithm}

The following lemma shows that this procedure does not increase the cost of the schedule.

\begin{lemma} \label{lemma:online:time:epsilon:costcb}
	The total cost of $X^\mathcal{C}$ regarding the problem instance $\mathcal{I}$ is smaller than or equal to the total cost of $X^\mathcal{B}$ regarding the modified problem instance $\tilde{\mathcal{I}}$.
\end{lemma}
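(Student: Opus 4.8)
The plan is to split the total cost into operating cost and switching cost and show separately that neither increases when passing from $X^\mathcal{B}$ on $\tilde{\mathcal{I}}$ to $X^\mathcal{C}$ on $\mathcal{I}$. The starting point is the scaling identity between the two instances: for every configuration $\vec{x}$ and every $u \in U(t)$ we have $\tilde{g}_u(\vec{x}) = \frac{1}{\tilde{n}_t}\, g_t(\vec{x})$, since $\tilde{f}_{u,j} = \frac{1}{\tilde{n}_t} f_{t,j}$, the job volumes agree ($\lambda_u = \lambda_t$), and the common factor $\frac{1}{\tilde{n}_t}$ pulls out of the minimization in~\eqref{eqn:problem:gt}.

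For the operating cost, fix $t \in [T]$. By the choice $\mu(t) = \argmin_{u \in U(t)} \tilde{g}_u(\vec{x}^\mathcal{B}_u)$, the minimum over $U(t)$ is at most the average over $U(t)$, and $|U(t)| = \tilde{n}_t$, so
\begin{align*}
g_t(\vec{x}^\mathcal{C}_t) = g_t(\vec{x}^\mathcal{B}_{\mu(t)}) &= \tilde{n}_t \,\tilde{g}_{\mu(t)}(\vec{x}^\mathcal{B}_{\mu(t)}) \\
&\leq \tilde{n}_t \cdot \frac{1}{\tilde{n}_t} \sum_{u \in U(t)} \tilde{g}_u(\vec{x}^\mathcal{B}_u) = \sum_{u \in U(t)} \tilde{g}_u(\vec{x}^\mathcal{B}_u).
\end{align*}
Summing over all $t$ and using that $U(1), \dots, U(T)$ partition $[\tilde{T}]$ shows that the total operating cost of $X^\mathcal{C}$ under $\mathcal{I}$ is at most the total operating cost of $X^\mathcal{B}$ under $\tilde{\mathcal{I}}$.

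For the switching cost, I would exploit that $X^\mathcal{C}$ visits exactly the subsequence $\vec{x}^\mathcal{B}_{\mu(1)}, \dots, \vec{x}^\mathcal{B}_{\mu(T)}$ of the configurations of $X^\mathcal{B}$, with $\mu(1) < \dots < \mu(T)$ because the intervals $U(t)$ are consecutive and disjoint. Both schedules start and end in the all-zero configuration, so for each type $j$ the paid switching cost equals $\frac{\beta_j}{2}$ times the total variation $\sum_u |x_{u,j} - x_{u-1,j}|$ (the number of power-ups equals the number of power-downs). The triangle inequality gives $|x^\mathcal{B}_{\mu(t+1),j} - x^\mathcal{B}_{\mu(t),j}| \leq \sum_{u=\mu(t)+1}^{\mu(t+1)} |x^\mathcal{B}_{u,j} - x^\mathcal{B}_{u-1,j}|$; adding the two boundary gaps (from the initial zero state up to $\mu(1)$, and from $\mu(T)$ back down to zero) shows, for each type $j$ separately, that the total variation of $X^\mathcal{C}$ is bounded by that of $X^\mathcal{B}$. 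Hence the switching cost does not increase either, and combining the two bounds yields $C(X^\mathcal{C}) \leq C(X^\mathcal{B})$.

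The operating-cost bound is the routine part; the step that needs the most care is the switching cost, because the cost functional charges only power-ups through $(\cdot)^+$. The clean way around this is the reduction to total variation, which is legitimate precisely because both schedules return to the all-zero state so that power-up and power-down counts coincide, and a subsequence can only reduce total variation. The one detail to verify carefully is the handling of the two boundary transitions, so that the comparison holds termwise in $j$.
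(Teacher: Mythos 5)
Your proposal is correct and follows essentially the same route as the paper: the same split into operating and switching cost, the same minimum-versus-sum comparison over each block $U(t)$ for the operating cost, and the same subsequence/interval decomposition $[1:\mu(1)], [\mu(1)+1:\mu(2)], \dots$ for the switching cost. The only cosmetic difference is in the switching-cost step, where you pass through total variation and the zero boundary states, while the paper directly uses that the one-sided cost $S(\vec{x},\vec{x}') = \sum_{j} \beta_j (x'_j - x_j)^+$ is subadditive along any chain of intermediate states, so jumping straight to the last state of each interval can only be cheaper; both justifications are valid.
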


\begin{proof}
	Let $C^\mathcal{J}_\text{op}(X)$ be the operating cost of the schedule $X$ regarding the problem instance $\mathcal{J} \in \{\mathcal{I}, \tilde{\mathcal{I}}\}$ and let $C^\mathcal{J}_\text{sw}(X)$ denote its switching cost. 
	

	First, we will compare the operating cost of both schedules.
	The operating cost of $X^\mathcal{B}$ is given by
	\begin{align*}
		C^{\tilde{\mathcal{I}}}_\text{op}(X^\mathcal{B}) 
		= \sum_{u = 1}^{\tilde{T}} \tilde{g}_{u}(\vec{x}^\mathcal{B}_{u}) 
		= \sum_{t = 1}^{T} \sum_{u \in U(t)} \tilde{g}_{u}(\vec{x}^\mathcal{B}_{u}) 
		&\geq  \sum_{t = 1}^{T} \tilde{n}_t \cdot \min_{u \in U(t)} \tilde{g}_{u}(\vec{x}^\mathcal{B}_{u}).
	\end{align*}
	For the last inequality, we estimate each summand by the minimum of all summands. By using the definition of $\vec{x}^\mathcal{C}_t$, we get
	\begin{equation*}
		\sum_{t = 1}^{T} \tilde{n}_t \cdot \min_{u \in U(t)} \tilde{g}_{u}(\vec{x}^\mathcal{B}_{u}) =  \sum_{t = 1}^{T} \tilde{n}_t \cdot \min_{u \in U(t)} \tilde{g}_{u}(\vec{x}^\mathcal{C}_t).
	\end{equation*}
	The definition of $\tilde{f}_{u,j}$ implies that $\tilde{g}_{u}(\vec{x}) = \frac{1}{\tilde{n}_t} g_t(\vec{x}) $ with $t = U^{-1}(u)$, so
	\begin{align*}
		\sum_{t = 1}^{T} \tilde{n}_t \cdot \min_{u \in U(t)} \tilde{g}_{u}(\vec{x}^\mathcal{C}_t) 
		&=  \sum_{t = 1}^{T} g_{t}(\vec{x}^\mathcal{C}_t) 
		= C^\mathcal{I}_\text{op}(X^\mathcal{C}).
	\end{align*}
	Altogether we have shown that $C^{\tilde{\mathcal{I}}}_\text{op}(X^\mathcal{B})  \geq C^\mathcal{I}_\text{op}(X^\mathcal{C})$.
	
	Next, we will compare the switching cost. To simplify the notation, let 
	$S(\vec{x}, \vec{x}') \coloneqq \sum_{j=1}^{d} \beta_j (x'_{j} - x_{j})^+$ 
	be the switching cost from the state $\vec{x}$ to $\vec{x}'$. The total switching cost of $X^\mathcal{B}$ is given by 
	\begin{equation*}
		C^{\tilde{\mathcal{I}}}_\text{sw}(X^\mathcal{B}) 
		= \sum_{u=1}^{\tilde{T}} S(\vec{x}^\mathcal{B}_{u-1}, \vec{x}^\mathcal{B}_u)
		= \sum_{t=1}^{T+1} \sum_{u = \mu(t-1) + 1}^{\mu(t)} S(\vec{x}^\mathcal{B}_{u-1}, \vec{x}^\mathcal{B}_{u}) 
	\end{equation*}
	with $\mu(0) \coloneqq 0$ and $\mu(T+1) \coloneqq \tilde{T} + 1$. In the last step, the interval $[\tilde{T}]$ is partitioned into the sub-intervals $[1 : \mu(1)], [\mu(1) + 1: \mu(2)], \dots, [\mu(T) + 1: \tilde{T} + 1]$ (note that the switching cost from time slot $\tilde{T}$ to $\tilde{T} + 1$ is always 0, since $\vec{x}^\mathcal{B}_{\tilde{T} + 1} = 0$ by definition). The switching cost during each interval is at least as large as the switching cost for jumping directly to the last state of the interval. Therefore, 
	\begin{equation*}
		\sum_{t=1}^{T+1} \sum_{u = \mu(t-1) + 1}^{\mu(t)} S(\vec{x}^\mathcal{B}_{u-1}, \vec{x}^\mathcal{B}_{u}) 
		\leq \sum_{t=1}^{T+1} S(\vec{x}^\mathcal{B}_{\mu(t-1)}, \vec{x}^\mathcal{B}_{\mu(t)}).
	\end{equation*}
	By using the definition of $\vec{x}^\mathcal{C}_t$, we get 
	\begin{equation*}
		\sum_{t=1}^{T+1} S(\vec{x}^\mathcal{B}_{\mu(t-1)}, \vec{x}^\mathcal{B}_{\mu(t)})
		= \sum_{t=1}^{T+1} S(\vec{x}^\mathcal{C}_{t-1}, \vec{x}^\mathcal{C}_{t})
		= C^{\mathcal{I}}_\text{sw}(X^\mathcal{C}),
	\end{equation*}
	so $C^{\tilde{\mathcal{I}}}_\text{sw}(X^\mathcal{B}) \geq C^{\mathcal{I}}_\text{sw}(X^\mathcal{C})$. \qedllncs
\end{proof}

%
%

Now, we can prove that algorithm~$\mathcal{C}$ is $(2d + 1 + \epsilon)$-competitive. 

\begin{theorem} \label{theo:online:time:espilon}
	For any $\epsilon > 0$, there is a $(2d +1 + \epsilon)$-competitive algorithm for the data-center right-sizing problem with heterogeneous servers and time-dependent operating cost functions.
\end{theorem}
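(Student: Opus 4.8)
The plan is to run algorithm~$\mathcal{B}$ on the refined instance $\tilde{\mathcal{I}}$, translate its schedule back to $\mathcal{I}$ via algorithm~$\mathcal{C}$, and then chain together the facts already established. Concretely, applying Theorem~\ref{theo:online:time} to $\tilde{\mathcal{I}}$ yields $C^{\tilde{\mathcal{I}}}(X^\mathcal{B}) \le (2d + 1 + c(\tilde{\mathcal{I}})) \cdot C^{\tilde{\mathcal{I}}}(\hat{X}^{\tilde{T}})$, where $\hat{X}^{\tilde{T}}$ is an optimal schedule for $\tilde{\mathcal{I}}$; the choice $\tilde{n}_t = (d/\epsilon)\max_{j\in[d]} l_{t,j}/\beta_j$ together with equation~\eqref{eqn:online:time:epsilon:cf} gives $c(\tilde{\mathcal{I}}) \le d/n = \epsilon$; and Lemma~\ref{lemma:online:time:epsilon:costcb} gives $C^\mathcal{I}(X^\mathcal{C}) \le C^{\tilde{\mathcal{I}}}(X^\mathcal{B})$. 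The one link not yet in place is a comparison of the two offline optima, which I would supply as the final lemma-like step.

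That missing step is the monotonicity $C^{\tilde{\mathcal{I}}}(\hat{X}^{\tilde{T}}) \le C^\mathcal{I}(\hat{X}^T)$, i.e.\ refining the time axis never increases the offline optimum. I would prove it by a lifting argument: take the schedule $\hat{X}^T$ that is optimal for $\mathcal{I}$ and define a schedule for $\tilde{\mathcal{I}}$ that holds the configuration $\vec{x}_t$ fixed across all sub time slots $u \in U(t)$. Since the job volumes are copied ($\lambda_u = \lambda_{U^{-1}(u)}$) and the capacities $z^\text{max}_j$ are unchanged, feasibility transfers directly; since $\tilde{f}_{u,j} = f_{t,j}/\tilde{n}_t$ and $|U(t)| = \tilde{n}_t$, the operating cost of slot $t$ is reproduced exactly; and since the configuration is constant within each $U(t)$, the only switching occurs at the original slot boundaries, so no extra switching cost appears. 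Hence the lifted schedule costs exactly $C^\mathcal{I}(\hat{X}^T)$ on $\tilde{\mathcal{I}}$, and optimality of $\hat{X}^{\tilde{T}}$ gives the inequality. I would also record here that $X^\mathcal{C}$ is feasible for $\mathcal{I}$ and that $\mathcal{C}$ is genuinely online: each $\vec{x}^\mathcal{C}_t = \vec{x}^\mathcal{B}_{\mu(t)}$ equals a feasible configuration of $X^\mathcal{B}$ (feasible by Lemma~\ref{lemma:online:time:feasible}) for the same $\lambda_t$ and $z^\text{max}_j$, and it is committed before $f_{t+1,\cdot}$ is revealed.

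Combining the four inequalities in order gives $C^\mathcal{I}(X^\mathcal{C}) \le C^{\tilde{\mathcal{I}}}(X^\mathcal{B}) \le (2d+1+\epsilon)\,C^{\tilde{\mathcal{I}}}(\hat{X}^{\tilde{T}}) \le (2d+1+\epsilon)\,C^\mathcal{I}(\hat{X}^T)$, which is exactly the claimed competitiveness. I expect the main obstacle to be not any single sharp estimate but the bookkeeping that makes the reduction legitimate: one must guarantee that $\tilde{n}_t$ is a positive integer (replacing it by $\max\{1,\lceil\,\cdot\,\rceil\}$, which only enlarges $\tilde{n}_t$ and hence can only shrink $c(\tilde{\mathcal{I}})$, and which also covers the degenerate case $l_{t,j}\equiv 0$), that $\tilde{\mathcal{I}}$ is again a valid instance so that Theorem~\ref{theo:online:time} applies (the $\tilde{f}_{u,j}$ are convex, increasing, and non-negative, being positive scalar multiples of the $f_{t,j}$), and that the back-translation in $\mathcal{C}$ respects the online arrival of the functions. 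Once these are checked, the theorem follows purely by assembling the displayed chain.
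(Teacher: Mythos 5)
Your proposal is correct and follows essentially the same route as the paper: the same chain $C^\mathcal{I}(X^\mathcal{C}) \le C^{\tilde{\mathcal{I}}}(X^\mathcal{B}) \le (2d+1+\epsilon)\,C^{\tilde{\mathcal{I}}}(X^\ast_{\tilde{\mathcal{I}}}) \le (2d+1+\epsilon)\,C^\mathcal{I}(X^\ast_\mathcal{I})$, with the final link supplied by exactly the lifting $\tilde{\vec{x}}_u \coloneqq \vec{x}_{U^{-1}(u)}$ that the paper uses. Your additional bookkeeping (integrality of $\tilde{n}_t$, validity of $\tilde{\mathcal{I}}$, online legitimacy of $\mathcal{C}$) is a welcome tightening of details the paper leaves implicit, but does not change the argument.
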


\begin{proof}
	Let $X^\ast_{\mathcal{J}}$ be an optimal schedule for the problem instance $\mathcal{J} \in \{\mathcal{I}, \tilde{\mathcal{I}}\}$ and let $C^\mathcal{J}(X)$ denote the total cost of $X$ with respect to $\mathcal{J}$. We have to show that $C^\mathcal{I}(X^\mathcal{C}) \leq (2d + 1 + \epsilon) \cdot C^\mathcal{I}(X^\ast_\mathcal{I})$. 
	By using Lemma~\ref{lemma:online:time:epsilon:costcb}, Theorem~\ref{theo:online:time} and the competitive ratio of algorithm~$\mathcal{B}$ given by equation~\eqref{eqn:online:time:epsilon:cf}, we get
	\begin{align*}
		C^\mathcal{I}(X^\mathcal{C}) &\;\;\alignstack{L\ref{lemma:online:time:epsilon:costcb}}{\leq}\;\; C^{\tilde{\mathcal{I}}}(X^\mathcal{B}) \\ 
		&\;\;\alignstack{T\ref{theo:online:time}}{\leq}\;\; \big(2d + 1 + c(\tilde{\mathcal{I}})\big) \cdot C^{\tilde{\mathcal{I}}}(X^\ast_{\tilde{\mathcal{I}}})  \\
		&\;\;\alignstack{\eqref{eqn:online:time:epsilon:cf}}{\leq}\;\; (2d + 1 + \epsilon) \cdot C^{\tilde{\mathcal{I}}}(X^\ast_{\tilde{\mathcal{I}}}) \leq C^{\mathcal{I}}(X^\ast_{\mathcal{I}}).
	\end{align*}
	The last inequality holds because each feasible schedule $X$ for the problem instance $\mathcal{I}$ can be converted into a feasible schedule $\tilde{X}$ for the modified problem instance $\tilde{\mathcal{I}}$ without increasing the cost. 
	Formally, the definition $\tilde{\vec{x}}_u \coloneqq \vec{x}_{U^{-1}(u)}$ for all $u \in [\tilde{T}]$ implies $C^{\tilde{\mathcal{I}}}(\tilde{X}) = C^\mathcal{I}(X)$. 
	Therefore, an optimal schedule for $\mathcal{I}$ cannot have a lower cost than an optimal schedule for~$\tilde{\mathcal{I}}$. \qedllncs
\end{proof}

	\section{Approximation Algorithm}
	\label{sec:approx}
	
In this section, we consider the offline version of the data-center right-sizing problem and present a $(1+ \epsilon)$-approximation algorithm that runs in $\mathcal{O}\big(T \cdot \epsilon^{-d} \cdot \prod_{j=1}^{d} \log m_j\big)$ time, which is polynomial if $d$ is a constant. It is based on an optimal, graph-based algorithm that is presented in the following subsection. Afterward, in Section~\ref{sec:approx:eps}, we show how the optimal algorithm can be modified to obtain a $(1+\epsilon)$-approximation in polynomial time.

To simplify the following calculations we introduce some notations. Let $M_j \coloneqq [m_j]_0$ and $\mathcal{M} \coloneqq \varprod_{j=1}^d M_j$ be the set of all possible server configurations. The operating cost is denoted by $C_\text{op}(X) \coloneqq \sum_{t=1}^{T} g_t(\vec{x}_t)$ and the switching cost is denoted by $C_\text{sw}(X) \coloneqq \sum_{t=1}^T \sum_{j=1}^{d} \beta_j (x_{t,j} - x_{t-1, j})^+$. 

\subsection{Optimal offline algorithm}
\label{sec:approx:opt}

An optimal schedule can be found by converting the problem instance $\mathcal{I}$ to a graph and finding the shortest path. 

The graph $G(\mathcal{I})$ (or simply denoted by $G$) contains $2T \cdot \prod_{j=1}^d (m_j + 1)$ vertices arranged in a $(d+1)$-dimensional grid (where the first dimension has $2T$ layers). 
For each time slot $t \in [T]$ and each server configuration $\vec{x} = (x_1, \dots, x_d) \in \mathcal{M}$, there are two vertices in the graph denoted by $v_{t, \vec{x}}^\uparrow$ and $v_{t,\vec{x}}^\downarrow$. 
There is an edge $e_{t,\vec{x}}^\text{op}$ from $v_{t, \vec{x}}^\uparrow$ to $v_{t,\vec{x}}^\downarrow$ with weight $g_t(\vec{x})$ representing the operating cost during time slot~$t$.
For each $j \in [d]$ and for each
\begin{equation*}
\vec{x} = (x_1, \dots, x_d) \in M_1 \times \dots \times (M_j \setminus \{m_j\}) \times \dots \times M_d
\end{equation*}
(note that $x_j = m_j$ is excluded), let $\vec{x}' \coloneqq (x_1, \dots, x_j + 1, \dots, x_d)$. 
There is an edge $e^\uparrow_{t, \vec{x}, j}$ from $v_{t, \vec{x}}^\uparrow$ to $v_{t, \vec{x}'}^\uparrow$ with weight $\beta_j$ (a server of type $j$ is powered up) and another edge $e^\downarrow_{t, \vec{x}', j}$from $v_{t, \vec{x}'}^\downarrow$ to $v_{t, \vec{x}}^\downarrow$ with weight $0$ (a server of type $j$ is powered down). Furthermore, for each $t \in [T-1]$ we need an edge $e^\rightarrow_{t, \vec{x}}$ from $v_{t,\vec{x}}^\downarrow$ to $v_{t+1, \vec{x}}^\uparrow$ with weight $0$ to switch to the next time slot.

Let $\vec{0} \coloneqq (0, \dots, 0) \in \mathcal{M}$. Each schedule $X$ for the problem instance $\mathcal{I}$ can be represented by a path $P_X$ between $v_{1,\vec{0}}^\uparrow$ and $v_{T,\vec{0}}^\downarrow$. For each $t \in [T]$, the path uses the edge $e^\text{op}_{t, \vec{x}_t}$. The vertices $v^\downarrow_{t,\vec{x}_t}$ and $v^\uparrow_{t+1,\vec{x}_{t+1}}$ (for $t \in [T-1]$) are connected by an arbitrary shortest path between them. The same is done for the start and the end point. Note that the sum of the weights of the path's edges is equal to the cost of the schedule. 

On the other hand, a given path $P$ between $v_{1,\vec{0}}^\uparrow$ and $v_{T,\vec{0}}^\downarrow$ represents a schedule $X^P$. 
If the path uses the edge $e^\text{op}_{t,\vec{x}}$, then the corresponding schedule uses the server configuration $\vec{x}$ during time slot~$t$. If $P$ does not use a shortest path between $v^\downarrow_{t,\vec{x}_t}$ and $v^\uparrow_{t+1,\vec{x}_{t+1}}$ (for $t \in [T-1]$), then the sum of the weights of the path's edges are greater than the cost of the corresponding schedule. However, by replacing the path's vertices between $v^\downarrow_{t,\vec{x}_t}$ and $v^\uparrow_{t+1,\vec{x}_{t+1}}$ for all $t \in [T-1]$ by a shortest sub-path, both values are equal.

A shortest path between $v_{1,\vec{0}}^\uparrow$ and $v_{T,\vec{0}}^\downarrow$ corresponds to an optimal schedule. Owing to the graph structure, a shortest path can be calculated with dynamic programming in $\mathcal{O}(T \cdot \prod_{j=1}^{d} m_j)$ time. Note that this runtime is not polynomial (even if $d$ is a constant), because the encoding length of the problem instance is $\mathcal{O}(T + \sum_{j=1}^{d} \log m_j)$. The graph structure and the relation between a shortest path and an optimal schedule are visualized in Figure~\ref{fig:approx:graph}.

\ifincludefigures
\begin{figure}[tb]

	\centering
	\begin{tikzpicture}
		\pgfmathsetmacro{\xBegin}{0}
		\pgfmathsetmacro{\xStep}{3} 
		\pgfmathsetmacro{\xTimeGap}{0.5} 
		
		\pgfmathsetmacro{\yBegin}{0}
		\pgfmathsetmacro{\yStep}{2.2} 
		\pgfmathsetmacro{\yMax}{2}
		\pgfmathsetmacro{\yMaxMinusOne}{\yMax - 1}
		
		\pgfmathsetmacro{\zxStep}{1.2}
		\pgfmathsetmacro{\zyStep}{0.9}
		\pgfmathsetmacro{\zMax}{1}
		\pgfmathsetmacro{\zMaxMinusOne}{\zMax - 1}
		
		\pgfmathsetmacro{\xLength}{2}
		
		\tikzstyle{textsize}=[font=\scriptsize]
		\tikzstyle{vertex}=[fill, circle, inner sep=1pt, outer sep=1.5pt]
		\tikzstyle{vertextext}=[textsize, above left, xshift=1.5pt, yshift=-1.5pt]
		\tikzstyle{opEdgeNode}=[textsize, below, yshift=1.5pt]
		\tikzstyle{swEdgeNodeY}=[textsize, left, xshift=1.5pt]
		\tikzstyle{swEdgeNodeZ}=[textsize, above left, yshift=-2.5pt, xshift=2.5pt]
		\tikzstyle{arrow}=[-stealth]
		\tikzstyle{arrowpath}=[-stealth, green, thick]
		
		\foreach \y in {0,...,\yMax} {
			\foreach \z in {0,...,\zMax} {
				\node[vertex] (n0y\y z\z) at 
					(\xBegin + 0 * \xStep + \z * \zxStep,
					 \yBegin + \y * \yStep + \z * \zyStep) {};
				\node[vertex] (n1y\y z\z) at 
					(\xBegin + 1 * \xStep + \z * \zxStep,
					 \yBegin + \y * \yStep + \z * \zyStep) {};
				\node[vertex] (n2y\y z\z) at 
					(\xBegin + 2 * \xStep + \z * \zxStep + \xTimeGap,
					 \yBegin + \y * \yStep + \z * \zyStep) {};
				\node[vertex] (n3y\y z\z) at 
					(\xBegin + 3 * \xStep + \z * \zxStep + \xTimeGap,
					 \yBegin + \y * \yStep + \z * \zyStep) {};
				\node[vertextext] at (n0y\y z\z)
					{$v^\uparrow_{1,(\y,\z)}$};
				\node[vertextext] at (n1y\y z\z)
					{$v^\downarrow_{1,(\y,\z)}$};
				\node[vertextext] at (n2y\y z\z)
					{$v^\uparrow_{2,(\y,\z)}$};
				\node[vertextext] at (n3y\y z\z)
					{$v^\downarrow_{2,(\y,\z)}$};
				
				\draw[arrow] (n0y\y z\z) -- (n1y\y z\z)
					node[opEdgeNode, pos=0.65-0.3*\z] {$g_1(\y, \z)$};
				\draw[arrow] (n1y\y z\z) -- (n2y\y z\z)
					node[opEdgeNode, pos=0.65-0.3*\z] {$0$};
				\draw[arrow] (n2y\y z\z) -- (n3y\y z\z)
					node[opEdgeNode, pos=0.65-0.3*\z] {$g_2(\y, \z)$};
			}
		}
	
		\foreach \x in {0,2} {
			\foreach \y in {0,...,\yMaxMinusOne} {
				\foreach \z in {0,...,\zMax} {
					\pgfmathsetmacro{\yp}{int(\y + 1)}
					\pgfmathsetmacro{\zp}{int(\z + 1)}
					\pgfmathsetmacro{\xp}{int(\x + 1)}
					\draw[arrow] (n\x y\y z\z) -- (n\x y\yp z\z) 
						node[swEdgeNodeY, pos=0.6-0.2*\z] {$\beta_1$};
					\draw[arrow] (n\xp y\yp z\z) -- (n\xp y\y z\z) 
						node[swEdgeNodeY, pos=0.4+0.2*\z] {$0$};
				}
			}
		}
	
		\foreach \x in {0,2} {
			\foreach \y in {0,...,\yMax} {
				\foreach \z in {0,...,\zMaxMinusOne} {
					\pgfmathsetmacro{\yp}{int(\y + 1)}
					\pgfmathsetmacro{\zp}{int(\z + 1)}
					\pgfmathsetmacro{\xp}{int(\x + 1)}
					\draw[arrow] (n\x y\y z\z) -- (n\x y\y z\zp)
						node[swEdgeNodeZ, midway, ] {$\beta_2$};
					\draw[arrow] (n\xp y\y z\zp) -- (n\xp y\y z\z)
						node[swEdgeNodeZ, midway] {$0$};
				}
			}
		}
	
	\draw[arrowpath] (n0y0z0) -- (n0y1z0);
	\draw[arrowpath] (n0y1z0) -- (n0y2z0);
	\draw[arrowpath] (n0y2z0) -- (n1y2z0);
	\draw[arrowpath] (n1y2z0) -- (n1y1z0);
	\draw[arrowpath] (n1y1z0) -- (n2y1z0);
	\draw[arrowpath] (n2y1z0) -- (n2y1z1);
	\draw[arrowpath] (n2y1z1) -- (n3y1z1);
	\draw[arrowpath] (n3y1z1) -- (n3y0z1);
	\draw[arrowpath] (n3y0z1) -- (n3y0z0);
	
	\node[fill=red, circle, inner sep=1.8pt] at (n0y0z0) {};
	\node[fill=blue, circle, inner sep=1.8pt] at (n3y0z0) {};

	\end{tikzpicture}
	\caption{{\normalfont (This figure is colored)} Visualization of the graph representation. This example shows two server types ($d=2$) and two time slots ($T=2$). There are $m_1 = 2$ servers of type~1 and $m_2 = 1$ server of type~2. The algorithm calculates a shortest path from $v^\uparrow_{1,(0,0)}$ (red dot) to $v^\downarrow_{2,(0,0)}$ (blue dot). The shortest path is drawn in green and corresponds to the optimal schedule $\vec{x}_1 = (2,0)$ and $\vec{x}_2 = (1,1)$.}
	\label{fig:approx:graph}
\end{figure}
\fi

\subsection{$(1+\epsilon)$-approximation}
\label{sec:approx:eps}

In this section, we develop a $(1+\epsilon)$-approximation which has a polynomial runtime, if $d$ and $\epsilon$ are constants. 
The basic idea is to reduce the number of possible values for $x_{t,j}$, that is, we will calculate an optimal solution where the number of active servers can only take specific values. Broadly speaking, the number of active servers are powers of a constant $\gamma > 1$.
For example, we will see that using the values $x_{t,j} \in \{0, 1, 2, 4, 8, \dots, m_j\}$ (i.e., each power of two up to $m_j$ as well as $m_j$ and $0$) would result in a $3$-approximation. 
The set of values that will be used for the number of active servers of type $j$ is
\begin{align*}
M^{\gamma}_j \coloneqq{}& \{0, m_j\} \;\cup\; \{\lfloor \gamma^k \rfloor \in M_j \mid k \in \mathbb{N} \} \;\cup\; \{\lceil \gamma^k \rceil \in M_j \mid k \in \mathbb{N} \} \\
={}& \{0, 1, \lfloor \gamma^1 \rfloor, \lceil \gamma^1 \rceil, \lfloor \gamma^2 \rfloor, \lceil \gamma^2 \rceil, \dots, m_j\}.
\end{align*}
Using both the rounded down and rounded up values of $\gamma^k$ ensures that the ratio between two consecutive values is not larger than $\gamma$. Note that $|M^{\gamma}_j| \in \mathcal{O}(\log_\gamma m_j)$. 
Furthermore, we define $\mathcal{M}^{\gamma} \coloneqq \varprod_{j=1}^d M^{\gamma}_j$ as the set of server configurations that will be used in our algorithm. For a given value $x_j < m_j$, let $N_j(x_j)$ be the next greater value of $x_j$ in $M^{\gamma}_j$, i.e., $N_j(x_j) \coloneqq \min \{x \in M^{\gamma}_j \mid x > x_j\}$. 

The reduced graph $G^{\gamma}$ contains the vertices $v_{t,\vec{x}}^s$ with $s \in \{\uparrow, \downarrow\}$, $t \in [T]$ and $\vec{x} \in \mathcal{M}^{\gamma}$. 
Similar to $G$ there is an edge from $v_{t,\vec{x}}^\uparrow$ to $v_{t,\vec{x}}^\downarrow$ with weight $g_t(\vec{x})$ (for all $t\in [T]$ and $\vec{x} \in \mathcal{M}^{\gamma}$) and an edge from $v_{t,\vec{x}}^\downarrow$ to $v_{t+1,\vec{x}}^\uparrow$ with cost 0 (for all $t \in [T-1]$ and $\vec{x} \in \mathcal{M}^{\gamma}$). For each $j \in [d]$ and for each 
\begin{equation*}
\vec{x}  = (x_1, \dots, x_d) \in M^{\gamma}_1 \times \dots \times (M^{\gamma}_j \setminus \{m_j\}) \times \dots \times M^{\gamma}_d,
\end{equation*}
let $\vec{x}' \coloneqq (x_1, \dots, N_j(x_j), \dots, x_d)$. There is an edge from $v_{t,\vec{x}}^\uparrow$ to $v_{t,\vec{x}'}^\uparrow$ with weight $\beta_j (N_j(x_j) - x_j)$ and an edge from $v_{t, \vec{x}'}^\downarrow$ to $v_{t, \vec{x}}^\downarrow$ with weight $0$.

\begin{theorem} \label{theo:approx:eps:theorem}
	Let $P^{\gamma}$ be a shortest path in $G^{\gamma}$ and $X^{\gamma}$ the corresponding schedule. Let $X^\ast$ be an optimal schedule for the original problem instance. Then, the inequality
	\begin{equation}
	C(X^{\gamma}) \leq (2\gamma - 1) \cdot C(X^\ast)
	\end{equation}
	is satisfied, i.e., $X^{\gamma}$ is a $(2\gamma - 1)$-approximation.
\end{theorem}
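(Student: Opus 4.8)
The plan is to exhibit a \emph{reference} schedule $\hat{X}$ that uses only configurations in $\mathcal{M}^{\gamma}$ and satisfies $C(\hat{X}) \le (2\gamma-1)\,C(X^\ast)$. Every schedule over $\mathcal{M}^{\gamma}$ corresponds to a path from $v^{\uparrow}_{1,\vec{0}}$ to $v^{\downarrow}_{T,\vec{0}}$ in $G^{\gamma}$ of equal cost, because a jump from a grid value to a larger one along a single coordinate is realized by the chain of intermediate up-edges and costs exactly $\beta_j$ times their difference, while down-moves are free. Hence the shortest path $X^{\gamma}$ obeys $C(X^{\gamma}) \le C(\hat{X})$, and the theorem reduces to constructing $\hat{X}$ from the (integral) optimum $X^\ast$ and bounding its cost.

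I would take $\hat{x}_{t,j} := \min\{\,y \in M^{\gamma}_j \mid y \ge x^\ast_{t,j}\,\}$, i.e.\ round every coordinate of $X^\ast$ up to the next grid value. Rounding up preserves feasibility at once, since $\hat{x}_{t,j} \ge x^\ast_{t,j}$ gives $\sum_j \hat{x}_{t,j} z^\text{max}_j \ge \sum_j x^\ast_{t,j} z^\text{max}_j \ge \lambda_t$. Because $M^{\gamma}_j$ contains all floors $\lfloor \gamma^k \rfloor$, the next grid value above any integer $x^\ast_{t,j} \ge 1$ is at most $\gamma\,x^\ast_{t,j}$; together with integrality of $X^\ast$ this yields the two-sided bound $x^\ast_{t,j} \le \hat{x}_{t,j} \le \gamma\,x^\ast_{t,j}$, which drives the whole estimate.

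For the operating cost I would split $g_t$ into the idle part $\sum_j \hat{x}_{t,j} f_{t,j}(0)$ and the load-dependent part $\sum_j L_{t,j}$. The idle part grows by a factor at most $\gamma$ thanks to $\hat{x}_{t,j} \le \gamma x^\ast_{t,j}$, whereas the load-dependent part does not grow at all: since $\hat{x}_{t,j} \ge x^\ast_{t,j}$, the Jensen argument of Lemma~\ref{lemma:online:func:ltj} gives $L_{t,j}(\hat{X}) \le L_{t,j}(X^\ast)$. Adding the two parts and using $\gamma \ge 1$ yields $g_t(\hat{\vec{x}}_t) \le \gamma\,g_t(\vec{x}^\ast_t)$, hence $C_\text{op}(\hat{X}) \le \gamma\,C_\text{op}(X^\ast)$.

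The switching cost is the crux, and it is where the bound degrades from $\gamma$ to $2\gamma-1$. Since the round-up map is non-decreasing, each increase of $\hat{X}$ occurs at an increase of $X^\ast$, and there $\hat{x}_{t,j}-\hat{x}_{t-1,j} \le \gamma x^\ast_{t,j}-x^\ast_{t-1,j} = \gamma(x^\ast_{t,j}-x^\ast_{t-1,j}) + (\gamma-1)\,x^\ast_{t-1,j}$; summing over the up-steps of $X^\ast$ (which contain those of $\hat{X}$, with a nonnegative right-hand side on the extra ones) gives $C_\text{sw}(\hat{X}) \le \gamma\,C_\text{sw}(X^\ast) + (\gamma-1)\sum_{j}\beta_j\!\sum_{t:\,x^\ast_{t,j}>x^\ast_{t-1,j}} x^\ast_{t-1,j}$. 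Feeding this together with $C_\text{op}(\hat{X}) \le \gamma\,C_\text{op}(X^\ast)$ into $C(\hat{X})=C_\text{op}(\hat{X})+C_\text{sw}(\hat{X})$, a short computation shows that the claimed factor $2\gamma-1$ follows \emph{provided} the overhead term obeys the key inequality $\sum_{j}\beta_j\sum_{t:\,x^\ast_{t,j}>x^\ast_{t-1,j}} x^\ast_{t-1,j} \le C(X^\ast)$. I expect this inequality to be the main obstacle, and it is exactly where optimality of $X^\ast$ must enter: the term $\beta_j x^\ast_{t-1,j}$ charges, at each power-up, the servers that are already running, and these can be paid for either by the switching cost $X^\ast$ spent to bring them up from a much lower level, or --- when $X^\ast$ instead kept them running across an idle stretch --- by the operating cost it accrued meanwhile, since an optimal schedule keeps a server idle only when powering it down and up again would have cost at least as much (a ski-rental balance). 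Making this per-event charging tight enough to land on $2\gamma-1$ for every $\gamma>1$ is the delicate step; a crude switching-only charge is not enough.
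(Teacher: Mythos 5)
There is a genuine gap, and it sits exactly where you suspected. Your reference schedule is the \emph{memoryless} pointwise round-up $\hat{x}_{t,j}=\min\{y\in M^\gamma_j\mid y\ge x^\ast_{t,j}\}$, and your switching-cost estimate reduces the theorem to the inequality $\sum_{j}\beta_j\sum_{t:\,x^\ast_{t,j}>x^\ast_{t-1,j}} x^\ast_{t-1,j} \le C(X^\ast)$. That inequality is false. Take $d=1$, $\beta_1=1$, $z^\text{max}_1=1$, $f_{t,1}\equiv\epsilon$ tiny, and $\lambda_t=t$ for $t=1,\dots,n$; the optimal schedule is the ramp $x^\ast_t=t$, so the left-hand side is $\sum_{t=1}^{n}(t-1)=n(n-1)/2$ while $C(X^\ast)=n+O(\epsilon n^2)$. (In this instance $\hat{X}$ itself is still cheap because the sum of its up-steps telescopes, so it is your \emph{bound} that breaks, not the schedule --- but the proof as structured cannot be completed.) The charging argument you sketch to rescue it, paying $\beta_j x^\ast_{t-1,j}$ per power-up event out of the operating cost that optimality forces $X^\ast$ to accrue between oscillations, is the genuinely hard part: it needs disjointness of the charged windows, a lower bound on the marginal operating cost of an idle server, and a tie-breaking convention for optimal schedules (with $f_{t,j}(0)=0$ an optimal schedule may oscillate gratuitously). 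None of this is carried out, and you acknowledge as much.

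The paper avoids the problem by not rounding pointwise. It builds a \emph{lazy} schedule $X'$ with a hysteresis band: $X'$ keeps its previous value as long as the invariant $x^\ast_{t,j}\le x'_{t,j}\le(2\gamma-1)x^\ast_{t,j}$ holds, jumps up to the smallest admissible grid value only when $x^\ast$ overtakes it, and drops only when it exceeds $(2\gamma-1)x^\ast_{t,j}$. The width of the band (factor $2\gamma-1$ rather than $\gamma$) is exactly what kills the oscillation/ramp pathology: the time horizon is partitioned into alternating power-up and power-down phases of $X'$, and in each phase the number of servers $X'$ shuts down is charged directly to the number $X^\ast$ shuts down in the same phase, using only the three facts that a preceding up-move forces $x^\ast_{t_{i-1}}>x'_{t_{i-1}}/\gamma$, a down-move forces $N_j(x'_{t_i})>(2\gamma-1)x^\ast_{t_i}$, and consecutive grid values differ by a factor at most $\gamma$. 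This gives $C_\text{sw}(X')\le(2\gamma-1)C_\text{sw}(X^\ast)$ with no appeal to operating cost or to optimality of $X^\ast$ beyond its cost. The price is that the operating-cost bound degrades to $(2\gamma-1)$ as well (your tighter factor $\gamma$ for the operating cost is a real advantage of hugging $X^\ast$, but it is precisely that tightness that makes the switching cost uncontrollable by a switching-to-switching charge). If you want to salvage your route, you would have to either prove the delicate ski-rental charging in full, or switch to a lazy rounding rule with slack --- which is the paper's proof.
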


To prove this theorem, we construct a path $P'$ in $G^{\gamma}$ with the corresponding schedule $X'$ that is not necessarily a shortest path, however, it will satisfy the inequality $C(X') \leq (2\gamma - 1) \cdot C(X^\ast)$. The cost of $X^{\gamma}$ can only be smaller, because the corresponding path $P^{\gamma}$ is a shortest path in $G^{\gamma}$, so if $X'$ is a $(2\gamma - 1)$-approximation, then $X^{\gamma}$ is a $(2\gamma - 1)$-approximation too.

	Given the optimal solution $X^\ast$ the states of $X'$ are defined by
	\begin{equation} \label{eqn:approx:eps:rule}
	x'_{t,j} = \left\{
	\begin{array}{ll} 
	x_\text{min} &\ \text{if}\ x'_{t-1,j} \leq x^\ast_{t,j} \\
	x'_{t-1,j} &\ \text{if}\  x^\ast_{t,j} < x'_{t-1,j} \leq (2\gamma - 1) \cdot x^\ast_{t,j} \\
	x_\text{max} &\ \text{if}\ (2\gamma - 1) \cdot  x^\ast_{t,j} < x'_{t-1,j} 
	\end{array}\right.
	\end{equation}
	with $x_\text{min} = \min\{x \in M^{\gamma}_j \mid x \geq x^\ast_{t,j}\}$ and $x_\text{max} = \max\{x \in M^{\gamma}_j \mid x \leq (2\gamma - 1) \cdot x^\ast_{t,j}\}$
	for all $t \in [T]$ and $j \in [d]$ (with $\vec{x}'_0 = \vec{0}$). Note that the invariant 
	\begin{equation}  \label{eqn:approx:eps:invariant}
	x^\ast_{t,j} \leq x'_{t,j} \leq (2\gamma - 1) \cdot x^\ast_{t,j}
	\end{equation}
	is always satisfied. The construction of $X'$ is visualized in Figure~\ref{fig:approx:xprime}. 
	
	\ifincludefigures
	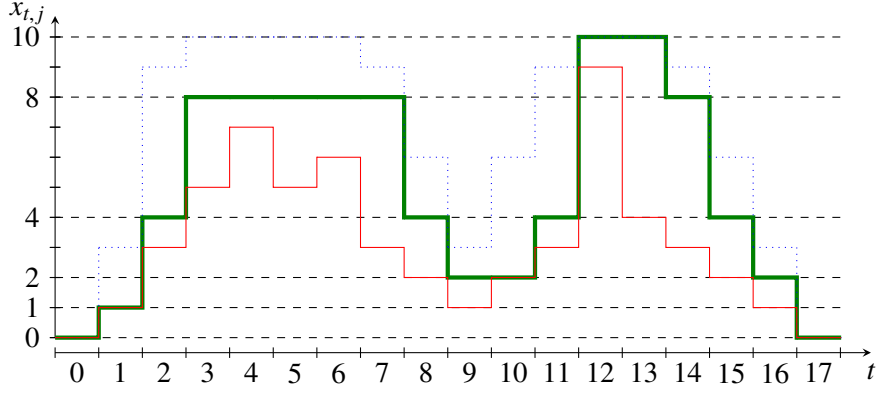
\begin{figure}[tb]

	\centering
	\begin{tikzpicture}
		\pgfmathsetmacro{\xBegin}{0}
		\pgfmathsetmacro{\xStep}{0.58} 
		\pgfmathsetmacro{\xLength}{18}
		\pgfmathsetmacro{\xMax}{\xLength - 1}
		
		\pgfmathsetmacro{\yBegin}{0}
		\pgfmathsetmacro{\yStep}{0.4} 
		\pgfmathsetmacro{\yXAxis}{\yBegin - 0.5 * \yStep}
		\pgfmathsetmacro{\yMax}{10}
		\pgfmathsetmacro{\yOptOffset}{-0.00}
		\pgfmathsetmacro{\yApproxOffset}{0.00}
		\pgfmathsetmacro{\yUpperOffset}{+0.00}
		
		\pgfmathsetmacro{\axisExtend}{0.7}
		\pgfmathsetmacro{\tickHalfLength}{0.07}
		
		\def\yvalues{0,1,2,4,8,10}
		\def\xopt{{0,1,3,5,7,5,6,3,2,1,2,3,9,4,3,2,1,0,0}}
		\def\xapprox{{0,1,4,8,8,8,8,8,4,2,2,4,10,10,8,4,2,0,0}}
		
		\tikzstyle{axis}=[-stealth]
		\tikzstyle{allowedstates}=[dashed]
		\tikzstyle{opt}=[thin,red]
		\tikzstyle{upper}=[dotted,blue]
		\tikzstyle{approx}=[ultra thick,green!50!black]
		
		\draw[axis] (\xBegin, \yXAxis) 
			to (\xBegin + \xStep * \xLength + \axisExtend * \xStep, \yXAxis) node[below] {$t$};
		\draw[axis] (\xBegin, \yBegin) 
			to (\xBegin, \yBegin + \yStep * \yMax + \axisExtend * \yStep) node[left, yshift=1pt] {$x_{t,j}$};
		\foreach \x in {0,...,\xLength} {
			\draw[-] 
				(\xBegin + \x * \xStep, \yXAxis + \tickHalfLength) -- 
				(\xBegin + \x * \xStep, \yXAxis - \tickHalfLength);
		}
		\foreach \x in {0,...,\xMax} {
			\node[below] at 
				(\xBegin + \x * \xStep + 0.5 * \xStep,	\yXAxis) 
				{$\x$};
		}
		\foreach \y in {0,...,\yMax} {
			\draw[-] 
				(\xBegin + \tickHalfLength, \yBegin + \y * \yStep) -- 
				(\xBegin - \tickHalfLength, \yBegin + \y * \yStep);
		}
		\foreach \y in {0,...,\yMax} {
			\draw[-] 
				(\xBegin + \tickHalfLength, \yBegin + \y * \yStep) -- 
				(\xBegin - \tickHalfLength, \yBegin + \y * \yStep);
		}
		\foreach \y in \yvalues {
			\node[left] at 
				(\xBegin - \tickHalfLength, \yBegin + \y * \yStep) {$\y$};
			\draw[allowedstates] (\xBegin, \yBegin + \y * \yStep)
				-- (\xBegin + \xLength * \xStep, \yBegin + \y * \yStep);
		}
		
		\foreach \x in {0,...,\xMax} {
			
			\pgfmathsetmacro{\xp}{\x + 1}
			\pgfmathsetmacro{\yfac}{\xapprox[\x]}
			\pgfmathsetmacro{\yfacp}{\xapprox[\xp]}
			\draw[approx] 
				(\xBegin + \x * \xStep, \yBegin + \yfac * \yStep + \yApproxOffset)
				-- (\xBegin + \xp * \xStep, \yBegin + \yfac * \yStep + \yApproxOffset)
				-- (\xBegin + \xp * \xStep, \yBegin + \yfacp * \yStep + \yApproxOffset)
				-- (\xBegin + \xp * \xStep + 0.01, \yBegin + \yfacp * \yStep + \yApproxOffset);
		}
		\foreach \x in {0,...,\xMax} {
			
			\pgfmathsetmacro{\xp}{\x + 1}
			\pgfmathsetmacro{\yfac}{min(\yMax,3*\xopt[\x])}
			\pgfmathsetmacro{\yfacp}{min(\yMax,3*\xopt[\xp])}
			\draw[upper] 
				(\xBegin + \x * \xStep, \yBegin + \yfac * \yStep + \yUpperOffset)
				-- (\xBegin + \xp * \xStep, \yBegin + \yfac * \yStep + \yUpperOffset);
			\draw[upper]
				(\xBegin + \xp * \xStep, \yBegin + \yfac * \yStep + \yUpperOffset)
				-- (\xBegin + \xp * \xStep, \yBegin + \yfacp * \yStep + \yUpperOffset);
		}
		\foreach \x in {0,...,\xMax} {
			
			\pgfmathsetmacro{\xp}{\x + 1}
			\pgfmathsetmacro{\yfac}{\xopt[\x]}
			\pgfmathsetmacro{\yfacp}{\xopt[\xp]}
			\draw[opt] 
				(\xBegin + \x * \xStep, \yBegin + \yfac * \yStep + \yOptOffset)
				-- (\xBegin + \xp * \xStep, \yBegin + \yfac * \yStep + \yOptOffset)
				-- (\xBegin + \xp * \xStep, \yBegin + \yfacp * \yStep + \yOptOffset)
				-- (\xBegin + \xp * \xStep + 0.01, \yBegin + \yfacp * \yStep + \yOptOffset);
		}
		
	\end{tikzpicture}
	\caption{
		{\normalfont (This figure is colored)} 
		Visualization of the construction of $X'$ (shown in green) for one specific server type $j$. In this example, we have $\gamma = 2$ and $m_j = 10$, so the allowed states for $X'$ are $M^\gamma_j = \{0,1,2,4,8,10\}$ (dashed horizontal lines). The optimal schedule $X^\ast$ is shown in red. The dotted blue line shows the value of $\min(m_j, (2\gamma - 1) x^\ast_{t,j})$. Note that the schedule $X'$ always stays between the red and the blue line and only changes the number of active servers to ensure the invariant.
	}
	\label{fig:approx:xprime}
	
\end{figure}
	\fi
	
	For the proof of Theorem~\ref{theo:approx:eps:theorem}, we will first show that the operating cost of $X'$ is a $(2\gamma - 1)$-approximation for the operating cost of $X^\ast$. For this, we need the following two technical lemmas.
	
	\begin{lemma} \label{lemma:approx:eps:min:a}
		Let $a_1, \dots, a_d \in [1, \infty[$ and let $h_1, \dots, h_d$ be arbitrary non-negative functions. It holds 
		\begin{equation}
		\min_{(z_1, \dots, z_d) \in \mathcal{Z}}  \sum_{j=1}^{d} a_j h_j(z_j) \geq \min_{(z_1, \dots, z_d) \in \mathcal{Z}}  \sum_{j=1}^{d} h_j(z_j).
		\end{equation}
	\end{lemma}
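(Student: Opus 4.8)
The plan is to prove this by a purely termwise monotonicity argument, exploiting that the \emph{same} feasible set $\mathcal{Z}$ appears on both sides, so no rebalancing of the job shares $z_j$ is required. Since each $a_j \geq 1$ and each $h_j$ is non-negative, for \emph{every} fixed argument we have the pointwise inequality $a_j h_j(z_j) \geq h_j(z_j)$. The only thing the lemma adds on top of this trivial observation is the passage to minima over $\mathcal{Z}$, which I would handle carefully to avoid any appeal to the existence of a minimizer.

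Concretely, first I would fix an arbitrary feasible assignment $(z_1, \dots, z_d) \in \mathcal{Z}$ and sum the pointwise bound over $j$, obtaining
\begin{equation*}
\sum_{j=1}^{d} a_j h_j(z_j) \;\geq\; \sum_{j=1}^{d} h_j(z_j) \;\geq\; \min_{(z'_1, \dots, z'_d) \in \mathcal{Z}} \sum_{j=1}^{d} h_j(z'_j),
\end{equation*}
where the second inequality is just the definition of the minimum on the right-hand side. The key point is that this chain holds for \emph{every} $(z_1, \dots, z_d) \in \mathcal{Z}$, and the right-most quantity does not depend on the chosen assignment.

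Since the displayed lower bound is valid for all feasible $(z_1, \dots, z_d)$, I would then take the minimum of the left-most expression over $\mathcal{Z}$; the inequality is preserved under taking the infimum/minimum of the left side, yielding exactly
\begin{equation*}
\min_{(z_1, \dots, z_d) \in \mathcal{Z}} \sum_{j=1}^{d} a_j h_j(z_j) \;\geq\; \min_{(z_1, \dots, z_d) \in \mathcal{Z}} \sum_{j=1}^{d} h_j(z_j),
\end{equation*}
as claimed.

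There is essentially no hard part here: the statement is a direct consequence of $a_j \geq 1$ together with non-negativity of the $h_j$. The only mild subtlety worth being explicit about is that I do not need the minimum on the left to be attained — the argument goes through by bounding an arbitrary feasible point and only afterwards taking the minimum, so the result is robust even though the $h_j$ are allowed to be arbitrary (possibly discontinuous) non-negative functions on the compact set $\mathcal{Z}$.
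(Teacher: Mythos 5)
Your proof is correct and follows essentially the same route as the paper's: both rest on the pointwise bound $a_j h_j(z_j) \geq h_j(z_j)$ followed by passing to the minimum over $\mathcal{Z}$. The only (cosmetic) difference is that the paper instantiates the left-hand minimum at an explicit minimizer $(\tilde{z}_1,\dots,\tilde{z}_d)$, whereas you bound an arbitrary feasible point and then minimize, which sidesteps the question of whether the minimum is attained.
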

	
	\begin{proof}
		Let $(\tilde{z}_1, \dots, \tilde{z}_d) \coloneqq \argmin_{(z_1, \dots, z_d) \in \mathcal{Z}}  \sum_{j=1}^{d} a_j h_j(z_j)$. Since $a_j \geq 1$ and $h_j(\cdot)$ is not negative, it holds that
		\begin{align*}
		\min_{(z_1, \dots, z_d) \in \mathcal{Z}}  \sum_{j=1}^{d} a_j h_j(z_j) &= \sum_{j=1}^{d} a_j h_j(\tilde{z}_j) 
		\geq \sum_{j=1}^{d} h_j(\tilde{z}_j) 
		\geq \min_{(z_1, \dots, z_d) \in \mathcal{Z}} \sum_{j=1}^{d} h_j(z_j). \qedhere 
		\end{align*}
	\end{proof}
	
	\begin{lemma} \label{lemma:approx:eps:min:b}
		Let $a_1, \dots, a_d \in [1, \infty[$ and let $h_1, \dots, h_d$ be arbitrary increasing functions. It holds  
		\begin{equation}
		\min_{(z_1, \dots, z_d) \in \mathcal{Z}}  \sum_{j=1}^{d} h_j(a_j z_j) \geq \min_{(z_1, \dots, z_d) \in \mathcal{Z}}  \sum_{j=1}^{d} h_j(z_j).
		\end{equation}
	\end{lemma}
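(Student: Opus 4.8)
The plan is to mimic the structure of the proof of Lemma~\ref{lemma:approx:eps:min:a}, but since here the factors $a_j$ sit \emph{inside} the arguments of the $h_j$, the naive substitution no longer lands in $\mathcal{Z}$, and we must instead exploit monotonicity together with a domination argument. First I would let $(\tilde z_1, \dots, \tilde z_d) \coloneqq \argmin_{(z_1,\dots,z_d)\in\mathcal{Z}} \sum_{j=1}^d h_j(a_j z_j)$ and set $y_j \coloneqq a_j \tilde z_j$, so that the left-hand side equals $\sum_{j=1}^d h_j(y_j)$. Because $a_j \geq 1$ and $\tilde z_j \geq 0$ we have $y_j \geq \tilde z_j \geq 0$ for every $j$.

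The key observation is that the scaled point $(y_1,\dots,y_d)$ dominates a feasible point of $\mathcal{Z}$ componentwise. Concretely, since $\tilde z_j \in [0,1]$ we have $\min(y_j,1) \geq \min(\tilde z_j, 1) = \tilde z_j$, whence $\sum_{j=1}^d \min(y_j,1) \geq \sum_{j=1}^d \tilde z_j = 1$. As each term $\min(y_j,1)$ lies in $[0,1]$ and the terms sum to at least $1$, I can pick values $z_j \in [0,\min(y_j,1)]$ with $\sum_{j=1}^d z_j = 1$ (for instance, start from $z_j = \min(y_j,1)$ and decrease entries until the total equals $1$). By construction $(z_1,\dots,z_d)\in\mathcal{Z}$ and $z_j \leq \min(y_j,1) \leq y_j$ for all $j$.

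It then remains to invoke monotonicity of the $h_j$: since $z_j \leq y_j$ and each $h_j$ is increasing, $h_j(z_j) \leq h_j(y_j)$, so
\[
\min_{(z_1,\dots,z_d)\in\mathcal{Z}} \sum_{j=1}^d h_j(z_j) \leq \sum_{j=1}^d h_j(z_j) \leq \sum_{j=1}^d h_j(y_j) = \min_{(z_1,\dots,z_d)\in\mathcal{Z}} \sum_{j=1}^d h_j(a_j z_j),
\]
which is the claimed inequality.

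I expect the only real obstacle to be the construction of the dominated feasible point: unlike in Lemma~\ref{lemma:approx:eps:min:a}, replacing $z_j$ by $a_j z_j$ may push a coordinate above $1$ or make the total exceed $1$, so the argmin point cannot be reused directly. The estimate $\sum_{j=1}^d \min(y_j,1) \geq 1$ is exactly what guarantees that a componentwise-smaller point of $\mathcal{Z}$ exists, and once that is secured the rest is an immediate appeal to the monotonicity of the $h_j$.
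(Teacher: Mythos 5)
Your proof is correct, but the obstacle you identify is illusory, and the paper's argument is a one-liner precisely because it never arises. You worry that ``replacing $z_j$ by $a_j z_j$ may push a coordinate above $1$ or make the total exceed $1$,'' but the inequality never requires the scaled point $(a_1\tilde z_1,\dots,a_d\tilde z_d)$ to lie in $\mathcal{Z}$. The paper simply takes the argmin $(\tilde z_1,\dots,\tilde z_d)$ of the left-hand side, notes that $a_j\tilde z_j\geq\tilde z_j$ (as $a_j\geq 1$ and $\tilde z_j\geq 0$) so monotonicity gives $h_j(a_j\tilde z_j)\geq h_j(\tilde z_j)$ termwise, and then observes that $(\tilde z_1,\dots,\tilde z_d)$ is \emph{itself} a feasible point of $\mathcal{Z}$, so $\sum_j h_j(\tilde z_j)\geq\min_{\mathcal{Z}}\sum_j h_j(z_j)$. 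In your language: the dominated feasible point you go looking for is just $(\tilde z_1,\dots,\tilde z_d)$, since $\tilde z_j\leq y_j$ already. Your construction via $\sum_j\min(y_j,1)\geq 1$ and redistributing mass is sound and yields the same bound, so nothing is wrong --- it is just machinery that does work the problem does not ask for. The one thing your detour buys is a slightly more general statement (the left-hand minimum could be taken over any set whose coordinatewise ``truncation'' still covers $\mathcal{Z}$), but for this lemma that generality is not needed.
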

	
	\begin{proof}
		Let $(\tilde{z}_1, \dots, \tilde{z}_d) \coloneqq \argmin_{(z_1, \dots, z_d) \in \mathcal{Z}}  \sum_{j=1}^{d} h_j(a_j z_j)$. Since $a_j \geq 1$ and $h_j$ is an increasing function, it holds that
		\begin{align*}
		\min_{(z_1, \dots, z_d) \in \mathcal{Z}}  \sum_{j=1}^{d} h_j(a_j z_j) &=  \sum_{j=1}^{d} h_j(a_j \tilde{z}_j) 
		\geq \sum_{j=1}^{d} h_j(\tilde{z}_j) +
		\geq \min_{(z_1, \dots, z_d) \in \mathcal{Z}}  \sum_{j=1}^{d} h_j(z_j). \qedhere 
		\end{align*} 
	\end{proof}

	Now, we are able to prove the approximation factor of the operating cost.
	
	\begin{lemma} \label{lemma:approx:eps:op}
		The operating cost of $X'$ is a $(2\gamma - 1)$-approximation, so $C_\text{op}(X') \leq (2\gamma - 1) \cdot C_\text{op}(X^\ast)$.
		
	\end{lemma}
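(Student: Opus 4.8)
The plan is to reduce everything to a single time slot. Since $C_\text{op}(X) = \sum_{t=1}^T g_t(\vec{x}_t)$, it suffices to establish the pointwise inequality $g_t(\vec{x}'_t) \leq (2\gamma - 1)\, g_t(\vec{x}^\ast_t)$ for every $t \in [T]$ and then sum over $t$. So I would fix $t$ and work entirely within that slot, exploiting that the invariant~\eqref{eqn:approx:eps:invariant} controls $\vec{x}'_t$ against $\vec{x}^\ast_t$ coordinatewise.

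First I would record the relevant scaling factors. For each $j \in [d]$ with $x^\ast_{t,j} > 0$ set $a_j \coloneqq x'_{t,j} / x^\ast_{t,j}$, and set $a_j \coloneqq 1$ whenever $x^\ast_{t,j} = 0$ (in which case the invariant forces $x'_{t,j} = 0$ as well). The invariant then yields exactly $1 \leq a_j \leq 2\gamma - 1$ for all $j$. Next, writing the per-type operating cost as $h_j(z) \coloneqq g_{t,j}(x'_{t,j}, z)$, the definition~\eqref{eqn:problem:gt} gives $g_t(\vec{x}'_t) = \min_{(z_1,\dots,z_d) \in \mathcal{Z}} \sum_{j=1}^d h_j(z_j)$ directly. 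The crucial algebraic observation is that, because $x'_{t,j} = a_j x^\ast_{t,j}$, the form of $g_{t,j}$ gives the identity $g_{t,j}(x^\ast_{t,j}, z_j) = \frac{1}{a_j} h_j(a_j z_j)$ (which also holds trivially in the degenerate case $x^\ast_{t,j} = 0$ by the convention $a_j = 1$). Hence
\[
g_t(\vec{x}^\ast_t) = \min_{(z_1,\dots,z_d) \in \mathcal{Z}} \sum_{j=1}^d \frac{1}{a_j} h_j(a_j z_j).
\]

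At this point the two technical lemmas do the work. Multiplying by $2\gamma - 1$ and regrouping the coefficient as $\frac{2\gamma-1}{a_j}$, which is $\geq 1$ since $a_j \leq 2\gamma - 1$, I would apply Lemma~\ref{lemma:approx:eps:min:a} to the non-negative functions $z \mapsto h_j(a_j z)$ to strip these coefficients, and then Lemma~\ref{lemma:approx:eps:min:b} with the factors $a_j \geq 1$ on the increasing functions $h_j$ to remove the inner scaling of the argument. Chaining the two bounds gives
\[
(2\gamma-1)\, g_t(\vec{x}^\ast_t) \;\geq\; \min_{(z_1,\dots,z_d) \in \mathcal{Z}} \sum_{j=1}^d h_j(z_j) \;=\; g_t(\vec{x}'_t),
\]
and summing over $t \in [T]$ yields $C_\text{op}(X') \leq (2\gamma - 1)\, C_\text{op}(X^\ast)$.

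The only delicate point I expect is the bookkeeping for server types with $x^\ast_{t,j} = 0$: there $a_j$ is not defined as a ratio, but the invariant collapses $x'_{t,j}$ to $0$ too, so the per-type cost $h_j$ coincides for both schedules and the convention $a_j = 1$ makes both the identity and the hypotheses of the two lemmas hold without a separate case split. I would also verify explicitly that each $h_j$ is non-negative and increasing — inherited from $f_{t,j}$ being non-negative and increasing together with $\lambda_t \geq 0$ — since this is precisely what Lemmas~\ref{lemma:approx:eps:min:a} and~\ref{lemma:approx:eps:min:b} require.
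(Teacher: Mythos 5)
Your proof is correct and follows essentially the same route as the paper: both reduce to a per-time-slot bound, use the invariant~\eqref{eqn:approx:eps:invariant} to produce factors between $1$ and $2\gamma-1$, and chain Lemma~\ref{lemma:approx:eps:min:a} with Lemma~\ref{lemma:approx:eps:min:b} (handling $x^\ast_{t,j}=0$ via $x'_{t,j}=0$ exactly as the paper does). The only cosmetic difference is that you start from $(2\gamma-1)\,g_t(\vec{x}^\ast_t)$ and bound downward, so your intermediate quantity is $\min\sum_j x'_{t,j} f_{t,j}(\lambda_t z_j/x^\ast_{t,j})$ rather than the paper's $\min\sum_j (2\gamma-1)x^\ast_{t,j} f_{t,j}(\lambda_t z_j/x'_{t,j})$, which changes nothing of substance.
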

	
	\begin{proof}
		The operating cost of $X'$ is
		\begin{align*}
		C_\text{op}(X') &= \sum_{t=1}^{T} g_t(x'_{t,1}, \dots, x'_{t,d}) \\
		&= \sum_{t=1}^{T} \min_{\substack{(z_1, \dots, z_d) \in \mathcal{Z}}}  \sum_{j=1}^{d} g_{t,j}(x'_{t,j}, z_j) \\
		&= \sum_{t=1}^{T} \min_{\substack{(z_1, \dots, z_d) \in \mathcal{Z}}}  \sum_{\substack{j=1 \\ x'_{t,j} > 0}}^{d}x'_{t,j} f_{t,j} \left( \frac{\lambda_t z_{j}}{x'_{t,j}} \right).
		\end{align*}
		Note that $x'_{t,j} = 0$ implies that $x^\ast_{t,j} = 0$, so the case $x'_{t,j} = 0$ and $\lambda_t z_j > 0$ does not occur. 	
		
		We know that $x'_{t,j} \leq (2\gamma - 1) x^\ast_{t,j}$, so we can use Lemma~\ref{lemma:approx:eps:min:a} with $a_j = (2\gamma - 1) \cdot x^\ast_{t,j} / x_{t,j}$ and $h_j(z_j) = x'_{t,j} f_{t,j} (\lambda_t z_j / x'_{t,j})$. For the special case $x'_{t,j} = 0$ we can simply set $a_j = 1$ and $h_j(z_j) = 0$. Since $x'_{t,j} = 0$ is equivalent to $x^\ast_{t,j} = 0$, we get
		\begin{align*}
		\sum_{t=1}^{T} \min_{\substack{(z_1, \dots, z_d) \in \mathcal{Z}}}  \sum_{\substack{j=1 \\ x'_{t,j} > 0}}^{d} x'_{t,j} f_{t,j} \left( \frac{\lambda_t z_{j}}{x'_{t,j}} \right) 
		\stackrel{L\ref{lemma:approx:eps:min:a}}{\leq}{}
		&
		\sum_{t=1}^{T} \min_{\substack{(z_1, \dots, z_d) \in \mathcal{Z}}}  \sum_{\substack{j=1 \\ x^\ast_{t,j} > 0}}^{d} (2\gamma - 1) x^\ast_{t,j} f_{t,j} \left( \frac{\lambda_t z_{j}}{x'_{t,j}} \right).
		\end{align*}
		
		Furthermore, we can apply Lemma~\ref{lemma:approx:eps:min:b} with $h_j(z_j) = (2\gamma - 1) \cdot x^\ast_{t,j} \cdot \allowbreak f_{t,j} \left( \lambda_t z_j / x'_{t,j} \right)$ and $a_j = x'_{t,j} / x^\ast_{t,j}$. Since $x'_{t,j}  \geq x^\ast_{t,j}$, $a_j \geq 1$ holds and
		\begin{align*}
		\sum_{t=1}^{T} \min_{(z_1, \dots, z_d) \in \mathcal{Z}}  \sum_{\substack{j=1 \\ x^\ast_{t,j} > 0}}^{d} (2\gamma - 1) \cdot x^\ast_{t,j} f_{t,j} \left( \frac{\lambda_t z_{j}}{x'_{t,j}} \right) 
		{}\stackrel{L\ref{lemma:approx:eps:min:b}}{\leq}{}
		& 
		\sum_{t=1}^{T} \min_{\substack{(z_1, \dots, z_d) \in \mathcal{Z}}}  \sum_{\substack{j=1 \\ x^\ast_{t,j} > 0}}^{d} (2\gamma - 1) \cdot x^\ast_{t,j} f_{t,j} \left( \frac{\lambda_t z_{j}}{x^\ast_{t,j}} \right) \\
		\stackrel{\phantom{L\ref{lemma:approx:eps:min:b}}}{=}{}& (2\gamma - 1) \cdot C_\text{op}(X^\ast).
		\end{align*}
		
		Thus, we have a $(2\gamma - 1)$-approximation for the operating cost.
	\end{proof}
	
	Next, we will estimate the switching cost of $X'$:
	
	\begin{lemma} \label{lemma:approx:eps:sw}
		The switching cost of $X'$ is a $(2\gamma - 1)$-approximation, so $C_\text{sw}(X') \leq (2\gamma - 1) \cdot C_\text{sw}(X^\ast)$.
	\end{lemma}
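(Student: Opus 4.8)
The plan is to bound the switching cost separately for each server type, exploiting that each term $\beta_j (x'_{t,j}-x'_{t-1,j})^+$ of $C_\text{sw}$ is separable in $j$. Fix a type $j$ and write $M^{\gamma}_j = \{\ell_0 < \ell_1 < \dots < \ell_p\}$ with $\ell_0 = 0$ and $\ell_p = m_j$; the defining property of $M^{\gamma}_j$ guarantees $\ell_i \le \gamma\,\ell_{i-1}$ whenever $\ell_{i-1}\ge 1$. Since $X'$ (like $X^\ast$) starts and ends at $\vec{0}$, the total upward movement of $x'_{\cdot,j}$ equals its total downward movement, so it suffices to show that the total upward movement of $x'_{\cdot,j}$ is at most $(2\gamma-1)$ times that of $x^\ast_{\cdot,j}$; multiplying by $\beta_j$ and summing over $j$ then yields the lemma.

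First I would extract the two structural facts forced by the construction rule~\eqref{eqn:approx:eps:rule} together with the invariant~\eqref{eqn:approx:eps:invariant}. As $x'_{\cdot,j}$ takes values only in $M^{\gamma}_j$, I track, for each gap $[\ell_{i-1},\ell_i]$, the number $c_i$ of upward crossings of that gap by $x'_{\cdot,j}$. By the rule, $x'$ moves up only in the first case, to $x_\text{min}\le\gamma\,x^\ast_{t,j}$, which reaches $\ell_i$ only if $x^\ast_{t,j}>\ell_{i-1}$; and $x'$ falls below $\ell_i$ only in the third case, to $x_\text{max}\le(2\gamma-1)\,x^\ast_{t,j}$, which happens only if $x^\ast_{t,j}<\ell_i/(2\gamma-1)$. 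Hence, between two consecutive upward crossings of the gap by $X'$ (and before the first one, starting from $\vec{0}$), the optimum $x^\ast_{\cdot,j}$ must first drop below $\ell_i/(2\gamma-1)$ and then rise above $\ell_{i-1}$, i.e.\ it performs an upward movement through the entire band $B_i\coloneqq(\ell_i/(2\gamma-1),\,\ell_{i-1}]$.

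Next I would make the charging quantitative via the layer-cake identity: for an integer path, the total upward movement equals $\int_0^\infty N(y)\,dy$, where $N(y)$ counts upward crossings of level $y$. The upward movement of $x'_{\cdot,j}$ equals $\sum_i c_i(\ell_i-\ell_{i-1})$, while the previous paragraph shows $N^\ast(y)\ge c_i$ for every $y\in B_i$. The single per-gap inequality I then need is
\[
\ell_i-\ell_{i-1}\;\le\;(2\gamma-1)\,|B_i|\;=\;(2\gamma-1)\Bigl(\ell_{i-1}-\tfrac{\ell_i}{2\gamma-1}\Bigr),
\]
which rearranges to $2\ell_i\le 2\gamma\,\ell_{i-1}$, i.e.\ exactly the ratio bound $\ell_i\le\gamma\,\ell_{i-1}$ (equivalently $(\gamma-1)^2\ge0$). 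Summing over $i$ and integrating then gives $\sum_i c_i(\ell_i-\ell_{i-1})\le(2\gamma-1)\int_0^\infty N^\ast(y)\,dy$, the desired bound for type $j$.

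The step I expect to be the main obstacle is ensuring that the upward movement of $X^\ast$ is charged only once, i.e.\ that the bands $B_i$ may legitimately be summed against $\int N^\ast$. This needs the $B_i$ to be essentially disjoint, whose separation condition $\ell_{i+1}/(2\gamma-1)\ge\ell_{i-1}$ is again $(\gamma-1)^2\ge0$ and holds on the geometric part of $M^{\gamma}_j$; the regime of nearly consecutive levels, where the bands would overlap, is exactly the regime in which each gap has width close to $1$ and can instead be charged to a single integer level crossed by $x^\ast_{\cdot,j}$, so no amplification beyond factor $1$ arises there. I would also treat the lowest gap $[\,0,1\,]$ separately, as $B_1$ is empty: powering up the first server of type $j$ forces $x^\ast_{\cdot,j}\ge1$ while the matching power-down forces $x^\ast_{\cdot,j}=0$, so that crossing is charged directly to $x^\ast_{\cdot,j}$ crossing level $1$. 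Throughout, the integrality of $x^\ast_{\cdot,j}$ is what turns ``$x^\ast>\ell_{i-1}$'' into ``$x^\ast\ge\ell_{i-1}+1$'' and lets the boundary levels be handled cleanly.
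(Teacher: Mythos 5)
Your overall strategy---charging level by level via a layer-cake identity rather than phase by phase in time---is genuinely different from the paper's proof, and the two structural facts you extract from rule~\eqref{eqn:approx:eps:rule} (an upward crossing of the gap $[\ell_{i-1},\ell_i]$ by $X'$ forces $x^\ast_{t,j}>\ell_{i-1}$; a downward crossing forces $x^\ast_{t,j}<\ell_i/(2\gamma-1)$) are exactly the analogues of the paper's inequalities~\eqref{eqn:approx:eps:sw:before} and~\eqref{eqn:approx:eps:sw:after}. But there is a genuine gap. First, the ``defining property'' you invoke, $\ell_i\le\gamma\,\ell_{i-1}$ for consecutive elements of $M^{\gamma}_j$, is false: for $\gamma$ close to $1$ the set $M^{\gamma}_j$ contains \emph{every} integer up to roughly $1/(\gamma-1)$, and already $\ell_{i-1}=1$, $\ell_i=2$ violates the bound for any $\gamma<2$ (the correct statement is $\ell_i\le\max(\ell_{i-1}+1,\,\gamma\ell_{i-1})$). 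Consequently your per-gap inequality $\ell_i-\ell_{i-1}\le(2\gamma-1)\,|B_i|$ fails throughout the consecutive-integer regime, where $B_i$ can even be empty (with $\gamma=1.1$, $\ell_{i-1}=5$, $\ell_i=6$ one gets $B_i=(5,5]=\emptyset$). You partly anticipate this and propose charging width-one gaps to a single integer level instead, which by itself is sound.

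The unrepaired part is the double counting you yourself flag as the main obstacle. The bands $B_i$ are pairwise disjoint only deep in the geometric part of $M^{\gamma}_j$; at the interface, the band of the first wide gap reaches down into the consecutive-integer region and covers precisely the levels $v$ to which you charge the width-one gaps $[v-1,v]$ (for $\gamma=1.1$ the first wide gap is $[46,49]$ with band $(40.8\overline{3},47]$, colliding with the targets of $[41,42],\dots,[45,46]$). At such a level $v$ your scheme charges $c_{[v-1,v]}+c_i$ against the single count $N^\ast(v)$, while the argument only yields $N^\ast(v)\ge\max\bigl(c_{[v-1,v]},\,c_i\bigr)$; for $\gamma$ close to $1$ (the relevant regime, since $\gamma=1+\epsilon/2$) the factor $2\gamma-1$ leaves no slack to absorb a factor-two overcharge, so the conclusion does not follow without an additional interleaving argument showing that the two families of forced crossings of level $v$ by $x^\ast_{\cdot,j}$ occur at distinct times. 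The paper avoids this entirely by decomposing in \emph{time}: it splits $[T]$ into maximal up/down phases of $x'_{\cdot,j}$ and compares the total decrease of $X'$ over one down phase with the decrease of $X^\ast$ over the same window using only the endpoint inequalities~\eqref{eqn:approx:eps:sw:before}--\eqref{eqn:approx:eps:sw:plusbegin}, so no disjointness of charging targets ever arises. (Be aware that the paper's step~\eqref{eqn:approx:eps:sw:plusdist} quietly relies on the same too-strong ratio bound and, in the consecutive-integer regime, needs the integrality of $x^\ast_{t,j}$ to be repaired---but that repair is local to one inequality, whereas in your scheme the failure propagates into the global accounting.)
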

	
	\begin{proof}
		Instead of paying the switching cost for powering up, it is also possible to count the switching costs for powering down (because the first and last state in a schedule is always $\vec{x}_0 = \vec{x}_{T+1} = \vec{0}$ by definition). We will consider the switching cost for each server type separately. For sake of simplicity we will write $x'_{t}$ instead of $x'_{t,j}$ and $x^\ast_t$ instead of $x^\ast_{t,j}$. 
		
		The whole time interval $[T]$ is divided into smaller intervals denoted by $T_1, \dots, T_k$ such that in the \emph{odd} intervals (i.e., $T_1, T_3, \dots$) servers of type $j$ are powered up in $X'$ and in the \emph{even} intervals (i.e., $T_2, T_4, \dots$) servers are powered down. If there are time slots between two intervals where the number of active servers in $X'$ does not change, these time slots belong to the latter interval. Thus, in the last time slot of each interval (except the last one) at least one server is powered up or powered down. 
		Let $t_i$ be the last time slot in $T_i$ and $t_0 \coloneqq 0$, so $T_i = [t_{i-1}+1 : t_i]$.
		
		
		Let $D_{[a:b]}(X) \coloneqq \sum_{t = a}^b (x_{t-1,j} - x_{t,j})^+$ be the number of servers in $X$ that are powered down during the time interval $[a:b]$. For each time interval $I \in \{T_1, \dots, T_k\}$, we have to prove that
		\begin{equation} \label{eqn:approx:eps:sw:intervalcost}
		D_I(X') \leq (2\gamma - 1) \cdot D_I(X^\ast).
		\end{equation} 
		In the odd intervals no servers are powered down, so the inequality is always satisfied.
		Let $T_i = [t_{i-1}+1:t_i] \in \{T_2, T_4, \dots\}$ be an even interval. During time slot $t_{i-1}$ (this is the last time slot of the previous interval) servers were powered up in $X'$. By the definition of $X'$ (see equation~\eqref{eqn:approx:eps:rule}), this implies that 
		\begin{equation} \label{eqn:approx:eps:sw:before}
		x^\ast_{t_{i-1}} > x'_{t_{i-1}} / \gamma
		\end{equation}
		because otherwise we had not reach the state $x'_{t_{i-1}}$. 
		
		The last state in the time interval $T_i$ is $x'_{t_i}$. Of course, this state satisfies the invariant~\eqref{eqn:approx:eps:invariant}.
		In contrast, the next larger state in $M^{\gamma}_j$ denoted by $x^+_{t_i} \coloneqq N_j(x'_{t_i})$ does not satisfy it, so
		\begin{equation} \label{eqn:approx:eps:sw:after}
		x^+_{t_i} > (2\gamma - 1) \cdot x^\ast_{t_i}
		\end{equation}
		because otherwise the last state of $T_i$ would be $x^+_{t_i}$.
		By the definition of $M^{\gamma}_j$, the relative distance between $x'_{t_i}$ and $x^+_{t_i}$ is at most $\gamma$, i.e.,
		\begin{equation} \label{eqn:approx:eps:sw:plusdist}
		\frac{x^+_{t_i}}{\gamma} \leq x'_{t_i}.
		\end{equation}
		Furthermore, we know that
		\begin{equation} \label{eqn:approx:eps:sw:plusbegin}
		x'_{t_{i-1}} \geq x^+_{t_i}
		\end{equation}
		because during the time interval at least one server is powered down. 
		
		By using the inequalities~\eqref{eqn:approx:eps:sw:before}, \eqref{eqn:approx:eps:sw:after}, \eqref{eqn:approx:eps:sw:plusdist} and~\eqref{eqn:approx:eps:sw:plusbegin} as well as $\gamma > 1$, we can prove~\eqref{eqn:approx:eps:sw:intervalcost}:
		\begin{align*}
		(2\gamma - 1) \cdot D_{[t_{i-1}+1:t_i]}(X^\ast) &\geq (2\gamma - 1)\cdot (x^\ast_{t_{i-1}} - x^\ast_{t_i}) \\
		&\alignstack{\eqref{eqn:approx:eps:sw:before}}{>} (2\gamma - 1)\cdot \frac{x'_{t_{i-1}}}{\gamma} - (2\gamma - 1) \cdot x^\ast_{t_i} \\
		&\alignstack{\eqref{eqn:approx:eps:sw:after}}{>}  (2\gamma - 1)\cdot \frac{x'_{t_{i-1}}}{\gamma} - x^+_{t_i}  \\
		&\alignstack{\eqref{eqn:approx:eps:sw:plusdist}}{\geq}  (2\gamma - 1)\cdot \frac{x'_{t_{i-1}}}{\gamma} - x^+_{t_i}  + \frac{x^+_{t_i}}{\gamma} - x'_{t_i}\\
		&= x'_{t_{i-1}} - x'_{t_i} + (1 - 1/\gamma) (x'_{t_{i-1}} - x^+_{t_i}) \\
		&\alignstack{\eqref{eqn:approx:eps:sw:plusbegin}}{>} x'_{t_{i-1}} - x'_{t_i} \\
		&= D_{[t_{i-1}+1:t_i]}(X'). \numberthis \label{eqn:approx:eps:sw:intervalcost:proven}
		\end{align*}
		The last inequality holds, because $(1 - 1/\gamma) > 0$ and $x'_{t_{i-1}} - x^+_{t_i} \geq 0$.
		
		For the whole workload and all server types, we get
		\begin{align*}
		C_\text{sw}(X') 
		&= \sum_{j=1}^{d} \beta_j \cdot  \sum_{i=1}^{k} D_{T_i}(X') \\
		&\alignstack{\eqref{eqn:approx:eps:sw:intervalcost:proven}}{>} \sum_{j=1}^{d} \beta_j \cdot  \sum_{i=1}^{k} (2\gamma - 1) D_{T_i}(X^\ast) \\
		&= (2\gamma - 1) \cdot C_\text{sw}(X^\ast),
		\end{align*} 
		so the schedule $X'$ is a $(2\gamma - 1)$-approximation according to the switching cost. \qedllncs
	\end{proof}
	
	Finally, we can prove theorem~\ref{theo:approx:eps:theorem}:
	
	\begin{proof}[\textbf{Proof of Theorem~\ref{theo:approx:eps:theorem}}]
		By lemma~\ref{lemma:approx:eps:op} and~\ref{lemma:approx:eps:sw}, it holds
		\begin{align*}
		C(X^{\gamma}) &\leq C(X') \\
		&= C_\text{op}(X') + C_\text{sw}(X') \\
		&\leq (2\gamma - 1) \cdot C_\text{op}(X^\ast)  + (2\gamma - 1) \cdot C_\text{sw}(X^\ast) \\
		&= (2\gamma - 1) \cdot C(X^\ast).
		\end{align*}
		Therefore, $X^{\gamma}$ is a $(2\gamma - 1)$-approximation.\qedllncs
	\end{proof}
	

If we set $\gamma = (1+\epsilon / 2)$, then $2\gamma -1 = (1+\epsilon)$, so we have a $(1+\epsilon)$-approximation. For server type~$j$, there are $|M^{\gamma}_j| \in \mathcal{O}(\log_{\gamma} m_j) =  \mathcal{O}(\log_{1+\epsilon} m_j)$ different values that are used by our graph-based algorithm. Thus the graph consists of
\begin{equation*}
\mathcal{O}\left(T \cdot \prod_{j=1}^{d} \log_{1+\epsilon} m_j\right)
\end{equation*}
vertices which is also the algorithm's runtime. 
For $\epsilon < 1$ (usually we are not interested in $\epsilon$-values that are bigger than~1) the term $\frac{1}{\log (1 + \epsilon)}$ can be written as $1/\epsilon + O(1)$, so the runtime is
$\mathcal{O}\left(T \cdot \epsilon^{-d} \cdot \prod_{j=1}^{d} \log m_j\right)$.
We summarize our results in the following theorem:

\begin{theorem} \label{theo:approx:eps:time}
	Given the problem instance $\mathcal{I}$, a $(1+\epsilon)$-\hspace{0pt}approximation can be calculated in 
	\begin{equation*}
	\mathcal{O}\left(T \cdot \epsilon^{-d} \cdot \prod_{j=1}^{d} \log m_j\right)
	\end{equation*}
	 time. 
\end{theorem}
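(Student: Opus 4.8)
The plan is to combine the approximation guarantee already established in Theorem~\ref{theo:approx:eps:theorem} with a vertex- and edge-counting argument for the reduced graph $G^\gamma$, and then to compute a shortest path efficiently. First I would set $\gamma \coloneqq 1 + \epsilon/2$, so that $2\gamma - 1 = 1 + \epsilon$. By Theorem~\ref{theo:approx:eps:theorem}, the schedule $X^\gamma$ obtained from a shortest path $P^\gamma$ in $G^\gamma$ then satisfies $C(X^\gamma) \le (1+\epsilon) \cdot C(X^\ast)$, which settles the approximation factor immediately. The remaining work is purely a running-time analysis.

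Next I would bound the size of $G^\gamma$. For each server type $j$ the admissible values form the set $M^\gamma_j$, and from its definition $|M^\gamma_j| \in \mathcal{O}(\log_\gamma m_j)$, since the rounded powers $\lfloor\gamma^k\rfloor,\lceil\gamma^k\rceil$ stay below $m_j$ only for $k \in \mathcal{O}(\log_\gamma m_j)$. Hence the configuration set has size $|\mathcal{M}^\gamma| = \prod_{j=1}^{d} |M^\gamma_j| \in \mathcal{O}\big(\prod_{j=1}^{d} \log_\gamma m_j\big)$, and with two vertices per time slot and configuration the graph has $\mathcal{O}\big(T \cdot \prod_{j=1}^{d} \log_\gamma m_j\big)$ vertices. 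Each vertex has out-degree $\mathcal{O}(d)$ (one operating edge, at most $d$ power-up or power-down edges, and one time-transition edge), so the number of edges exceeds the number of vertices only by a factor $\mathcal{O}(d)$, which is absorbed into the bound when $d$ is treated as a constant.

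Then I would argue that a shortest path is computable in time linear in the graph size. The key observation is that $G^\gamma$ is a directed acyclic graph: within a time slot the $\uparrow$-vertices are linked only toward increasing server counts, the operating edge goes from $v^\uparrow_{t,\vec{x}}$ to $v^\downarrow_{t,\vec{x}}$, the $\downarrow$-vertices are linked only toward decreasing counts, and the time-transition edges strictly advance $t$. A single topological sweep with dynamic programming therefore yields a shortest path in time $\mathcal{O}\big(T \cdot \prod_{j=1}^{d} \log_\gamma m_j\big)$.

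Finally I would convert the logarithm base. Writing $\log_\gamma m_j = \log m_j / \log\gamma$ gives $\prod_{j=1}^{d} \log_\gamma m_j = (\log\gamma)^{-d} \prod_{j=1}^{d} \log m_j$, and for $\epsilon < 1$ the estimate $1/\log(1+\epsilon/2) = 1/\epsilon + \mathcal{O}(1)$ shows $(\log\gamma)^{-d} \in \mathcal{O}(\epsilon^{-d})$, producing the claimed bound $\mathcal{O}\big(T \cdot \epsilon^{-d} \cdot \prod_{j=1}^{d} \log m_j\big)$. The main obstacle is not any single computation but stating the acyclicity and linear-time shortest-path claim cleanly and verifying that the $\mathcal{O}(d)$ edge-density factor is correctly absorbed; the approximation ratio itself follows directly from Theorem~\ref{theo:approx:eps:theorem}.
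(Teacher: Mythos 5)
Your proposal is correct and follows essentially the same route as the paper: set $\gamma = 1+\epsilon/2$ to invoke Theorem~\ref{theo:approx:eps:theorem}, bound $|M^\gamma_j| \in \mathcal{O}(\log_\gamma m_j)$ to count the vertices of $G^\gamma$, compute a shortest path by a dynamic-programming sweep over the layered (acyclic) graph, and convert the logarithm base via $1/\log(1+\epsilon/2) \in \mathcal{O}(1/\epsilon)$. The extra detail you give on acyclicity and the $\mathcal{O}(d)$ out-degree is a harmless refinement of what the paper leaves implicit.
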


\subsection{Time-dependent data-center size}
\label{sec:approx:mtj}

In practice, the size of a data center can change over time. If a data center is extended with new servers of type $j$, then $m_j$ increases. If parts of the data center are shut down for maintenance, $m_j$ decreases temporarily. Let $m_{t,j}$ denote the total number of servers of type~$j$ at time slot $t$. In the following, we will show that the approximation algorithm still works in this setting. 

Let $M_{t,j} \coloneqq [m_{t,j}]_0$ and $\mathcal{M}_t = \varprod_{j=1}^d M_{t,j}$ be the allowed server configurations at time slot~$t$. The vertices in $G$ that represent unavailable server configurations are removed along with the incident edges. The shortest path in the new graph represents an optimal schedule. For the approximation, let 
\begin{align*}
M^\gamma_{t,j} \coloneqq \{0, m_{t,j}\} \cup \{\lfloor \gamma^k \rfloor \in M_{t,j} \mid k \in \mathbb{N} \} \cup \{\lceil \gamma^k \rceil \in M_{t,j} \mid k \in \mathbb{N} \}
\end{align*}
and let $\mathcal{M}^\gamma_t \coloneqq \varprod_{j=1}^d M^\gamma_{t,j}$ be the considered server configurations. The resulting graph is denoted by~$\bar{G}^\gamma$. Theorem~\ref{theo:approx:eps:theorem} still hold for the modified graph, i.e., the schedule that corresponds to the shortest path in $\bar{G}^\gamma$ is a $(2 \gamma - 1)$-approximation. The following theorem shows that a $(1 + \epsilon)$-approximation can still be calculated in polynomial time (if $d$ is a constant). The proof is analogous to that of Theorem~\ref{theo:approx:eps:time}.

\begin{theorem}
	Given the problem instance $\mathcal{I}$ where the total number of available servers depends on time, a $(1 + \epsilon)$-approximation can be calculated in 
	\begin{equation*}
	\mathcal{O}\left(\epsilon^{-d} \cdot \sum_{t=1}^{T} \prod_{j=1}^{d} \log m_{t,j}\right) \subseteq \mathcal{O}\left(T \cdot \epsilon^{-d} \cdot \prod_{j=1}^d \log \max_{t \in [T]} m_{t,j}\right)
	\end{equation*} 
	time.
\end{theorem}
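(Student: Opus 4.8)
The plan is to reuse the structure of the proof of Theorem~\ref{theo:approx:eps:time} almost verbatim, since the only change is that the per-coordinate value sets $M^\gamma_{t,j}$ (and hence the number of vertices) now vary with $t$. First I would set $\gamma \coloneqq 1 + \epsilon/2$, so that $2\gamma - 1 = 1 + \epsilon$. By the assertion preceding the theorem, Theorem~\ref{theo:approx:eps:theorem} carries over to the modified graph $\bar{G}^\gamma$, so the schedule $X^\gamma$ corresponding to a shortest path from $v^\uparrow_{1,\vec{0}}$ to $v^\downarrow_{T,\vec{0}}$ in $\bar{G}^\gamma$ is a $(2\gamma-1) = (1+\epsilon)$-approximation. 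This settles correctness; what remains is the runtime bound.

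Next I would count the vertices of $\bar{G}^\gamma$. As in the constant-size case, for each $j \in [d]$ and each $t \in [T]$ we have $|M^\gamma_{t,j}| \in \mathcal{O}(\log_\gamma m_{t,j})$, because $M^\gamma_{t,j}$ contains at most the two representatives $\lfloor \gamma^k \rfloor, \lceil \gamma^k \rceil$ per exponent $k$ together with $0$ and $m_{t,j}$. At time slot $t$ there are $\prod_{j=1}^d |M^\gamma_{t,j}|$ admissible configurations in $\mathcal{M}^\gamma_t$ and two vertices ($\uparrow,\downarrow$) per configuration, so summing over all time slots gives $\mathcal{O}\big(\sum_{t=1}^T \prod_{j=1}^d \log_\gamma m_{t,j}\big)$ vertices. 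Using the base change $\log_\gamma m_{t,j} = \log m_{t,j}/\log(1+\epsilon/2)$ together with the fact that for $\epsilon < 1$ one has $1/\log(1+\epsilon/2) = \epsilon^{-1} + \mathcal{O}(1)$, each factor is $\mathcal{O}(\epsilon^{-1} \log m_{t,j})$, and pulling out the $d$ factors of $\epsilon^{-1}$ yields a vertex count of $\mathcal{O}\big(\epsilon^{-d} \sum_{t=1}^T \prod_{j=1}^d \log m_{t,j}\big)$.

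I would then argue that the shortest-path computation matches this bound. The graph $\bar{G}^\gamma$ is a directed acyclic graph whose layers are topologically ordered by time slot and, within a time slot, by the $\uparrow$-phase followed by the $\downarrow$-phase; hence a shortest path can be found by a single pass of dynamic programming in time $\mathcal{O}(|V| + |E|)$. Since each vertex is incident to its one operating edge, to the $\mathcal{O}(d)$ switching edges that move one coordinate to its neighbour in $M^\gamma_{t,j}$, and to at most one time-transition edge, and since $d$ is treated as constant, we have $|E| = \mathcal{O}(|V|)$; so the running time equals the vertex count derived above. Finally, the stated inclusion follows from the coordinatewise bound $\log m_{t,j} \le \log \max_{t' \in [T]} m_{t',j}$, which gives $\sum_{t=1}^T \prod_{j=1}^d \log m_{t,j} \le T \cdot \prod_{j=1}^d \log \max_{t \in [T]} m_{t,j}$.

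The only genuine subtlety, and the step I would check most carefully, is that the $(2\gamma-1)$-approximation guarantee truly survives the passage to time-dependent sizes: one must verify that the construction of the comparison schedule $X'$ in the proof of Theorem~\ref{theo:approx:eps:theorem} never leaves the admissible set, i.e., that the invariant $x^\ast_{t,j} \le x'_{t,j} \le (2\gamma-1)x^\ast_{t,j}$ combined with $x^\ast_{t,j} \le m_{t,j}$ keeps $x'_{t,j} \in M^\gamma_{t,j}$ (using that $m_{t,j} \in M^\gamma_{t,j}$ always caps the rounding-up step). Everything else is a direct transcription of the constant-size analysis with $m_j$ replaced by $m_{t,j}$ throughout.
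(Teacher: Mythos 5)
Your proposal is correct and follows essentially the same route as the paper, which simply declares the proof analogous to that of Theorem~\ref{theo:approx:eps:time}: choose $\gamma = 1+\epsilon/2$, count $\mathcal{O}(\log_\gamma m_{t,j})$ values per coordinate and time slot, sum the per-slot vertex counts, and use $1/\log(1+\epsilon/2) = \epsilon^{-1} + \mathcal{O}(1)$. You are in fact more careful than the paper in flagging that the invariant $x^\ast_{t,j} \le x'_{t,j} \le (2\gamma-1)x^\ast_{t,j}$ must be rechecked against the now time-varying sets $M^\gamma_{t,j}$ (in particular when $x'_{t-1,j} \in M^\gamma_{t-1,j}$ need not lie in $M^\gamma_{t,j}$), a point the paper asserts without comment.
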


	\appendix

	\newpage
\section{Variables and notation}
\label{sec:appendix:variables}

The following table gives an overview of the variables defined in this paper. 

\begingroup
	\setlength{\tabcolsep}{6pt}
	\renewcommand{\arraystretch}{1.3}
	\small
	
	\centering
	\begin{longtable}{|l|p{13.0cm}|}
		\hline
		\textbf{Variable} & \textbf{Description} \\
		\hline
		\endhead
		\hline
		\endfoot
		$A_{j,i}$ & Block that contains the time slots when a server of type $j$ is in the active state. Formally, $A_{j,i} \coloneqq [s_{j,i} : s_{j,i} + \bar{t}_j - 1]$ (in Section~\ref{sec:online}) and $A_{j,i} \coloneqq [s_{j,i} : s_{j,i} + \bar{t}_{t,j}]$ (in Section~\ref{sec:online:time}). \\ 
		$\mathcal{A}$ & Our online algorithm for time-independent operating cost functions $f_j$ (see Section~\ref{sec:online}). \\
		$B_{j,k}$ & Set of the indices $i$ of the blocks $A_{j,i}$ that contain the special time slot $\tau_{j,k}$. Formally, $B_{j,k} \coloneqq \{i \in [n_j] \mid A_{j,i} \ni \tau_{j,k}\}$. An example is shown in Figure~\ref{fig:online:tau}. \\
		$\mathcal{B}$ & Our online algorithm for time-dependent operating cost functions $f_{t,j}$ achieving a competitive ratio of $2d + 1 + c(\mathcal{I})$ (see Section~\ref{sec:online:time:ci}). \\
		$\beta_j$ & Switching cost of server type $j$. \\
		$c(\mathcal{I})$ & Constant depending on the problem instance $\mathcal{I}$, $c(\mathcal{I}) \coloneqq \sum_{j = 1}^{d}\max_{t \in [T]} {l_{t,j}}/{\beta_j}$. \\ 
		$C(X)$ & Total cost of the schedule $X$ (see equation~\eqref{eqn:problem:costx}). \\
		$C_I(X)$ & Total cost of the schedule $X$ during the time interval $I$. \newline Formally, $C_I(X) \coloneqq \sum_{t \in I} \left( g_t(\vec{x}_t) + \sum_{j=1}^{d} \beta_j (x_{t,j} - x_{t-1,j})^+ \right)$. \\
		$C_\text{op}(X), C^\mathcal{J}_\text{op}(X)$ & Operating cost of the schedule $X$ (regarding the problem instance $\mathcal{J}$), $C_\text{op}(X) \coloneqq \sum_{t=1}^{T} g_t(\vec{x}_t)$. \\
		$C_\text{sw}(X), C^\mathcal{J}_\text{sw}(X)$ & Switching cost of the schedule $X$  (regarding the problem instance $\mathcal{J}$), \newline $C_\text{sw}(X) \coloneqq \sum_{t=1}^{T} \sum_{j=1}^{d} \beta_j (x_{t,j} - x_{t-1,j})^+$. \\
		$\mathcal{C}$ & Our online algorithm for time-dependent operating cost function $f_{t,j}$ achieving a competitive ratio of $2d + 1 + \epsilon$ (see Section~\ref{sec:online:time:epsilon}). \\
		$\gamma$ & Parameter used for the approximation algorithm. The ratio between two consecutive states is at most~$\gamma$. \\
		$d$ & Number of server types. \\
		$f_{t,j}(z)$ & Operating cost of a single server of type $j$ running with load $z \in [0, z^\text{max}_j]$ at time slot $t$. \\
		$g_{t,j}(x,z)$ & Operating cost for $x$ servers of type $j$ processing the fraction $z$ of the job volume $\lambda_t$ at time slot $t$. \\ 
		$g_t(\vec{x}_t)$ & Operating cost during time slot $t$ for the server configuration $\vec{x}_t$, see equation~\eqref{eqn:problem:gt}. \\
		$\tilde{g}_u(\vec{x}_u)$ & Operating cost in the modified problem instance $\tilde{\mathcal{I}}$ during time slot $u$ for the server configuration $\vec{x}_t$. \\
		$G, G(\mathcal{I})$ & Graph used for the optimal offline algorithm. \\
		$G^\gamma$ & Graph used for the approximation algorithm with parameter $\gamma$. \\
		$H_{j,i}$ & Switching and idle operating cost of block $A_{j,i}$ (see equation~\eqref{lemma:online:func:dji} for algorithm $\mathcal{A}$ or equation~\eqref{eqn:online:time:hji} for algorithm~$\mathcal{B}$). \\
		$\mathcal{I}$ & Problem instance. Formally,  $\mathcal{I} \coloneqq (T, d, \vec{m}, \vec{\beta}, F, \Lambda)$. \\
		$\mathcal{I}^t$ & Problem instance that ends at time slot $t$. Formally, $\mathcal{I}^t \coloneqq (t, d, \vec{m}, \vec{\beta}, F, \Lambda^t)$. \\
		$\tilde{\mathcal{I}}$ & Modified problem instance, see Section~\ref{sec:online:time:epsilon}, $\tilde{\mathcal{I}} \coloneqq (\tilde{T}, d, \vec{m}, \vec{\beta}, \tilde{F}, \tilde{\Lambda})$. \\
		$l_{t,j}$ & Idle operating cost of server type $j$ at time slot $t$ (Section~\ref{sec:online:time}), $l_{t,j} \coloneqq f_{t,j}(0)$. \\
		$\tilde{l}_{u,j}$ & Idle operating cost of server type $j$ at time slot $u$ in the modified problem instance $\tilde{\mathcal{I}}$ (Section~\ref{sec:online:time:epsilon}), $\tilde{l}_{u,j} \coloneqq \tilde{f}_{u,j}(0)$. \\ 
		$L_{t,j}(X)$ & Load-dependent operating cost of all servers of type $j$ at time slot $t$ in the schedule $X$. \\
		$\lambda_t$ & Job volume that arrives at time slot $t$. \\
		$m_j$ & Number of servers of type $j$. \\
		$M_j$ & $M_j \coloneqq [m_j]_0$. \\
		$M^\gamma_j$ & Possible numbers of active servers of type $j$ in the approximation algorithm with parameter $\gamma$. Formally, $M^\gamma_j \coloneqq \{0, 1, \lfloor \gamma^1 \rfloor, \lceil \gamma^1 \rceil, \lfloor \gamma^2 \rfloor, \lceil \gamma^2 \rceil, \dots, m_j\}$. \\
		$\mathcal{M}$ & Set of all possible server configurations, $\mathcal{M} \coloneqq \bigtimes_{j=1}^d M_j$. \\
		$\mathcal{M}^\gamma$ & Set of all possible server configurations in the approximation algorithm, $\mathcal{M}^\gamma \coloneqq \bigtimes_{j=1}^d M^\gamma_j$. \\
		$\mu(t)$ & Time slot in $U(t)$ where the operating cost in the schedule $X^\mathcal{B}$ is minimal, $\mu(t) \coloneqq \argmin_{u \in U(t)} \tilde{g}_{u}({\vec{x}}^\mathcal{B}_{u})$. \\
		$n_j$ & Number of blocks for server type $j$, the variables $A_{j,i}, s_{j,i}$ and $H_{j,i}$ are defined for $i \in [n_j]$. \\
		$n'_j$ & Number of special time slots for server type $j$, the variables $\tau_{j,k}$ and $B_{j,k}$ are defined for $k \in [n'_j]$. \\ 
		$\tilde{n}_t$ & Each time slot $t$ in the original problem instance $\mathcal{I}$ is divided into $\tilde{n}_t$ time slots in the modified problem instance $\tilde{\mathcal{I}}$. \\
		$N_j(x_j)$ & The next greater value of $x_j$ in $M^\gamma_j$. Formally, $N_j(x_j) \coloneqq \min \{x \in M^{\gamma}_j \mid x > x_j\}$. \\
		$s_{j,i}$ & Time slot when a server of type $j$ is powered up in $X^\mathcal{A}$ or $X^\mathcal{B}$. It holds $s_{j,1} \leq \dots \leq s_{j, n_j}$. \\
		$\bar{t}_j$ & Number of time slots that a server of type $j$ stays active in algorithm~$\mathcal{A}$ (\emph{including} the time slot when the server was powered up); $\bar{t}_j \coloneqq \left\lceil \beta_j / l_j \right\rceil$. \\
		$\bar{t}_{t,j}$ & Number of time slots that a server of type $j$ stays active in algorithm~$\mathcal{B}$ (\emph{excluding} the time slot when the server was powered up), $\bar{t}_{t,j} \coloneqq \max \{\bar{t} \in [T-t] \mid \sum_{u = t+1}^{t+\bar{t} } l_{u,j} \leq \beta_j\}$. \\
		$\tau_{j,k}$ & Special time slots. Each block $A_{j,i}$ ($j \in [d]$, $i \in [n_j]$) contains exactly one special time slot $\tau_{j,k}$ with $k \in [n'_j]$. An example is shown in Figure~\ref{fig:online:tau}. \\
		$T$ & Total number of time slots. \\
		$\tilde{T}$ & Total number of time slots in the modified problem instance $\tilde{\mathcal{I}}$. \\
		$U(t)$ & Set of time slots in the modified problem instance $\tilde{\mathcal{I}}$ that correspond to the time slot $t$ of the original problem instance $\mathcal{I}$. \\
		$U^{-1}(u)$ & Time slot in the original problem instance $\mathcal{I}$ that corresponds to the time slot $u$ of the modified problem instance $\tilde{\mathcal{I}}$. \\
		$X$ & An arbitrary schedule. Formally, $X = (\vec{x}_1, \dots, \vec{x}_T)$ and $\vec{x}_t = (x_{t,1}, \dots, x_{t,d})$. \\
		$X^\ast$ & An optimal schedule. \\
		$X^\mathcal{A}, X^\mathcal{B}, X^\mathcal{C}$ & The schedule calculated by our online algorithm~$\mathcal{A}$,~$\mathcal{B}$ and~$\mathcal{C}$, see Sections~\ref{sec:online}, \ref{sec:online:time:ci} and \ref{sec:online:time:epsilon}, respectively. \\
		$\hat{X}^t$ & An optimal schedule for the problem instance $\mathcal{I}^t$ that ends at time $t$. \\
		$x_{t,j}$ & Number of active servers of type $j$ at time $t$ in the schedule $X$. \\
		$\vec{x}_t$ & Server configuration at time slot $t$ in schedule $X = (\vec{x}_1, \dots, \vec{x}_T)$, $\vec{x}_t = (x_{t,1}, \dots, x_{t,d})$. \\
		$x^\mathcal{A}_{t,j},x^\mathcal{B}_{t,j},x^\mathcal{C}_{t,j}$ & Number of active servers of type $j$ at time $t$ in the schedule $X^\mathcal{A}$, $X^\mathcal{B}$ and $X^\mathcal{C}$, respectively. \\
		$\hat{x}^u_{t,j}$ & Number of active servers of type $j$ at time $t$ in the schedule $\hat{X}^u$. \\
		$w_{t,j}$ & Number of servers of type $j$ that were powered up by our online algorithm at time slot $t$. \\
		$W_t$ & Set of time slots $u$ with $u + \bar{t}_{u,j} + 1 = t$. Servers that were powered up at time slot $u \in W_t$ in algorithm~$\mathcal{B}$ will be powered down at time slot $t$. See Figure~\ref{fig:online:time} for an example. \\
		$z^{\text{max}}_j$ & Maximum job volume that can be processed by one server of type $j$ during a single time slot. \\
		$z_{t,j}$ & Ratio of the job volume $\lambda_t$ that is processed by server type $j$. \\ 
		$\mathcal{Z}$ & Set of all possible job assignments, $\mathcal{Z} \coloneqq \big\{(z_1, \dots, z_d) \in [0,1]^d \mid \sum_{j=1}^{d} z_j = 1 \big\}$. \\
	\end{longtable}
\endgroup

	\bibliographystyle{plainurl}
	\bibliography{../../meta/literature}

\end{document}